\documentclass[preprint]{imsart2}

\usepackage{amsmath,amsfonts,amssymb,amsthm,floatrow,caption,subcaption}
\usepackage[colorlinks,citecolor=blue,urlcolor=blue]{hyperref}
\usepackage{tikz}
\usetikzlibrary{positioning}
\DeclareGraphicsRule{.tif}{png}{.png}{`convert #1 `dirname #1`/`basename #1 .tif`.png}

\floatsetup[figure]{style=plain}

\input xy
\xyoption{all}


\startlocaldefs
\newtheorem{thm}{Theorem}[section]
\newtheorem{lem}[thm]{Lemma}
\newtheorem{prop}[thm]{Proposition}

\theoremstyle{definition}
\newtheorem{defn}[thm]{Definition}

\newtheorem{exmp}[thm]{Example}

\theoremstyle{remark}
\newtheorem{rem}[thm]{Remark}



\newcommand{\QED}{\ifhmode\unskip\nobreak\fi\quad {\rm Q.E.D.}} 

\newcommand{\cP}{\mathcal{P}}

\newcommand{\R}{\mathbb{R}}

\newcommand{\cV}{\mathcal{V}}

\newcommand{\RLCT}{{\rm RLCT}}

\DeclareMathOperator{\codim}{codim}

\numberwithin{equation}{section}

\endlocaldefs

\begin{document}

\begin{frontmatter}

\title{Marginal likelihood and model selection for Gaussian latent tree and forest models}

\runtitle{Gaussian latent tree and forest models}

\begin{aug}

\author{\fnms{Mathias} \snm{Drton}\thanksref{a}\ead[label=e1]{md5@uw.edu}}
\author{\fnms{Shaowei} \snm{Lin}\thanksref{b}\ead[label=e2]{lins@i2r.a-star.edu.sg}}
\author{\fnms{Luca} \snm{Weihs}\thanksref{a}\ead[label=e3]{lucaw@uw.edu}}
\and
\author{\fnms{Piotr} \snm{Zwiernik}\corref{}\thanksref{c}\ead[label=e4]{piotr.zwiernik@upf.edu}}
\address[a]{Department of Statistics, University of Washington, Seattle, WA, U.S.A. \printead{e1,e3}}
\address[b]{Institute for Infocomm Research, Singapore. \printead{e2}}
\address[c]{Department of Economics and Business, Pompeu Fabra University, Barcelona, Spain. \printead{e4}}

\runauthor{M. Drton et al.}

\affiliation{University
  of Washington, Seattle and Institute for Infocomm Research and Universit\`{a} di Genova}

\end{aug}

\begin{abstract}
  Gaussian latent tree models, or more generally, Gaussian latent
  forest models have Fisher-information matrices that become singular
  along interesting submodels, namely, models that correspond to
  subforests.  For these singularities, we compute the real
  log-canonical thresholds (also known as stochastic complexities or
  learning coefficients) that quantify the large-sample behavior of
  the marginal likelihood in Bayesian inference.  This provides the
  information needed for a recently introduced generalization of the
  Bayesian information criterion.  Our mathematical developments treat
  the general setting of Laplace integrals whose phase functions are
  sums of squared differences between monomials and constants.  We
  clarify how in this case real log-canonical thresholds can be
  computed using polyhedral geometry, and we show how to apply the
  general theory to the Laplace integrals associated with Gaussian
  latent tree and forest models.  In simulations and a data example,
  we demonstrate how the mathematical knowledge can be applied in
  model selection.

\end{abstract}

\begin{keyword}
\kwd{Algebraic statistics}
\kwd{Gaussian graphical model}
\kwd{marginal likelihood}
\kwd{multivariate normal distribution}
\kwd{singular learning theory}
\kwd{latent tree models}
\end{keyword}



\end{frontmatter}

\section{Introduction}

Graphical models based on trees are particularly tractable, which
makes them useful tools for exploring and exploiting multivariate
stochastic dependencies, as first demonstrated by \cite{chow:1968}.
More recent work develops statistical methodology for extensions that
allow for inclusion of latent variables and in which the graph may be
a forest, that is, a union of trees over disjoint vertex sets
\cite{choi:2011,tan:2011,mossel:2013}.  These extensions lead to a new
difficulty in that the Fisher-information matrix of a latent tree
model is typically singular along submodels given by subforests.  As
explained in \cite{watanabe_book}, such singularity invalidates
the mathematical arguments that lead to the Bayesian
information criterion (BIC) of \cite{schwarz1978edm}, which is widely
used to guide model selection algorithms that infer trees or forests
\cite{edwards:2010}.  Indeed, the BIC will generally no longer share
the asymptotic behavior of Bayesian methods; see also
\cite[Sect.~5.1]{oberwolfach2009}.  Similarly, Akaike's information
criterion may no longer be an asymptotically unbiased estimator of the
expected Kullback-Leibler divergence that it is designed to
approximate
\cite{watanabe_book,watanabe:2010:jmlr,watanabe:2010:neural}.

In this paper, we study the large-sample behavior of the marginal
likelihood in Bayesian inference for Gaussian tree/forest models with
latent variables, with the goal of obtaining the mathematical
information needed to evaluate a generalization of BIC proposed in
\cite{drton:2013:sbic}.  As we review below, this information comes in
the form of so-called real log-canonical thresholds (also known as
stochastic complexities or learning coefficients) that appear in the
leading term of an asymptotic expansion of the marginal likelihood.
We begin by more formally introducing the models that are the object
of study.

Let $Z=(Z_u)_{u\in U}$ be a random vector whose components are indexed
by the vertices of an undirected tree $T=(U,E)$ with edge set
$E$. 
Via the paradigm of graphical modeling \cite{lauritzen:96}, the tree
$T$
induces a \textit{Gaussian tree model} $\mathbf{N}(T)$
for the joint distribution of $Z$.
The model $\mathbf{N}(T)$
is the collection of all multivariate normal distributions on
$\mathbb{R}^U$
under which $Z_u$
and $Z_v$
are conditionally independent given $Z_C=(Z_w:w\in
C)$ for any choice of two nodes $u,v$ and a set $C\subset
U\setminus\{u,v\}$ such that
$C$
contains a node on the (unique) path between $u$
and $v$.
For two nodes $u,v\in
U$, let $\overline{uv}$ be the set of edges on the path between
$u$
and $v$.
It can be shown that a normal distribution with correlation matrix
$R=(\rho_{uv})$ belongs to $\mathbf{N}(T)$ if and only if
\begin{equation}
  \label{eq:rhoijpath}
  \rho_{uv}=\prod_{e\in \overline{uv}} \rho_{e},
\end{equation}
where $\rho_{e}:=\rho_{xy}$
when $e$
is the edge incident to $x$
and $y$.
Indeed, for three nodes $v,w,u\in
U$ the conditional independence of $Z_v$ and $Z_w$ given
$Z_u$
is equivalent to $\rho_{vw}=\rho_{uv}\rho_{uw}$;
compare also \cite[p.~4359]{mossel:2013}.

In this paper, we are concerned with latent tree models in which only
the tree's leaves correspond to observed random variables.  So let
$V\subset
U$ be the set of leaves of tree
$T=(U,E)$.
Then the \emph{Gaussian latent tree model} $\mathbf{M}(T)$
for the distribution of the subvector $X:=(Z_v:{v\in
  V})$ is the set of all
$V$-marginals
of the distributions in $\mathbf{N}(T)$.
The object of study in our work is the parametrization of the model
$\mathbf{M}(T)$.
Without loss of generality, we may assume that the latent variables
$Z_a$
at the inner nodes $a\in
U\setminus
V$ have mean zero and variance one.  Moreover, we assume that the
observed vector $X$
has mean zero.  Then, based on~(\ref{eq:rhoijpath}), the distributions
in $\mathbf{M}(T)$
can be parametrized by the variances $\omega_{v}$
for each variable $X_v$,
$v\in V$, and the edge correlations $\omega_e$, $e\in E$.

Our interest is in the marginal likelihood of model $\mathbf{M}(T)$
when the variance and correlation parameters are given a prior
distribution with smooth and everywhere positive density.  Following
the theory developed by \cite{watanabe_book}, we will derive
large-sample properties of the marginal likelihood by studying the
geometry of the fibers (or preimages) of the parametrization map.

\begin{exmp}
  \label{ex:intro:3star}
  Suppose $T$ is a star tree with one inner node $a$ that
  is connected to each one of three leaves, labelled 1, 2, and 3.
  A positive definite correlation matrix
  $R=(\rho_{vw})\in \mathbb{R}^{V\times V}$ is the correlation matrix
  of a distribution in model $\mathbf{M}(T)$ if
  \begin{equation}
    \label{eq:intro:3star}
    R =  
     \begin{pmatrix}
      1 &  \rho_{12} & \rho_{13}\\
      \rho_{12} & 1 & \rho_{23}\\
      \rho_{13} & \rho_{23} & 1
    \end{pmatrix}=
     \begin{pmatrix}
      1 &  \omega_{a1}\omega_{a2} & \omega_{a1}\omega_{a3}\\
      \omega_{a1}\omega_{a2} & 1 & \omega_{a2}\omega_{a3}\\
      \omega_{a1}\omega_{a3} & \omega_{a2}\omega_{a3} & 1
    \end{pmatrix}
  \end{equation}
  for a choice of the three correlation parameters
  $\omega_{a1},\omega_{a2},\omega_{a3}\in[-1,1]$ that are associated
  with the three edges of the tree.  
  
  Now suppose that $R=(\rho_{vw})$ is indeed the correlation matrix of
  a distribution in $\mathbf{M}(T)$ and that $\rho_{vw}\neq 0$ for all
  $v\not=w$.  Then, modulo a sign change that corresponds to negating the
  latent variable at the inner node $a$, the parameters can be identified
  uniquely using the identities
  \begin{align*}
    \omega_{a1}^2&=\frac{\rho_{12}\rho_{13}}{\rho_{23}}, &
    \omega_{a2}&=\frac{\rho_{12}}{\omega_{a1}}, &
    \omega_{a3}&=\frac{\rho_{13}}{\omega_{a1}}.
  \end{align*}
  Hence, the fiber of the parametrization is finite, containing two
  points.

  If instead the correlations between the leaves are zero then this
  identifiability breaks down.  If $R$ is the identity matrix with
  $\rho_{12}=\rho_{13}=\rho_{23}=0$, then every vector
  $(\omega_{a1},\omega_{a2},\omega_{a3})\in[-1,1]^3$ that lies in the
  set
  \begin{equation*}
      \{\omega_{a1}=\omega_{a2}=0\}\cup
     \{\omega_{a1}=\omega_{a3}=0\}  
    \cup
      \{\omega_{a2}=\omega_{a3}=0\}
  \end{equation*}
  satisfies~(\ref{eq:intro:3star}).  The fiber of the identity
  matrix  is thus the union of three line
  segments that form a one-dimensional semi-algebraic set with a
  singularity at the origin where the lines intersect.  
\end{exmp}

\begin{rem}
  \label{rem:directed-trees}
  Some readers may be more familiar with rooted trees with directed
  edges and model specifications based on the Markov properties for
  directed graphs or structural equations.  However, these are
  equivalent to the setup considered here, as can be seen by applying
  the so-called trek rule \cite{spirtes:2000}.  Our later results also
  apply to Bayesian inference in graphical models associated with
  directed trees.
\end{rem}

Suppose $\varphi$ is a smooth and positive density that defines a
prior distribution on the parameter space
$\Omega=(0,\infty)^V\times [-1,1]^E$ of the Gaussian latent tree model
$\mathbf{M}(T)$.  Let $\mathbf X_n=(X^{(1)},\dots,X^{(n)})$ be a
sample consisting of $n$ independent and identically distributed
random vectors in $\mathbb{R}^V$, and write
$L(\mathbf{M}(T)|\mathbf{X}_n)$ for the marginal likelihood of
$\mathbf{M}(T)$.  If $\mathbf{X}_n$ is generated from a distribution
$q\in\mathbf{M}(T)$ and $n\to\infty$, then it holds that
\begin{equation}
  \label{eq:th:watanabe:intro}
  \log L(\mathbf M(T)|\mathbf X_n) - \sum_{i=1}^n \log q(X^{(i)}) = -\frac{\lambda_{q}^{T}}{2} \log n + (\mathfrak{m}_{q}^{T}-1)\log \log
  n + O_p(1),
\end{equation}
where $\lambda_{q}^{T}\ge 0$ is a rational number smaller than or equal to
the dimension of the model $\mathbf{M}(T)$.  The number
$\mathfrak{m}_{q}^{T}$ is an integer greater than or equal to 1.  More
detail on how~(\ref{eq:th:watanabe:intro}) follows from results in
\cite{watanabe_book} is given in Section~\ref{sec:background}.  In
this paper, we derive formulas for the pair
$(\lambda_{q}^{T},\mathfrak{m}_{q}^{T})$ from~(\ref{eq:th:watanabe:intro}),
which will be seen to depend on the pattern of zeros in the
correlation matrix of the distribution $q$.

Let $\sigma_{vv}^*$ and $\rho_{vw}^*$ be the variances and the
correlations of the data-generating distribution $q$.  The point of
departure for our work is Proposition~\ref{prop:simple}, which
clarifies that the pair $(\lambda_{q}^{T},\mathfrak{m}_{q}^{T})$ is also
determined by the behavior of the deterministic Laplace integral
\begin{equation}
 \label{eq:intro-laplace-int} 
 \int_\Omega e^{-nH_q(\omega)} \varphi(\omega) \;d\omega,
\end{equation}
where the phase function in the exponent is
\[
H_q(\omega)\;=\;\sum_{v\in V} (\omega_{v}-\sigma_{vv}^*)^2+\sum_{v,w\in
  V\atop v\not= w} \bigg(\prod_{e\in \overline{vw}}
  \omega_{e}-\rho_{vw}^*\bigg)^2. 
\]
In the formulation of our results, we adopt the notation
\[
\RLCT_\Omega(H_q)\; :=\;(\lambda_{q}^{T},\mathfrak{m}_{q}^{T}),
\]
as $\lambda_q^{T}$ is sometimes referred to as real log-canonical
threshold and $\mathfrak{m}_q^{T}$ is the threshold's multiplicity.  Our
formulas for $\RLCT_\Omega(H_q)$ are stated in Theorem~\ref{th:main2}.
The proof of the theorem relies on facts presented in
Section~\ref{sec:monred}, which concern models with monomial
parametrizations in general.  As our formulas show, the marginal
likelihood admits non-standard large-sample asymptotics, with
$\lambda_q^{T}$ differing from the model dimension if $q$ exhibits zero
correlations (recall Example~\ref{ex:intro:3star}).  We describe 
the zero patterns of $q$ in terms of a subforest $F^*$ with edge set $E^*$.

Our result for trees generalizes directly to models based on forests.
If $F=(U,E)$ is a forest with the set $V\subset U$ comprising the
leaves of the subtrees, then we may define a Gaussian latent forest
model $\mathbf{M}(F)$ in the same way as for trees.  Again we assign a
variance parameter $\omega_v$ to each node $v\in V$ and a correlation
parameter $\omega_e$ to each edge $e\in E$.  Forming products of
correlations along paths, exactly as in~(\ref{eq:rhoijpath}), we
obtain again a parametrization of the correlation matrix of a
multivariate normal distribution on $\mathbb{R}^V$.  In contrast to
the case of a tree, there may be pairs of nodes with necessarily zero correlation, namely, when two
leaves $v$ and $w$ are in distinct connected components of $F$.
Theorem~\ref{th:main3} extends Theorem~\ref{th:main2} to the case of
forests.  The non-standard cases arise when the data-generating
distribution lies in the submodel defined by a proper subforest $F^*$ of
the given forest $F$.

The remainder of the paper begins with a review of the connection
between the asymptotics of the marginal likelihood and that of the
Laplace integral in~(\ref{eq:intro-laplace-int}); see
Section~\ref{sec:background} which introduces the notion of a real
log-canonical threshold (RLCT).  Gaussian latent tree/forest models
have a monomial parametrization and we clarify in
Section~\ref{sec:monred} how the monomial structure allows for
calculation of RLCTs via techniques from polyhedral geometry.  In
Section~\ref{sec:gaussian-latent-tree}, these techniques are applied
to derive the above mentioned Theorems~\ref{th:main2}
and~\ref{th:main3}. In Section~\ref{sec:simulations}, we demonstrate
how our results can be used in model selection with Bayesian
information criteria (BIC).  In a simulation study and an example of
temperature data, we compare a criterion based on RLCTs to the
standard BIC, which is based on model dimension alone.

\section{Background}
\label{sec:background}

Consider an arbitrary parametric statistical model $\mathbf{M}=\{
P_\theta :\theta\in \Theta\}$, with parameter space $\Theta\subseteq
\R^d$.  Let each distribution $P_\theta$ have density $p(x|\theta)$
and, for Bayesian inference, consider a prior distribution with
density $\varphi(\theta)$ on $\Theta$.  Writing $\mathbf
X_n=(X^{(1)},\dots,X^{(n)})$ for a sample of size $n$ from $P_\theta$, the
log-likelihood function of $\mathbf{M}$ is
\[
\ell(\theta| \mathbf{X}_n) = \sum_{i=1}^n \log p(X^{(i)}|\theta).
\]
The key quantity for Bayesian model determination is the integrated or
\emph{marginal likelihood}
\begin{equation}
  \label{eq:marg-lik}
  L(\mathbf{M}| \mathbf{X}_n) = \int_\Theta e^{\ell(\theta|
  \mathbf{X}_n)} \varphi(\theta)\;d\theta.
\end{equation}
As in the derivation of the Bayesian information criterion in
\cite{schwarz1978edm}, our interest is in the large-sample behavior of
the marginal likelihood.  

Let the sample $\mathbf X_n$ be drawn from a true distribution with
density $q$ that can be realized by the model, that is,
$q(x)=p(x|\theta^*)$ for some $\theta^*\in \Theta$.  Then, as we will
make more precise below, the asymptotic properties of the marginal
likelihood $L(\mathbf{M}|\mathbf X_n)$ are tied to those of the
Laplace integral
\begin{equation}\label{eq:ZnKq}
Z_n(K_q;\varphi)=\int_\Theta e^{-n K_q(\theta)}\varphi(\theta)\;d\theta,
\end{equation}
where 
\begin{equation}
  \label{eq:kl-div}
K_q(\theta) =\int \log\frac{q(x)}{p(x|\theta)}q(x)\;dx
\end{equation}
is the Kullback-Leibler divergence between the data-generating
distribution $q$ and distributions in the model
$\mathbf{M}$.  Note that $K_q(\theta)\geq 0$ for all $\theta$, and $K_q(\theta)=0$
precisely when $\theta$ satisfies $p(x|\theta)=p(x|\theta^*)$.  For large $n$ the integrand in~(\ref{eq:ZnKq}) is equal to $\varphi(\theta)$ if $K_{q}(\theta)=0$ and is negligibly small otherwise. Therefore, the main contribution to the integral $Z_n(K_q;\varphi)$ comes from a neighborhood
of the zero set
$$\mathcal{V}_\Theta(K_q)=\{\theta\in \Theta:\,\,K_q(\theta)=0\},$$
which we also call the \emph{$q$-fiber}.

Suppose now that $\Theta\subseteq\R^d$ is a semianalytic set and that
$K_q:\, \Theta\to [0,\infty)$ is an analytic function with compact $q$-fiber
$\mathcal{V}_\Theta(K_q)$.  Suppose further that the prior density $\varphi$ is
a smooth and positive function.
Then, under additional integrability conditions, the Main Theorem 6.2
in \cite{watanabe_book} shows that the marginal likelihood has the
following asymptotic behavior as the sample size $n$ tends to
infinity:
\begin{equation}
\label{eq:th:watanabe}
\log L(\mathbf M|\mathbf X_n)\;\;=\;\;\ell(\theta^*|
  \mathbf{X}_n)-\frac{\lambda}{2} \log n + (\mathfrak{m}-1)\log \log
n + O_p(1). 
\end{equation}
In~(\ref{eq:th:watanabe}), 
$\lambda$ is a
rational number in $[0,d]$, and $\mathfrak{m}$ is an integer in
$\{1,\dots,d\}$.  The number $\lambda$ is known as \emph{learning
coefficient}, \emph{stochastic complexity} or also \emph{real log-canonical
threshold}, and $\mathfrak{m}$ is the associated \emph{multiplicity}.  As
explained in \cite[Chap.~4]{watanabe_book}, the pair
$(\lambda,\mathfrak{m})$ also satisfies
\begin{equation}
  \label{eq:Zn-asy}
  \log Z_n(K_q;\varphi) = -\frac{\lambda}{2} \log n + 
  (\mathfrak{m}-1)\log \log
  n + O(1). 
\end{equation}
Moreover, the pair $(\lambda,\mathfrak{m})$ can equivalently be
defined using the concept of a zeta function as illustrated below; compare also
\cite{shaowei_rlct}.

\begin{defn}[The real log-canonical threshold]\label{def:rlct}
  Let $f:\Theta\to[0,\infty)$ be a nonnegative analytic function whose
  zero set $\mathcal{V}_\Theta(f)$ is compact and nonempty.  The
  \textit{zeta function}
  \begin{equation}\label{eq:zeta}
    \zeta(z)\;=\;\int_{\Theta} f(\theta)^{-z/2} \varphi(\theta)\;d \theta,
    \quad \text{Re}(z)\le 0,
  \end{equation}
  can be analytically continued to a meromorphic function on the
  complex plane.  The poles of this continuation are real and
  positive.  Let $\lambda$ be the smallest pole, known as the
  \textit{real log-canonical threshold (rlct)} of $f$, and let $\mathfrak{m}$
  be its multiplicity.  Since we are interested in both the rlct and
  its multiplicity, we use the notation
  ${\rm RLCT}_{\Theta}(f;\varphi):=(\lambda,\mathfrak{m})$.
  When $\varphi(\theta)\equiv 1$, we simply write
  ${\rm RLCT}_{\Theta}(f)$.  Finally, if $g$ is another analytic
  function with
  ${\rm RLCT}_{\Theta}(g;\varphi)=(\lambda',\mathfrak{m}')$, then we
  write
  ${\rm RLCT}_{\Theta}(f;\varphi)>{\rm RLCT}_{\Theta}(g;\varphi)$ if
  $\lambda>\lambda'$ or if $\lambda=\lambda'$ and
  $\mathfrak{m}<\mathfrak{m}'$.
\end{defn}

\begin{exmp}
  \label{ex:xy}
  Suppose $K_q(\theta)=\theta_1^2\theta_2^2$ and $\Theta=[0,1]^2$.
  Then the $q$-fiber $\mathcal{V}_\Theta(K_q)$ is the union of two
  segments of the coordinate axes.  Taking $\varphi\equiv 1$, we have
  \[
  Z_n(K_q; \varphi) = \int_0^1\int_0^1 e^{-n \theta_1^2\theta_2^2}
  \,d\theta_1\,d\theta_2 . 
  \]
  This example is simple enough that $\RLCT_\Theta(K_q)$ can be
  computed by elementary means.  Let $\Phi(z)$ be the distribution
  function of the standard normal distribution.  Then 
  \[
  Z_n(K_q; \varphi) = \int_0^1\sqrt{\frac{\pi}{n\theta_2^2}}\left[
    \Phi(\sqrt{n}\theta_2) - 
  \Phi(0)\right]\,d\theta_2  
  =\sqrt{\frac{\pi}{n}}\int_0^{\sqrt{n}} \frac{\Phi(v) -
  \frac{1}{2}}{v} \,d v .
  \]
  Integration by parts yields
  \begin{align*}
    Z_n(K_q; \varphi) 
    &=\sqrt{\frac{\pi}{n}} \cdot \left[
      \log(v)\left(\Phi(v) -
      \frac{1}{2} \right) \right]_0^{\sqrt{n}}
      -\frac{1}{\sqrt{n}}\int_0^{\sqrt{n}} \log(v) e^{-v^2} \,d v \\
    &=\sqrt{\frac{\pi}{n}}
      \log\left(\sqrt{n}\right)\left(\Phi(\sqrt{n}) - 
      \frac{1}{2} \right) + O\left(n^{-1/2}\right).\\
    &=\frac{\sqrt{\pi}}{4}\cdot 
      \frac{\log(n)}{\sqrt{n}}\left(1+o(1)\right).
  \end{align*}
  Taking logarithms, we see that~(\ref{eq:Zn-asy}) holds with
  $\lambda=1$ and $\mathfrak{m}=2$.  It follows that
  $\RLCT_\Theta(K_q)=(1,2)$.  Concerning Definition~\ref{def:rlct}, we
  have that
  \[
  \zeta(z)=\int_0^1\int_0^1 \left(\theta_1^2\theta_2^2\right)^{-z/2}
  \,d\theta_1\,d\theta_2 = \frac{1}{(1-z)^2}
  \]
  for all $z\in\mathbb{C}$ with $\text{Re}(z)\le 0$.  In fact, this
  holds as long as $\text{Re}(z)<1$.  The meromorphic continuation of
  $\zeta(z)$ given by $1/(1-z)^2$ has one pole at $\lambda=1$.  The
  pole has multiplicity $\mathfrak{m}=2$ confirming that
  $\RLCT_\Theta(K_q)=(1,2)$.
\end{exmp}

\smallskip

In this paper we are concerned with Gaussian models for which we may
assume, without loss of generality, that all distributions are
centered.  So let the data-generating distribution $q$ be the
multivariate normal distribution $\mathcal{N}(0,\Sigma^*)$, with
positive definite $k\times k$ covariance matrix
$\Sigma^*=(\sigma^*_{ij})$.  Further, let $p(\cdot|\theta)$ be the
density of the distribution $\mathcal{N}(0,\Sigma(\theta))$ with
positive definite $k\times k$ covariance matrix
$\Sigma(\theta)=(\sigma_{ij}(\theta))$.
Then
\[
K_q(\theta) \;=\; \frac{1}{2}\left(
  \mathrm{tr}\left(\Sigma(\theta)^{-1}\Sigma^*\right)-k-\log\left(\frac{\det
    \Sigma^*}{\det\Sigma(\theta)}\right)\right).
\]
For fixed positive definite $\Sigma^*$, the function
\[
\Phi\mapsto \frac{1}{2}\left(
  \mathrm{tr}\left(\Phi^{-1}\Sigma^*\right)-k-\log\left(\frac{\det
    \Sigma^*}{\det\Phi}\right)\right)
\]
has a full rank Hessian at $\Phi=\Sigma^*$.  Hence, in a neighborhood
of $\Sigma^*$, we can both lower-bound and upper-bound $K_q$ by
positive multiples of the function
\[
\tilde K_q(\theta) \;=\; \sum_{i\le  j} \left(\sigma_{ij}(\theta)-\sigma^*_{ij}\right)^2.
\]
It follows that ${\rm RLCT}_{\Theta}(K_q;\varphi)={\rm
  RLCT}_{\Theta}(\tilde K_q;\varphi)$; compare \cite[Remark
7.2]{watanabe_book}.
For our study of Gaussian latent tree (and forest) models, it is
convenient to change coordinates to correlations and consider the
function
\begin{equation}
  \label{eq:ftheta}
  H_q(\theta)\;=\;\sum_{i=1}^k
  \left(\sigma_{ii}(\theta)-\sigma_{ii}^*\right)^2+
  \sum_{i<j} \left(\rho_{ij}(\theta)-\rho_{ij}^*\right)^2,
\end{equation}
where $\rho_{ij}^*$ and $\rho_{ij}(\theta)$ are the correlations
obtained from $\Sigma^*$ or $\Sigma(\theta)$; so, e.g.,
$\rho^*_{ij}=\sigma^*_{ij}/\sqrt{\sigma^*_{ii}\sigma^*_{jj}}$.  Since
\begin{equation}
  \label{eq:regmodels}
  \RLCT_\Theta( K_q(\theta);\varphi) \;=\; \RLCT_\Theta( H_q(\theta);\varphi),
\end{equation}
our discussion of latent tree models may thus start from
the following fact.

\begin{prop}
  \label{prop:simple} 
  Let $T=(U,E)$ be a tree with set of leaves $V\subset U$.  Let
  $\Omega=(0,\infty)^V\times [-1,1]^E$ be the parameter space for the
  Gaussian latent tree model $\mathbf{M}(T)$, the parameters being the
  variances $\omega_v$, $v\in V$, and the correlation parameters
  $\omega_e$, $e\in E$.  Suppose the (data-generating) distribution
  $q$ is in $\mathbf{M}(T)$ and has variances $\sigma_{vv}^*>0$ and a
  positive definite correlation matrix with entries $\rho^*_{vw}$.
  Then
  ${\rm RLCT}_\Omega(K_q;\varphi)={\rm RLCT}_\Omega(H_q;\varphi)$,
  where
  \begin{equation}\label{eq:monredG}
H_q(\omega)\;=\;\sum_{v\in V} (\omega_{v}-\sigma_{vv}^*)^2+\sum_{v,w\in
  V\atop v\not= w} \bigg(\prod_{e\in \overline{vw}}
  \omega_{e}-\rho_{vw}^*\bigg)^2. 
\end{equation}
\end{prop}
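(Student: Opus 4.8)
The plan is to read off the parametrization of $\mathbf{M}(T)$ in correlation coordinates and then apply the identity $\RLCT_\Omega(K_q;\varphi)=\RLCT_\Omega(H_q;\varphi)$ recorded in~(\ref{eq:regmodels}) for arbitrary centered Gaussian models. By the standing assumption that the latent variables have unit variance, together with~(\ref{eq:rhoijpath}), the covariance matrix $\Sigma(\omega)$ of a distribution in $\mathbf{M}(T)$ has diagonal entries $\sigma_{vv}(\omega)=\omega_v$ and off-diagonal correlations $\rho_{vw}(\omega)=\prod_{e\in\overline{vw}}\omega_e$. Substituting these into the phase function from~(\ref{eq:ftheta}), where the index set is now $V$ (so $k=|V|$), yields
\[
\tilde H_q(\omega)\;=\;\sum_{v\in V}(\omega_v-\sigma^*_{vv})^2+\sum_{v<w}\Big(\prod_{e\in\overline{vw}}\omega_e-\rho^*_{vw}\Big)^2,
\]
so that~(\ref{eq:regmodels}) gives $\RLCT_\Omega(K_q;\varphi)=\RLCT_\Omega(\tilde H_q;\varphi)$.

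Next I would pass from $\tilde H_q$ to the symmetrized function $H_q$ in~(\ref{eq:monredG}), whose double sum runs over ordered pairs and hence counts every unordered pair $\{v,w\}$ twice. Because each such summand is symmetric in $v$ and $w$, one has $\tilde H_q\le H_q\le 2\,\tilde H_q$ on all of $\Omega$. Since the real log-canonical threshold and its multiplicity are unchanged when an analytic function is replaced by one bounded above and below by positive constant multiples of it — the same device used just before the proposition for $K_q$ versus $\tilde K_q$, cf.~\cite[Remark~7.2]{watanabe_book} — it follows that $\RLCT_\Omega(\tilde H_q;\varphi)=\RLCT_\Omega(H_q;\varphi)$, and combining the two displays proves the claim.

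For this to be meaningful one should check that the thresholds are well defined. The $q$-fiber $\mathcal{V}_\Omega(H_q)$ is nonempty because $q\in\mathbf{M}(T)$; it is compact because on it every variance coordinate is pinned to $\omega_v=\sigma^*_{vv}$ while the edge coordinates range over the compact cube $[-1,1]^E$, so it is a closed bounded subset of $\R^V\times\R^E$. The function $H_q$ is a polynomial, hence analytic, and $K_q$ is analytic and finite on a neighborhood of the fiber since $\Sigma(\omega)$ stays positive definite near $\Sigma^*$; as usual one localizes the Laplace integral to such a neighborhood, the complement contributing only an exponentially small error.

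I do not anticipate a substantial obstacle here: the proposition is in essence a change-of-variables bookkeeping step. The one place that calls for care is the input identity~(\ref{eq:regmodels}), i.e.\ the assertion that the squared-Euclidean phase in covariance coordinates and the one in variance/correlation coordinates are locally bi-Lipschitz comparable near $\Sigma^*$ (because the map between the two coordinate systems is an analytic diffeomorphism on positive definite matrices) — but this is exactly what the discussion preceding the proposition supplies, so it may be used directly.
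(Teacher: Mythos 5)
Your proposal is correct and matches the paper's (implicit) approach: the paper presents Proposition~\ref{prop:simple} without a separate proof block, treating it as a direct consequence of the preceding discussion culminating in~(\ref{eq:regmodels}), which is precisely the substitution you perform. Your additional bookkeeping — the bi-Lipschitz comparison $\tilde H_q\le H_q\le 2\tilde H_q$ to reconcile the unordered-pair sum in~(\ref{eq:ftheta}) with the ordered-pair sum in~(\ref{eq:monredG}), and the verification that the $q$-fiber is compact — fills in details the paper leaves implicit, and is sound.
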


\section{Monomial parametrizations}
\label{sec:monred}

According to Proposition \ref{prop:simple}, the asymptotic behavior
of the marginal likelihood of a Gaussian latent tree model is
determined by the real log-canonical threshold of the function $H_q$
in~(\ref{eq:monredG}).  This function is a sum of squared differences
between monomials formed from the parameter vector $\omega$ and
constants determined by the data-generating distribution $q$.  In this
section, we formulate general results on the real log-canonical
thresholds for such monomial parametrizations, which also arise in
other contexts \cite{rusakov2006ams,pwz-2010-bic}.  

Specifically,
we treat functions of the form
\begin{equation}\label{eq:sos}
  H(\omega)\;=\; \sum_{i=1}^k (\omega^{u_i}-c_i^*)^2, \quad \omega\in\Omega,
\end{equation}
with domain $\Omega\subseteq\R^d$.  Here,
$c_1^*,\dots,c_k^*\in\mathbb{R}$ are constants and each monomial
$\omega^{u_i}:=\omega_1^{u_{i1}}\cdots\omega_d^{u_{id}}$ is given by a
vector of nonnegative integers $u_i=(u_{i1},\dots,u_{id})$.  Special
cases of this setup are the \emph{regular} case with
$H(\omega)=\omega_1^2+\cdots+\omega_d^2$, and the \emph{quasi-regular}
case of \cite{yamada2012statistical}, in which the vectors $u_i$ have
pairwise disjoint supports and all $c_i^*=0$.

Let $r$ be the number of summands on the right-hand side
of~(\ref{eq:sos}) that have $c_i^*\not=0$.  Without loss of
generality, assume that $c_1^*,\ldots,c_r^*\not=0$ and
$c^*_{r+1}=\cdots=c^*_k=0$.  Furthermore, suppose that
$\omega_1,\ldots,\omega_s$ are the parameters appearing in the
monomials $\omega^{u_1},\ldots,\omega^{u_r}$, that is,
$\cup_{i=1}^r\{j:u_{ij}>0\} =\{1,\dots,s\}$.  If $H(\omega)=0$ then
$\omega_i\neq 0$ for all $i=1,\ldots,s$.  Moreover, if the zero set
$\mathcal{V}_\Omega(H)=\{\omega\in\Omega : H(\omega)=0\}$ is compact,
then each one of the parameters $\omega_1,\dots,\omega_s$ is bounded
away from zero on $\mathcal{V}_\Omega(H)$.  (Clearly, the zero set of the
function $H_q$ from Proposition \ref{prop:simple} is compact.)

Now define the \emph{nonzero part} $H^1$ of $H$ as
\begin{equation}\label{eq:f1}
H^1(\omega_1,\ldots,\omega_s) \;:=\; \sum_{i=1}^r (\omega^{u_i}-c^*_i)^2
\end{equation}
and the \emph{zero part} $H^0$ of $H$ as 
\begin{equation}\label{eq:f0}
H^0(\omega_{s+1},\ldots,\omega_d)\;:=\;\sum_{i=r+1}^k \prod_{j=s+1}^d
\omega_j^{2u_{ij}}. 
\end{equation}

\begin{defn}\label{def:NP}
  The \emph{Newton polytope} $\Gamma(H^0)$ of the zero part $H^0$ is
  the convex hull of the points $(u_{ij}:s+1\le j\le d)\in \R^{d-s}$
  for $i=r+1,\ldots, k$.  The \emph{Newton polyhedron} of $H^0$ is the
  polyhedron
  $$
  \Gamma_+(H^0)\;:=\;\{x+y\in \R^{d-s}:\, x\in \Gamma(H^0),\, 
  y\in [0,\infty)^{d-s}\}. 
  $$
  Let $\mathbf 1=(1,\dots,1)\in \R^{d-s}$ be the vector of all ones.
  Then the \emph{$\mathbf 1$-distance} of $\Gamma_+(H^0)$ is the
  smallest $t\in \R$ such that $t\mathbf 1\in \Gamma_+(H^0)$.  The
  associated \textit{multiplicity} is the codimension of the
  (inclusion-minimal) face of $\Gamma_+(H^0)$ containing
  $t\mathbf 1$.
\end{defn}

We say that $A\subseteq\R^d$ is a \emph{product of intervals} if
$A = [a_1, b_1] \times [a_2, b_2] \times \cdots \times [a_d, b_d]$
with $a_i< b_i \in \R\cup \{-\infty,\infty\}$.  The following is the
main result of this section.  It is proved in
Appendix~\ref{sec:proof-th:main1}.

\begin{thm}\label{th:main1}
  Suppose that $\Omega$ is a product of intervals, and let $\Omega_1$
  and $\Omega_0$ be the projections of $\Omega$ onto the first $s$ and
  the last $d-s$ coordinates, respectively.  Let $H$ be the sum of
  squares from~\eqref{eq:sos} and assume that 
  the zero set $\{\omega\in\Omega:H(\omega)=0\}$ is non-empty and
  compact.  Let $\varphi:\Omega\to (0,\infty)$ be a smooth positive
  function that is bounded above on $\Omega$.
    Then
  \[
  \RLCT_{\Omega}(H;\varphi) = (\lambda_0+\lambda_1,\mathfrak{m}),
  \]
  where $\lambda_1$ is the codimension of
  $\mathcal{V}_{\Omega_1}(H^1)=\{\omega\in \Omega_1:\,
  H^1(\omega)=0\}$
  in $\R^s$, and $1/\lambda_0$ is the $\mathbf 1$-distance of the
  Newton polyhedron $\Gamma_+(H^0)$ with associated multiplicity
  $\mathfrak{m}$.  Here, $\lambda_0=0$ and $\mathfrak{m}=1$ if $H$ has
  no zero part, i.e., $s=d$.
\end{thm}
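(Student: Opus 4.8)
The plan is to reduce $\RLCT_\Omega(H;\varphi)$ to a product of two independent pieces, one coming from the nonzero part $H^1$ and one from the zero part $H^0$, and then to identify each piece separately. First I would observe that, by the definition of $s$, the parameters $\omega_1,\dots,\omega_s$ are bounded away from zero on the compact zero set $\mathcal V_\Omega(H)$, so after shrinking $\Omega$ to a smaller product of intervals containing $\mathcal V_\Omega(H)$ in its interior — which does not change the RLCT, since the integrand is bounded away from its singular locus off any neighborhood of the zero set — we may assume the last $d-s$ coordinates still range over intervals of the form $[0,b_j)$ (or $(a_j,b_j)$ with $0$ in the closure) while $\omega_1,\dots,\omega_s$ range over compact intervals avoiding $0$. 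On this shrunken domain $H(\omega)=H^1(\omega_1,\dots,\omega_s)+H^0(\omega_{s+1},\dots,\omega_d)$ is literally a sum of a function of the first $s$ variables and a function of the last $d-s$. The key tool is the standard fact (see \cite[Chap.~4]{watanabe_book} or \cite{shaowei_rlct}) that for $f(x)+g(y)$ on a product domain one has $\RLCT(f+g) = \RLCT(f)\cdot\RLCT(g)$, where the product on pairs is $(\lambda,\mathfrak m)\cdot(\lambda',\mathfrak m') = (\lambda+\lambda',\mathfrak m+\mathfrak m'-1)$; this follows by writing the zeta function as a convolution of the two one-variable zeta functions and locating the smallest pole of the product. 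Thus $\RLCT_\Omega(H;\varphi) = \RLCT_{\Omega_1}(H^1)\cdot\RLCT_{\Omega_0}(H^0)$, the prior $\varphi$ dropping out of the exponents since it is smooth, positive and bounded (it only affects constants, not $\lambda$ or $\mathfrak m$).

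Next I would handle the nonzero part. On $\Omega_1$, each monomial $\omega^{u_i}$ with $i\le r$ is a product of variables bounded away from $0$, hence $\omega^{u_i}-c_i^*$ vanishes on a smooth hypersurface (locally), and the map $\omega\mapsto(\omega^{u_i}-c_i^*)_{i=1}^r$ has, at each point of $\mathcal V_{\Omega_1}(H^1)$, Jacobian of rank equal to $\codim \mathcal V_{\Omega_1}(H^1)$. After an analytic change of coordinates straightening the zero set, $H^1$ becomes a sum of squares of $\lambda_1 := \codim_{\R^s}\mathcal V_{\Omega_1}(H^1)$ coordinate functions, i.e. the regular case, whose RLCT is $(\lambda_1,1)$. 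One must check the Jacobian rank claim: the gradients of $\omega^{u_i}-c_i^*$ are, up to nonzero scalars coming from the nonvanishing partial monomials, the rows of the support matrix $(u_{ij})$ restricted to columns $1,\dots,s$, localized at a point of the zero set; a short argument (e.g. using that the $c_i^*$ are the actual values realized by $q$, so the zero set is non-empty, together with a constant-rank/implicit-function argument near any point of it, as in the quasi-regular analysis of \cite{yamada2012statistical}) gives the needed constant rank and the value $\lambda_1$. For the zero part, $H^0(\omega_{s+1},\dots,\omega_d) = \sum_{i=r+1}^k \omega_{s+1}^{2u_{i,s+1}}\cdots\omega_d^{2u_{id}}$ is a sum of monomials in nonnegative variables, and the RLCT of such a function on a product of intervals with $0$ on the boundary is classically given — via resolution of singularities adapted to Newton polyhedra, or directly by Watanabe's theory / the results in \cite{shaowei_rlct} — by $1/t$ where $t$ is the $\mathbf 1$-distance of the Newton polyhedron $\Gamma_+(H^0)$, with multiplicity equal to the codimension of the minimal face through $t\mathbf 1$; the factor $2$ in the exponents $2u_{ij}$ is absorbed because the $\mathbf 1$-distance of the Newton polyhedron of $\sum \prod \omega_j^{2u_{ij}}$ is half that of $\sum\prod\omega_j^{u_{ij}}$, matching the $1/2$-power in the zeta function. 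Combining, $\RLCT_{\Omega_0}(H^0) = (\lambda_0,\mathfrak m)$ with $1/\lambda_0$ the $\mathbf 1$-distance and $\mathfrak m$ the stated multiplicity, and when $s=d$ the zero part is an empty sum so $\lambda_0=0$, $\mathfrak m=1$ by convention.

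Assembling the two pieces via the product formula yields $\RLCT_\Omega(H;\varphi) = (\lambda_1,1)\cdot(\lambda_0,\mathfrak m) = (\lambda_1+\lambda_0,\,1+\mathfrak m-1) = (\lambda_0+\lambda_1,\mathfrak m)$, which is exactly the claim. I expect the main obstacle to be the rigorous justification of the two reductions at the boundary: (i) that passing from $\Omega$ to a sufficiently small product-of-intervals neighborhood of the compact zero set genuinely preserves the RLCT — this needs the localization principle for RLCTs (the RLCT depends only on the germ of the pair $(H,\varphi)$ along $\mathcal V_\Omega(H)$, and of several local contributions only the smallest matters) together with care that the smaller domain is still a product of intervals so that the additive splitting $H = H^1 + H^0$ holds on the nose; and (ii) the constant-rank statement for the nonzero part, where one must rule out degenerations of the Jacobian along the zero set — the hypothesis that $q$ lies in the model (so the $c_i^*$ are consistent and the zero set is a nonempty analytic set on which the relevant variables are units) is what makes this work, and this is the point where one leans on the quasi-regular framework of \cite{yamada2012statistical}. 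The polyhedral computation of $\RLCT_{\Omega_0}(H^0)$ itself is standard and I would cite it rather than reprove the toric resolution. Full details are deferred to Appendix~\ref{sec:proof-th:main1}.
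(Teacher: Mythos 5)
Your overall decomposition strategy (split off a nonzero part and a zero part, compute each RLCT, combine by the product rule) is the same as the paper's, and your Step 3 and the final assembly match. But two of your key reductions as stated do not hold, and the paper devotes most of its proof to repairing exactly those points.

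First, the claim that after shrinking $\Omega$ to a small product of intervals containing $\mathcal V_\Omega(H)$ the identity $H=H^1+H^0$ holds ``on the nose'' is false. Shrinking the domain cannot change the algebraic form of $H$. By definition $H=H^1+\sum_{i>r}\omega^{2u_i}$, and for $i>r$ the full exponent vector $u_i$ may have nonzero entries in the first $s$ coordinates; those factors are dropped in the definition of $H^0$, so in general $\sum_{i>r}\omega^{2u_i}\ne H^0(\omega_{s+1},\dots,\omega_d)$. (Take $d=2$, $s=1$, $H=(\omega_1-1)^2+\omega_1^2\omega_2^2$: then $H^0=\omega_2^2$ and $H\ne H^1+H^0$.) The correct statement, and the content of the paper's Step~1, is that on a neighborhood of $\mathcal V_\Omega(H)$ where $\omega_1,\dots,\omega_s$ are bounded above and bounded away from zero, the stray factors $\prod_{j\le s}\omega_j^{2u_{ij}}$ are pinched between positive constants, so $\sum_{i>r}\omega^{2u_i}$ is \emph{asymptotically equivalent} (not equal) to $H^0$, which preserves the RLCT. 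Your write-up needs this equivalence argument, not an equality.

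Second, the constant-rank/implicit-function argument for $\RLCT_{\Omega_1}(H^1)=(\lambda_1,1)$ does not work when $\mathcal V_{\Omega_1}(H^1)$ meets the boundary of the box $\Omega_1$. The codimension entering $\lambda_1$ is the codimension of $\mathcal V_{\Omega_1}(H^1)$ \emph{as a subset of $\Omega_1$}, which can exceed the rank of the Jacobian of $(\omega^{u_i}-c_i^*)_{i\le r}$. The paper's own example makes this explicit: for $g(\omega)=(\omega_1+\omega_2)^2$ on $[0,1]^2$ the zero set is the origin (codimension $2$) while the Jacobian has rank $1$, yet $\RLCT=(2,1)$. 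So the answer you want is still $(\lambda_1,1)$, but straightening the zero set by the implicit function theorem gives the wrong codimension and cannot be the argument. The paper handles this with Lemma~\ref{lem:tologs} (a logarithmic substitution reducing $H^1$ to a sum of squares of \emph{linear} forms) followed by Proposition~\ref{thm:LinearInterval}, which treats the boundary case head-on by bounding the quadratic form above and below (after sign normalization and splitting into orthants) by $\omega_1^2+\cdots+\omega_s^2$, where $s$ is the number of facets of the box containing the zero set. That inequality-based asymptotic equivalence, not a rank computation, is what delivers $(\lambda_1,1)$ at boundary points. The remaining pieces of your sketch — the product rule for RLCTs of sums on product domains, the Newton-polyhedron computation for $H^0$ with the factor of $2$ cancelling against the $z/2$ in the zeta function, and the convention $\lambda_0=0$, $\mathfrak m=1$ when $s=d$ — agree with the paper.
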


\begin{rem}
  In order to compute the codimension of
  $\mathcal{V}_{\Omega_1}(H^1)$, one may consider one orthant at a
  time and take logarithms (accounting for signs).  This turns the
  equations $H^1(\omega)=0$ into linear equations in
  $\log \omega_1,\dots,\log \omega_s$.  
\end{rem}

\begin{exmp}
  \label{ex:reg-case}
  If $H(\omega)=\omega_1^2+\dots +\omega_d^2$ and $\Omega=\R^d$,
  then~(\ref{eq:ZnKq}) is a Gaussian integral and it is clear (c.f.~(\ref{eq:Zn-asy})) that
  $\RLCT_{\Omega}(H)=(d,1)$.  The Newton polytope for $H^0=H$ is the convex
  hull of the canonical basis vectors of $\R^d$.  The Newton
  polyhedron of $H$ has $\mathbf 1$-distance $1/d$ with
  multiplicity 1.  The same is true whenever
  \begin{equation}\label{eq:regcond}
    H(\omega_{1},\ldots,\omega_d)\;=\;\omega_{1}^2+\cdots+\omega_d^2+\mbox{``higher even order terms''}.
   \end{equation}
\end{exmp}

\begin{exmp}
  \label{ex:xy-newton}
  Earlier, we have shown that on $\Omega=[0,1]^2$ the function
  $H(\omega)=\omega_1^2\omega_2^2$ has $\RLCT_\Omega(H)=(1,2)$;
  recall Example~\ref{ex:xy}.  The function has no nonzero part.  Its
  Newton polytope consists of a single point, namely, $(1,1)$.  The
  Newton polyhedron is $[1,\infty)^2$.  Clearly, the
  $\mathbf{1}$-distance of the Newton polyhedron is 1.  Since the ray
  spanned by $\mathbf{1}$ meets the Newton polyhedron in the vertex
  $(1,1)$, the multiplicity is 2, as it had to be according to
  our earlier calculation.
\end{exmp}

\begin{exmp}
  Consider the function
  $$
  H(\omega)=(\omega_1\omega_2-1)^2+\omega_1^2\omega_3^2+
  \omega_2^2\omega_3^2+
  \omega_3^2\omega_4^2
  $$
  on $\Omega=[-2,2]^4$.  The nonzero part is
  $H^1(\omega_1,\omega_2)=(\omega_1\omega_2-1)^2$ and the zero part is
  $H^0(\omega_3,\omega_4)=2\omega_3^2+\omega_3^2\omega_4^2$.  With
  $\Omega_1=[-2,2]^2$, the codimension of
  $\mathcal{V}_{\Omega_1}(H^1)$ is $\lambda_1=1$.  The Newton polytope
  of $H^0$ is the convex hull of $(1,0)$ and $(1,1)$.  The Newton
  polyhedron of $H^0$ is $[1,\infty)\times[0,\infty)$. Hence,
  $\lambda_0=1$ and $\mathfrak{m}=1$.  Note that while the point
  $(1,1)$ is a vertex of the Newton polytope, it lies on a
  one-dimensional face of the Newton polyhedron.  In conclusion,
  $\RLCT_{\Omega}(H)=(2,1)$.
\end{exmp}

\section{Gaussian latent tree and forest models}
\label{sec:gaussian-latent-tree}

Let $T=(U,E)$ be a tree with set of leaves $V$.  By Proposition
\ref{prop:simple}, our study of the marginal likelihood of the
Gaussian latent tree model $\mathbf{M}(T)$ turns into the study of the
function
\begin{equation}
  \label{eq:tree:Hq}
H_q(\omega)=\sum_{v\in V} (\omega_{v}-\sigma_{vv}^*)^2+\sum_{v,w\in
  V\atop v\not= w} \bigg(\prod_{e\in \overline{vw}}
\omega_{e}-\rho_{vw}^*\bigg)^2. 
\end{equation}
Since $\sigma_{vv}^*>0$ for all $v\in V$, the split of $H_q$ into its
zero and nonzero part depends solely on the zero pattern among the
correlations $\rho_{vw}^*$ of the data-generating distribution $q$.  Furthermore,
from the form of the parametrization in (\ref{eq:rhoijpath}), it is
clear that zero correlations can arise only if one sets $\omega_e=0$
for one or more edges $e$ in the edge set $E$.  For a fixed set
$E_0\subseteq E$, the set of parameter vectors $\omega\in \Omega$ with
$\omega_e=0$ for all $e\in E_0$ parametrizes the forest model
$\mathbf{M}(F_0)$, where $F_0$ is the forest obtained from $T$ by
removing the edges in $E_0$.  In this submodel, $\rho_{vw}\equiv0$ if and
only if $v$ and $w$ lie in two different connected components of
$F_0$.
 
It is possible that two different subforests induce the same pattern
of zeros among the correlations of the data-generating distribution
$q$.  However, there is always a unique minimal forest $F^*(q)=(U^*,E^*)$
inducing this zero pattern, and we term $F^{*}(q)$ the {\em $q$-forest}.
Put differently, the $q$-forest $F^*(q)$ is obtained from $T$ by first
removing \emph{all} edges $e\in\overline{uv}$ for all pairs of nodes $u,v\in U$ that can have zero
correlation under $q$  and then removing all inner nodes of
$T$ that have become isolated.  Isolated leaf nodes are retained so
that $V\subseteq U^*$.  In the remainder of this section, we take
$E_{0}=E\setminus E^*$ to be the set of edges whose removal defines
$F^{*}(q)$.  We write $v\sim w$ if $v$ and $w$ are two leaves in $V$ that
are joined by a path in the $q$-forest $F^*(q)$.

\begin{exmp}\label{ex:quartet1}
  Let $T$ be the quartet tree in Figure \ref{fig:quartet}(a).  Let $q$
  have $\rho_{12}^*\neq 0$ but $\rho_{vw}^*=0$ for all other
  $\{v,w\}\subseteq V= \{1,2,3,4\}$.  The $q$-forest $F^*(q)$ is obtained
  by removing the edges in $E_0=\{\{a,b\},\{b,3\},\{b,4\}\}$.  Inner
  node $b$ becomes isolated and is removed as well.  The forest $F^*(q)$
  thus has the five nodes in the set $U^*=\{1,2,3,4,a\}$, and the two
  edges in the set $E^*=\{\{1,a\},\{2,a\}\}$; see Figure
  \ref{fig:quartet}(b).
\end{exmp}

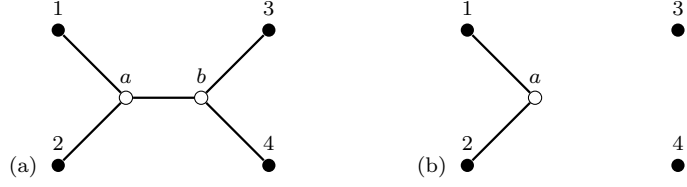
\begin{figure}[t!]
\centering
\tikzstyle{vertex}=[circle,fill=black,minimum size=5pt,inner sep=0pt]
\tikzstyle{hidden}=[circle,draw,minimum size=5pt,inner sep=0pt]
  (a) 
  \begin{tikzpicture}
  \node[vertex] (1) at (-.9,.9)  [label=above:$1$] {};
    \node[vertex] (2) at (-.9,-.9)  [label=above:$2$] {};
    \node[vertex] (3) at (1.9,.9) [label=above:$3$]{};
    \node[vertex] (4) at (1.9,-.9) [label=above:$4$]{};
    \node[hidden] (a) at (0,0) [label=above:$a$] {};
    \node[hidden] (b) at (1,0) [label=above:$b$] {};
          \draw[line width=.3mm] (a) to (b);
    \draw[line width=.3mm] (a) to (1);
    \draw[line width=.3mm] (a) to (2);
    \draw[line width=.3mm] (b) to (3);
    \draw[line width=.3mm] (b) to (4);
  \end{tikzpicture}\qquad\qquad\qquad
  (b)
  \begin{tikzpicture}
  \node[vertex] (1) at (-.9,.9)  [label=above:$1$] {};
    \node[vertex] (2) at (-.9,-.9)  [label=above:$2$] {};
    \node[vertex] (3) at (1.9,.9) [label=above:$3$]{};
    \node[vertex] (4) at (1.9,-.9) [label=above:$4$]{};
    \node[hidden] (a) at (0,0) [label=above:$a$] {};
    \draw[line width=.3mm] (a) to (1);
    \draw[line width=.3mm] (a) to (2);
  \end{tikzpicture}

  \caption{(a) A quartet tree $T$; (b) the $q$-forest from Example
    \ref{ex:quartet1}.}\label{fig:quartet}
\end{figure}

Moving on to the decomposition of the function
from~(\ref{eq:tree:Hq}), recall that we divide the parameter vector
$\omega$ into coordinates $(\omega_1,\ldots, \omega_s)$ that never
vanish on the $q$-fiber $\mathcal{V}_\Omega(H_q)$ and
the remaining part $(\omega_{s+1},\ldots,\omega_d)$. In our case,
$(\omega_1,\ldots,\omega_s)$ consists of all $\omega_{v}$ for $v\in V$
and $\omega_{e}$ for $e\in E^*$ and $(\omega_{s+1},\ldots,\omega_d)$
consists of $\omega_{e}$ for $e\in E_0=E\setminus E^*$. Moreover, 
\begin{equation}\label{eq:f1tree}
H_{q}^1(\omega_1,\ldots,\omega_s)\;=\; 
\sum_{v\in V} (\omega_{v}-\sigma_{vv}^*)^2 + 
\sum_{v,w\in
  V\atop v\neq w, \; v\sim w}\bigg(\prod_{e\in
  \overline{vw}}\omega_{e}-\rho_{vw}^*\bigg)^2
\end{equation}
and
\begin{equation}\label{eq:f0tree}
H_{q}^0(\omega_{s+1},\ldots,\omega_d)\;=\;\sum_{v\not\sim
  w}\prod_{e\in \overline{vw}\cap E_0}\omega_{e}^2. 
\end{equation}

The Gaussian latent tree model $\mathbf{M}(T)$ given by a tree $T$
with set of leaves $V$ and edge set $E$ has dimension
$$\dim \mathbf{M}(T)=|V|+|E|-l_2,$$
where $l_2$ denotes the number of degree two nodes in $T$.  Similarly,
the model given by a forest $F$ with set of leaves $V$ and edge set
$E$ has dimension
$$
\dim \mathbf{M}(F)\;=\; \sum_{i=1}^r \dim \mathbf{M}(T_i)\;=\; |V|+|E|-l_2,
$$
where $T_1,\dots,T_r$ are the trees defined by the connected
components of $F$ and $l_2$ is again the number of degrees two nodes.

\begin{exmp}\label{ex:dim5}
  The $q$-forest $F^*$ from Example \ref{ex:quartet1} has
  $\dim \mathbf{M}(F^*)=4+2-1=5$.  The dimensions for the trees in the
  forest $F^*$ are $\dim \mathbf{M}(T^*_1)=3$,
  $\dim \mathbf{M}(T^*_2)=1$, and $\dim \mathbf{M}(T^*_3)=1$; the
  trees $T^*_2$ and $T^*_3$
  each contain only a single node.
\end{exmp}

The following theorem provides the real log-canonical thresholds of
Gaussian latent tree models.  The proof of theorem is given in
Appendix~\ref{sec:app:tree-proofs}.

\begin{thm}\label{th:main2}
  Let $T=(U,E)$ be a tree with set of leaves $V\subset U$, and let $q$
  be a distribution in the Gaussian latent tree model $\mathbf{M}(T)$.
  Write $\Omega=(0,\infty)^V\times [-1,1]^E$ for the parameter space
  of $\mathbf{M}(T)$, and let $F^*(q)=(U^*,E^*)$ be the $q$-forest.  If
  $\varphi:\Omega\to(0,\infty)$ is a smooth positive function that is
  bounded above on $\Omega$, then the function $H_q$
  from~(\ref{eq:tree:Hq}) has
  $$
  {\rm RLCT}_\Omega(H_q;\varphi)
  \;=\;\left(\dim \mathbf{M}(F^*(q))+\frac{\sum_{e\in E\setminus E^*}w(e)}{2},\;
    1+l_2'\right),
  $$
  where $w(e)=|e\cap U^*|\in \{0,1,2\}$ is the number of nodes that
  $e$ shares with $F^*(q)$, and $l_2'$ is the number of nodes in $T$ that
  have degree two and are not in $U^{*}$.
\end{thm}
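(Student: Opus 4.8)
The plan is to apply Theorem~\ref{th:main1} to the sum of squares $H_q$ in~(\ref{eq:tree:Hq}), using the explicit decomposition into nonzero part $H_q^1$ from~(\ref{eq:f1tree}) and zero part $H_q^0$ from~(\ref{eq:f0tree}) that was set up just above the statement. Theorem~\ref{th:main1} tells us that $\RLCT_\Omega(H_q;\varphi)=(\lambda_0+\lambda_1,\mathfrak{m})$, so the proof splits into two independent computations: the codimension $\lambda_1$ of $\mathcal{V}_{\Omega_1}(H_q^1)$ in $\R^s$, and the $\mathbf{1}$-distance of the Newton polyhedron $\Gamma_+(H_q^0)$ together with its multiplicity $\mathfrak{m}$. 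The target identity to be matched is $\lambda_0+\lambda_1=\dim\mathbf{M}(F^*(q))+\tfrac12\sum_{e\in E\setminus E^*}w(e)$ and $\mathfrak{m}=1+l_2'$.

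For $\lambda_1$: the variables in $H_q^1$ are the $\omega_v$ for $v\in V$ together with the $\omega_e$ for $e\in E^*$. The summands $(\omega_v-\sigma_{vv}^*)^2$ each contribute one to the codimension (they pin down $\omega_v$), accounting for $|V|$. For the correlation terms, restricted to a fixed orthant one takes logarithms as in the Remark after Theorem~\ref{th:main1}, turning $\prod_{e\in\overline{vw}}\omega_e=\rho_{vw}^*$ into linear equations $\sum_{e\in\overline{vw}}\log|\omega_e|=\log|\rho_{vw}^*|$ in the $|E^*|$ unknowns $\log|\omega_e|$. The rank of this linear system over a connected tree component with $n_i$ leaves and $|E_i|$ edges is $|E_i|-(\text{number of degree-two inner nodes in that component})$ — this is the standard identifiability count for Gaussian tree models, the same count underlying the dimension formula $\dim\mathbf{M}(F^*)=|V|+|E^*|-l_2^{F^*}$ recalled in the text. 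Summing over components, $\lambda_1=|V|+|E^*|-l_2^{F^*}=\dim\mathbf{M}(F^*(q))$, where $l_2^{F^*}$ is the number of degree-two nodes of $F^*(q)$. (One must check the count is independent of orthant and that the minimal $F^*$ is exactly what makes all the listed correlations generically nonzero, so that no further terms collapse.)

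For $\lambda_0$ and $\mathfrak{m}$: the zero part is $H_q^0=\sum_{v\not\sim w}\prod_{e\in\overline{vw}\cap E_0}\omega_e^2$, a sum of squarefree monomials in the variables $\{\omega_e:e\in E_0\}$. One key combinatorial observation drives everything: for each edge $e\in E_0$, the monomial $\omega_e^2$ itself (or more precisely a monomial supported only on $\{e\}$, coming from a pair $v\not\sim w$ whose path in $T$ meets $E_0$ only in $e$) appears among the summands precisely when $w(e)=2$, i.e. both endpoints of $e$ survive in $U^*$; when $w(e)\le 1$ the variable $\omega_e$ only ever appears multiplied by other $E_0$-variables. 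I expect the $\mathbf{1}$-distance computation to reduce to the following: along the all-ones direction, $t\mathbf 1$ first enters $\Gamma_+(H_q^0)$ at the point determined by the "thinnest" monomials, and a careful bookkeeping of which pairs $v\not\sim w$ give which monomials yields $1/\lambda_0 = \tfrac12\sum_{e\in E_0}w(e)$, equivalently $\lambda_0=\big(\sum_{e\in E\setminus E^*}w(e)\big)/2 \cdot$ (reciprocal) — matching the stated $\tfrac12\sum w(e)$ contribution to the threshold after the reciprocation in Theorem~\ref{th:main1}. The multiplicity $\mathfrak{m}$ equals the codimension of the minimal face of $\Gamma_+(H_q^0)$ through $t\mathbf 1$; the claim is that this is $1+l_2'$, where the "$+1$" is the generic codimension-one contribution and $l_2'$ counts the degree-two non-$U^*$ nodes of $T$, each of which forces an extra coordinate constraint because its two incident edges in $E_0$ always occur together in every relevant monomial.

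The main obstacle will be the polyhedral-geometry step: correctly identifying the inclusion-minimal face of $\Gamma_+(H_q^0)$ that the ray $\R_{\ge0}\mathbf 1$ meets, and showing its codimension is exactly $1+l_2'$. This requires translating the tree combinatorics — which pairs $v\not\sim w$ produce which supports in $E_0$, and how degree-two nodes tie pairs of $E_0$-edges together — into the facet structure of the Newton polyhedron. The cleanest route is probably induction on the structure of $F^*(q)$ relative to $T$, or a direct description of $\Gamma_+(H_q^0)$ as (an intersection related to) a product of simpler polyhedra indexed by the connected pieces of the removed part $T\setminus F^*(q)$, after which the $\mathbf 1$-distance and face-codimension are additive over the pieces. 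The variance terms and the orthant-by-orthant reduction are routine given the Remark following Theorem~\ref{th:main1}; the dimension-count for $\lambda_1$ is classical; so essentially all the real work is in the Newton-polyhedron analysis of $H_q^0$.
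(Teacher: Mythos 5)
Your overall strategy matches the paper's: invoke Theorem~\ref{th:main1}, read off $\lambda_1=\codim\mathcal{V}_{\Omega_1}(H_q^1)$, and locate $\lambda_0,\mathfrak{m}$ in the Newton polyhedron $\Gamma_+(H_q^0)$. Your $\lambda_1$ computation is fine --- it is essentially what the paper says in one line, namely that $\mathcal{V}_{\Omega_1}(H_q^1)$ is the $q$-fiber of the parametrization of $\mathbf{M}(F^*)$ and hence has codimension $\dim\mathbf{M}(F^*)=|V|+|E^*|-l_2^{F^*}$. And your closing remark that the ``cleanest route is probably \ldots a direct description of $\Gamma_+(H_q^0)$ as \ldots a product of simpler polyhedra indexed by the connected pieces of the removed part'' is, in fact, exactly the right instinct.

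But the step you explicitly defer --- ``I expect the $\mathbf{1}$-distance computation to reduce to \ldots'' --- is not a detail to be filled in; it is the mathematical content of the theorem, and at present it is a gap. Here is what is actually needed. One partitions $E_0=E\setminus E^*$ into edge sets $E_{01},\dots,E_{0t}$ of subtrees $S_i$ of $T$ whose leaf sets are $L_i=U_i\cap U^*$. One then checks that replacing $H_q^0$ by the sum $\tilde H_q^0=f_1+\cdots+f_t$, where $f_i$ is the sum of squared path-monomials between leaves of $S_i$, does not change the Newton polyhedron: every generating monomial of $\tilde H_q^0$ is among those of $H_q^0$, and every generator of $H_q^0$ is a product of generators of $\tilde H_q^0$, so $\Gamma_+$ is unchanged. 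Since the $f_i$ involve disjoint variables, the RLCT becomes additive: $\RLCT(\tilde H_q^0)=\sum_i\RLCT(f_i)-(0,t-1)$. The real work is then showing, for a single tree $S$ with leaf set $L$ and no degree-two inner nodes, that $\RLCT(f)=(|L|/2,\,1)$. The lower bound $1/\lambda_0\ge 2/|L|$ comes from the facet $\sum_{e\in E_L}x_e\ge 2$ (every leaf-to-leaf path crosses exactly two terminal edges). The matching upper bound requires \emph{constructing} a collection of $|L|$ leaf-to-leaf paths that covers every edge of $S$ exactly twice, so that the average of their incidence vectors is $\frac{2}{|L|}\mathbf{1}$ and this point lies in the Newton polytope; this is done by induction on $|L|$, peeling off a cherry. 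The multiplicity $1$ then follows because $\frac{2}{|L|}\mathbf{1}$ lies in the relative interior of the Newton polytope, which has codimension exactly one --- this uses a nontrivial rank fact (that the span of path incidence vectors of a trivalent tree is all of $\R^E$). Degree-two inner nodes each drop the rank by one, giving multiplicity $1+l_2$ for that tree. Finally, $\sum_i|L_i|=\sum_{e\in E_0}w(e)$ because each endpoint of an $E_0$-edge lying in $U^*$ is counted once as a leaf of its $S_i$.

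So: your decomposition idea, the $\lambda_1$ count, the observation about which monomials are squarefree in a single $E_0$-variable, and the intuition that degree-two nodes tie pairs of edges together and inflate the multiplicity are all correct and aligned with the paper. What is missing is the actual polyhedral argument (the facet inequality, the edge-doubling path system, and the rank statement needed for the multiplicity), without which $\lambda_0$ and $\mathfrak{m}$ are only conjectured, not proved.
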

Theorem \ref{th:main2} implies in particular that the pair $(\lambda_{q}^{T},\mathfrak{m}_{q}^{T})$ depends on $q$ only through the forest $F^{*}(q)$ and we write
$$
\lambda_{F^{*}(q),T}\;:=\;\lambda_{q}^{T},\qquad \mathfrak{m}_{F^{*}(q),T}\;:=\;\mathfrak{m}_{q}^{T}.
$$

\begin{exmp}
  In Example \ref{ex:quartet1}, $\dim\mathbf{M}(F^*)=5$ (c.f. Example~\ref{ex:dim5}) and
  $\sum_{e\in E_0}w(e)=3$.  Hence, the real log-canonical threshold $\lambda_{F^{*}(q),T}$ is
  13/2, which translates into a coefficient of 13/4 for the $\log n$
  term in the asymptotic expansion of the log-marginal likelihood.
  Note that the threshold 13/2 is smaller than $\dim\mathbf{M}(T)=9$,
  making the latent tree model behave like a lower-dimensional model.
\end{exmp}

\begin{exmp}
  Suppose $T$ has two leaves, labelled 1 and 2, and one inner node
  $a$, which then necessarily has degree two.  If $q$ is a
  distribution under which the random variables at the two leaves are
  uncorrelated, then we have
  \[
  H_q(\omega)=(\omega_1-\sigma_{11}^*)^2+(\omega_2-\sigma_{22}^*)^2
  +(\omega_{1a}\omega_{2a})^2.
  \]
  Using the calculation from Example~\ref{ex:xy} or
  Example~\ref{ex:xy-newton}, we see that $\RLCT_\Omega(H_q)=(3,2)$.
  When applying Theorem~\ref{th:main2}, the $q$-forest $F^*$ has the
  leaves 1 and 2 isolated and $\dim\mathbf{M}(F^*)=2$.  Since $l_2'=1$
  and each one of the two removed edges satisfies $w(e)=1$, the
  formula from Theorem~\ref{th:main2} yields
  $\RLCT_\Omega(H_q)=(3,2)$, as it should.
\end{exmp}

\begin{rem}\label{rem:deg2contract}
  Note that if $T$ has an (inner) node of degree two, then we can
  contract one of the edges the node is adjacent to obtain a tree
  $\tilde T$ with $\mathbf{M}(\tilde T)=\mathbf{M}(T)$.  Repeating
  such edge contraction it is always possible to find a tree with all
  inner nodes of degree at least three that defines the same model as
  the original tree $T$.  Moreover, in applications such
  as phylogenetics, the trees of interesting do not have nodes of degree
  two, in which case the multiplicity in RLCT is always equal to
  one.
\end{rem}

In the model selection problems that motivate this work, we wish to
choose between different forests.  We thus state an explicit result
for forests in the below Theorem~\ref{th:main3}.  For a forest $F$, we
define $q$-forests in analogy to the definition we made for trees.  In
other words, we apply the previous definitions to each tree appearing
in the connected components of $F$ and then form the union of the
results.  Similarly, the proof of Theorem~\ref{th:main3} is obtained
by simply applying Theorem~\ref{th:main2} to each connected component
of the given forest $F$.

\begin{thm}\label{th:main3}
  Let $F=(U,E)$ be a forest with the set of leaves $V\subset U$, and
  let $q$ be a distribution in the Gaussian latent forest model
  $\mathbf{M}(F)$.  Write $\Omega=(0,\infty)^V\times [-1,1]^E$ for the
  parameter space of $\mathbf{M}(F)$, and let $F^*(q)=(U^*,E^*)$ be the
  $q$-forest.  If $\varphi:\Omega\to(0,\infty)$ is a smooth positive
  function that is bounded above on $\Omega$, then the function $H_q$
  from~(\ref{eq:tree:Hq}) has
  $$
  {\rm RLCT}_\Omega(H_q;\varphi) :=(\lambda_{q}^{F},\mathfrak{m}_{q}^{F})=
  \left( 
    \dim\mathbf{M}(F^*(q))+\frac{\sum_{e\in E\setminus E^*} w(e)}{2},1+l_2'\right),
  $$
  where $w(e)=|e\cap U^*|\in \{0,1,2\}$ is the number of nodes that
  $e$ shares with $F^*(q)$, and $l_2'$ is the number of nodes in $F$ that
  have degree two and are not in $U^{*}$.
\end{thm}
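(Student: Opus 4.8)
The plan is to reduce to the tree case, Theorem~\ref{th:main2}, by exploiting the product structure of a forest model. Write $T_1,\dots,T_r$ for the connected components of $F$; their leaf sets $V_1,\dots,V_r$ partition $V$ and their edge sets $E_1,\dots,E_r$ partition $E$. Since no path in $F$ joins leaves lying in distinct components, the parameter space factors as $\Omega=\Omega_1\times\cdots\times\Omega_r$, where $\Omega_j=(0,\infty)^{V_j}\times[-1,1]^{E_j}$ is the parameter space of $\mathbf{M}(T_j)$; a distribution $q\in\mathbf{M}(F)$ restricts to a marginal $q^{(j)}\in\mathbf{M}(T_j)$ on the leaves of $T_j$, and $\rho_{vw}^*=0$ whenever $v,w$ lie in different components. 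Because such cross-component leaf pairs contribute $(0-0)^2=0$ to the phase function, $H_q$ splits into a sum
\[
H_q(\omega)\;=\;\sum_{j=1}^r H_{q^{(j)}}(\omega^{(j)}),
\]
where $\omega^{(j)}$ collects the coordinates of $\omega$ indexed within $T_j$ and $H_{q^{(j)}}$ is exactly the function~(\ref{eq:tree:Hq}) attached to the tree $T_j$. By the very definition of the $q$-forest of a forest, $F^*(q)$ is the disjoint union of the $q$-forests $T_j^*:=F^*(q^{(j)})$, and the sets $E\setminus E^*$, $U^*$ decompose accordingly.

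Next I would invoke the behaviour of real log-canonical thresholds under forming a sum of functions in disjoint blocks of variables over a product domain: if $H=\sum_{j=1}^r H_j$ on $\Omega_1\times\cdots\times\Omega_r$ with $H_j$ depending only on the $j$-th block and with compact fibres, then $\RLCT_{\Omega_1\times\cdots\times\Omega_r}(H;\mathbf 1)$ equals $\bigl(\sum_{j}\lambda_j,\;1+\sum_{j}(\mathfrak m_j-1)\bigr)$ with $(\lambda_j,\mathfrak m_j)=\RLCT_{\Omega_j}(H_j;\mathbf 1)$. This is the same product rule already used in the proof of Theorem~\ref{th:main1}; it follows by factoring the Laplace integral $Z_n(\sum_j H_j;\mathbf 1)=\prod_j Z_n(H_j;\mathbf 1)$ and reading off the expansion~(\ref{eq:Zn-asy}) — note that stacking $r$ independent blocks converts multiplicities $\mathfrak m_1,\dots,\mathfrak m_r$ into $1+\sum_j(\mathfrak m_j-1)$, not $\sum_j\mathfrak m_j$. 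The one wrinkle is that the prior $\varphi$ need not factor over the components; but since $\mathcal V_\Omega(H_q)$ is compact, on a product neighbourhood of it $\varphi$ lies between two positive constants, so $\RLCT_\Omega(H_q;\varphi)=\RLCT_\Omega(H_q;\mathbf 1)$, the contribution of the complement of this neighbourhood being exponentially negligible because $\varphi$ is bounded above and $H_q\to\infty$ along the non-compact directions of $\Omega$, exactly as in the proof of Theorem~\ref{th:main2}.

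Finally I would assemble the pieces. Applying Theorem~\ref{th:main2} to each component $T_j$ (including components that are single isolated leaves) gives $\lambda_j=\dim\mathbf{M}(T_j^*)+\tfrac12\sum_{e\in E_j\setminus E^*}w(e)$ and $\mathfrak m_j=1+l_2'(T_j)$, where $l_2'(T_j)$ is the number of degree-two nodes of $T_j$ not in $U^*$. Summing over $j$, using the additivity $\dim\mathbf{M}(F^*(q))=\sum_j\dim\mathbf{M}(T_j^*)$ recorded in Section~\ref{sec:gaussian-latent-tree} together with $\sum_j l_2'(T_j)=l_2'$ and $\sum_j\sum_{e\in E_j\setminus E^*}w(e)=\sum_{e\in E\setminus E^*}w(e)$, yields
\[
\RLCT_\Omega(H_q;\varphi)\;=\;\Bigl(\dim\mathbf{M}(F^*(q))+\tfrac12\sum_{e\in E\setminus E^*}w(e),\;1+l_2'\Bigr),
\]
which is the asserted formula.

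I do not expect a genuine obstacle: once Theorem~\ref{th:main2} is available the argument is essentially bookkeeping, and the excerpt itself flags that the proof is obtained by applying Theorem~\ref{th:main2} componentwise. The two points that need care are (i) checking that $H_q$ really does decompose into disjoint variable blocks — in particular that leaf pairs straddling two components contribute nothing to the phase function — and (ii) getting the multiplicity arithmetic right, namely that combining $r$ independent blocks produces multiplicity $1+\sum_j(\mathfrak m_j-1)$ and verifying that this collapses to $1+l_2'$.
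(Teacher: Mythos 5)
Your proof is correct and takes essentially the same route as the paper, which establishes Theorem~\ref{th:main3} in a single sentence by applying Theorem~\ref{th:main2} to each connected component of $F$. You simply supply the bookkeeping the paper leaves implicit: that cross-component leaf pairs contribute $(0-0)^2=0$ so $H_q$ genuinely factors over blocks, that the prior may be replaced by $\mathbf 1$ near the compact fiber, and that stacking the $r$ blocks gives multiplicity $1+\sum_j(\mathfrak m_j-1)=1+l_2'$ rather than $\sum_j \mathfrak m_j$.
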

As in Theorem \ref{th:main2}, the pair $(\lambda_{q}^{F},\mathfrak{m}_{q}^{F})$ depends on $q$ only through the forest $F^{*}(q)$ and we write
$$
\lambda_{F^{*}(q),F}\;:=\;\lambda_{q}^{F},\qquad \mathfrak{m}_{F^{*}(q),F}\;:=\;\mathfrak{m}_{q}^{F}.
$$

\begin{rem}
  Fix a forest $F=(U,E)$ with leaves $V\subset U$, and let
  $F^{*}=(U^{*},E^{*})$ be any subforest of $F$ with the same leaves
  (any $F^{*}(q)$ is of this form). Let $d_{F}$ and $d_{F^*}$ be such that
  $d_F(u)$ is the degree of $u$ in $F$ for all $u\in U$ and similarly
  for $d_{F^*}$. Note that
\begin{align*}
	\sum_{e\in E\setminus E^*} w(e) = \sum_{u\in U^*} (d_{F}(u) - d_{F^*}(u)).
\end{align*}
From this and our prior formula for $\dim \mathbf{M}(F^*)$ we have that
$$
	\lambda_{F^*,F} = |U^*| + |E^*| - l_2 + \frac{1}{2}\sum_{u\in
          U^*} (d_{F}(u) - d_{F^*}(u)). 
$$
where $l_2$ is the number of degree 2 nodes in $F^*$. Computing
$\lambda_{F^*,F}$ can now easily be done in linear time in the size of
$F$, i.e.\ in $O(|U|+|E|) = O(|U|)$ time, under the assumption that we
have stored $F$ and $F^*$ as adjacency lists and there is a map, with
$O(1)$ access time, associating vertices in $F^*$ with those in
$F$. In computational practice we found that the prior two conditions
are trivial to guarantee. In particular, note that if $F$ and $F^*$
are stored as adjacency lists we may simply loop over these lists,
taking $O(|U|+|E| + |U^*| + |E^*|) = O(|U|)$ time, and precompute
$d_{F}$, $d_{F^*}$, $l_2$, $|U^*|$, and $|E^*|$. Computing
$\lambda_{F^*,F}$ is then simply a matter of summing over $u\in U^*$
and using the precomputed values of $d_{F}(u)$ and $d_{F^*}(u)$,
taking $O(U^*)$ time.  Similarly, noting that $l'_2 = \sum_{u\in
  U\setminus U^*} 1_{[d_{F}(u)=2]}$, we have that
$\mathfrak{m}_{F^*,F}$ can also be computed in linear time in the size
of $F$.
\end{rem}

\section{Singular BIC for latent Gaussian tree models}\label{sec:simulations}

In this section, we consider the model selection problem of inferring
the forest $F$ underlying a Gaussian latent forest model
$\mathbf{M}(F)$ based on a sample of independent and identically
distributed observations $\mathbf X_n=(X^{(1)},\dots,X^{(n)})$.  To
this end, we consider Bayesian information criteria that are inspired
by the developed large-sample theory for the marginal
likelihood $L(\mathbf M(F)|\mathbf X_n)$. Note that for all the following
simulations the space of models we consider implicitly include only forests and 
trees without degenerate degree 2 nodes; as described in Remark \ref{rem:deg2contract},
this results in an RLCT whose multiplicity is always 1.

As stated in~(\ref{eq:th:watanabe:intro}) and~(\ref{eq:th:watanabe}),
the RLCTs found in Section~\ref{sec:gaussian-latent-tree} give the
coefficients for logarithmic terms that capture the main differences
between the log-marginal likelihood and the log-likelihood of the true
data-generating distribution $q$.  Let $\hat q_F$ be the maximum
likelihood estimator of $q$ in the Gaussian latent forest model
$\mathbf{M}(F)$.  By the results of \cite{drton2007_LRandSING}, if
$q\in\mathbf{M}(F)$ and $n\to\infty$, then
\[
\sum_{i=1}^n \left[ \log \hat q_F(X^{(i)}) -  \log q(X^{(i)}) \right]
\;=\; O_p(1)
\]
and thus, by~(\ref{eq:th:watanabe}), we also have
\begin{multline}
  \label{eq:th:wata:with:mle}
  \log L(\mathbf M(F)|\mathbf X_n) \;=\; \\ \sum_{i=1}^n \log \hat
  q_F(X^{(i)})  -\frac{\lambda_{F^{*}(q),F}}{2} \log n +
  (\mathfrak{m}_{F^{*}(q),F}-1)\log \log n + O_p(1).
\end{multline}
The pair $(\lambda_{F^{*}(q),F},\mathfrak{m}_{F^{*}(q),F})$ on the right hand side still
depends on the unknown data-generating distribution $q$ through the
forest $F^{*}(q)$.  However, the pair is a
discontinuous function of $q$ and plugging in the MLE $\hat q_F$ has
little appeal.  Instead, we will consider a criterion proposed by
\cite{drton:2013:sbic}, in which one averages over the possible values
of $(\lambda_{F',F},\mathfrak{m}_{F',F})$ for all subforests $F'$ of $F$.  As in \cite{drton:2013:sbic}, we
refer to the resulting model selection score as the `singular Bayesian
information criterion', or sBIC for short.

We briefly describe how sBIC is computed.  Let $\mathcal{F}$ be the
set of forests in the model selection problem, which we assume to
contain the empty forest $F_\emptyset=(V,\emptyset)$.  Note that every
forest $F\in\mathcal{F}$ has set of leaves $V$.  For forest
$F\in\mathcal{F}$ with subforest $F'\in\mathcal{F}$, let
$(\lambda_{F',F},\mathfrak{m}_{F',F})$ be the pair
from~(\ref{eq:th:wata:with:mle}) when the distribution $q$ has $F'$
as $q$-forest, that is $F^{*}(q)=F'$.  Theorem~\ref{th:main3} gives the value of this RLCT
pair.  Define
\begin{equation}
  \label{eq:LFF*}
  L_{F'F}' \;=\;   n^{-\lambda_{F'F}/2} (\log
  n)^{\mathfrak{m}_{F'F}-1}\prod_{i=1}^n \hat q_F(X^{(i)}),
\end{equation}
which is a proxy for the marginal likelihood
$L(\mathbf M(F)|\mathbf X_n)$ obtained by exponentiating the right
hand side of~(\ref{eq:th:wata:with:mle}) and omitting the
$O_p(1)$ remainder.  For each $F\in\mathcal{F}$, the sBIC of model
$\mathbf{M}(F)$ is defined as $\log x_F$, where
$(x_F:F\in\mathcal{F})$ is the unique positive solution to the
equation system
\begin{equation}
  \label{eq:sbic-eqns}
  \sum_{F'\subseteq F} \left( x_F-L_{F'F}' \right) x_{F'}
  \;=\;0,\quad F\in\mathcal{F}.
\end{equation}
The system~(\ref{eq:sbic-eqns}) is triangular and can be solved by
recursively solving univariate quadratic equations.  The starting
point is the case when $F$ is the empty forest $F_\emptyset$, for
which $F'=F_\emptyset$ is the only possible $q$-forest and
(\ref{eq:sbic-eqns}) gives
$x_{F_\emptyset}(x_{F_\emptyset}-L_{F_\emptyset F_\emptyset}')=0$.  The sBIC of the model
$\mathbf{M}(F_\emptyset)$ is thus $\log L_{F_\emptyset F_\emptyset}'$,
which coincides with 
the usual BIC as the relevant RLCT is given by
$\lambda_{F_{\emptyset}F_{\emptyset}}=\dim \mathbf{M}(F_\emptyset)=|V|$ and $\mathfrak{m}_{F_{\emptyset}F_{\emptyset}}=1$.
When the forest $F$ is nonempty, the sBIC and the BIC of
$\mathbf{M}(F)$ differ.

In \cite{drton:2013:sbic}, sBIC is motivated by considering weighted
averages of the approximations $L_{F'F}'$, with the weights being
data-dependent.  Furthermore, it is shown that the sBIC of
$\mathbf{M}(F)$ differs from $\log L(\mathbf M(F)|\mathbf X_n)$ by an
$O_p(1)$ remainder whenever data are generated from a distribution
$q\in\mathbf{M}(F)$, even if $q$ lies in a strict submodel
$\mathbf{M}(F^*)\subset\mathbf{M}(F)$.  The same is true for BIC only
if $q\in\mathbf{M}(F)$ does not belong to any strict submodel (i.e.,
all edge and path correlations are nonzero and $F$ equals the
$q$-forest $F^*$).  In what follows, we explore the differences
between the RLCT-based sBIC and the dimension-based BIC in two
simulation studies and on a temperature data set.

\subsection{Simulation Studies} \label{sec:sim_study}

The first task we consider is selection a subforest of a given 
tree $T$, where each subforest as well the tree $T$ share a set of
leaves $V$, or in other words, each subforest is a $q$-forest for some $q\in \mathbf{M}(T)$.  When ordering edge sets by inclusion, the
set of all subforests of $T$ becomes a poset that we denote by
$\mathcal P_T$.  The poset is a lattice with the empty graph (with
$|V|$ isolated nodes) as minimal element and the tree $T$
as maximal element.  To select a forest, we optimize BIC and sBIC,
respectively, over the set $\mathcal P_T$.  Maximum likelihood
estimates are computed with an EM algorithm, in which we repeatedly
maximize the conditional expectation of the complete-data
log-likelihood function of forest models $\mathbf{N}(F)$ for a random
vector $Z$ comprising both the observed variables at the leaves in $V$
and the latent variables at the inner nodes of $F$; recall the
notation from the introduction.

As a concrete example, we choose $T$ to be the tree in
Figure~\ref{fig:5leaves}(a).  We generate data from a distribution $q$
that lies in $\mathbf M(T)$ but under which the third leaf is
independent from all other leaves.  The corresponding
$q$-forest $F^*$ is depicted on Figure \ref{fig:5leaves}(b).  We
choose $q$ to have covariance matrix
\begin{equation}\label{eq:true5}
\Sigma^*\quad=\left[\begin{array}{rrrrr}
  1 & 0.13 & 0 & 0.22 & 0.36  \\ 
  0.13 & 1 & 0 & 0.22 & 0.13  \\ 
  0 & 0 & 1 & 0 & 0 \\ 
  0.22 & 0.22 & 0 & 1 & 0.22  \\ 
  0.36 & 0.13 & 0 & 0.22 & 1 \\ 
  \end{array}\right]
\end{equation}
which is obtained by taking all edge correlations equal to 0.6.  We
then generate a random sample of size $n$ from
$N(\mathbf 0, \Sigma^*)$ and pick the best model with respect to the
BIC and the best model with respect to sBIC.  For each considered
choice of a sample size $n$, this procedure is  repeated $100$ times.

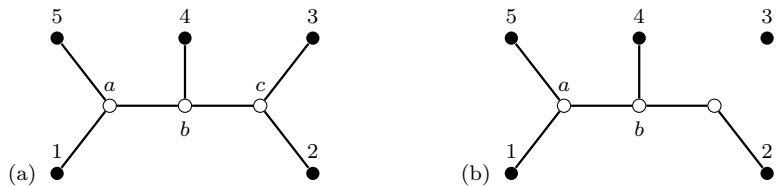
\begin{figure}[t]
\centering
\tikzstyle{vertex}=[circle,fill=black,minimum size=5pt,inner sep=0pt]
\tikzstyle{hidden}=[circle,draw,minimum size=5pt,inner sep=0pt]
  (a)
  \begin{tikzpicture}
  \node[vertex] (5) at (-.9,.9)  [label=above:$5$] {};
    \node[vertex] (1) at (-.9,-.9)  [label=above:$1$] {};
    \node[vertex] (4) at (.8,.9) [label=above:$4$]{};
    \node[vertex] (3) at (2.5,.9) [label=above:$3$]{};
    \node[vertex] (2) at (2.5,-.9) [label=above:$2$]{};
    \node[hidden] (a) at (-.2,0) [label=above:$a$] {};
    \node[hidden] (b) at (.8,0) [label=below:$b$] {};
    \node[hidden] (c) at (1.8,0) [label=above:$c$] {};
          \draw[line width=.3mm] (a) to (b);
    \draw[line width=.3mm] (a) to (5);
    \draw[line width=.3mm] (a) to (1);
    \draw[line width=.3mm] (b) to (4);
    \draw[line width=.3mm] (b) to (c);
    \draw[line width=.3mm] (3) to (c);
    \draw[line width=.3mm] (2) to (c);
  \end{tikzpicture}\qquad\qquad\qquad
  (b)
  \begin{tikzpicture}
  \node[vertex] (5) at (-.9,.9)  [label=above:$5$] {};
    \node[vertex] (1) at (-.9,-.9)  [label=above:$1$] {};
    \node[vertex] (4) at (.8,.9) [label=above:$4$]{};
    \node[vertex] (3) at (2.5,.9) [label=above:$3$]{};
    \node[vertex] (2) at (2.5,-.9) [label=above:$2$]{};
    \node[hidden] (a) at (-.2,0) [label=above:$a$] {};
    \node[hidden] (b) at (.8,0) [label=below:$b$] {};
    \node[hidden] (c) at (1.8,0) {};
          \draw[line width=.3mm] (a) to (b);
    \draw[line width=.3mm] (a) to (5);
    \draw[line width=.3mm] (a) to (1);
    \draw[line width=.3mm] (b) to (4);
    \draw[line width=.3mm] (b) to (c);
    \draw[line width=.3mm] (2) to (c);
  \end{tikzpicture}
  \caption{(a) A tree with five leaves; (b) one of its subforests.}\label{fig:5leaves}
\end{figure}

The poset $\mathcal P_T$ comprises $34$ possible forests/models. In
Figures~\ref{fig:lattice-25}-\ref{fig:lattice-125},
we display the lattice structure of $\mathcal P_T$ overlaid with a
heat map of how frequently the models were chosen at the particular
sample size.  The subforest/submodels are labeled from 1 to 34 with
$1$ corresponding to the complete independence model and $34$
corresponding to $\mathbf M(T)$, where $T$ is the tree in Figure
\ref{fig:5leaves}(a).  If we order the edges as $\{a,1\}$, $\{a,5\}$,
$\{a,b\}$, $\{b,4\}$, $\{b,c\}$, $\{c,2\}$, $\{c,3\}$ and use
$\{0,1\}$-vectors 
to indicate the presence of edges then the submodels are:
\begin{equation*}
\begin{array}{r}
\mathbf{ 1}:\quad 0 0 0 0 0 0 0\\
\mathbf{ 2}:\quad 1 1 0 0 0 0 0\\
\mathbf{ 3}:\quad 1 0 1 1 0 0 0\\
\mathbf{ 4}:\quad 0 1 1 1 0 0 0\\
\mathbf{ 5}:\quad 1 1 1 1 0 0 0\\
\mathbf{ 6}:\quad 1 0 1 0 1 1 0\\
\mathbf{ 7}:\quad 0 1 1 0 1 1 0\\
\mathbf{ 8}:\quad 1 1 1 0 1 1 0\\
\mathbf{ 9}:\quad 0 0 0 1 1 1 0\\
\end{array}\qquad
\begin{array}{r}
\mathbf{ 10}:\quad 1 1 0 1 1 1 0\\
\mathbf{ 11}:\quad 1 0 1 1 1 1 0\\
\mathbf{ 12}:\quad 0 1 1 1 1 1 0\\
\mathbf{ 13}:\quad 1 1 1 1 1 1 0\\
\mathbf{ 14}:\quad 1 0 1 0 1 0 1\\
\mathbf{ 15}:\quad 0 1 1 0 1 0 1\\
\mathbf{ 16}:\quad 1 1 1 0 1 0 1\\
\mathbf{ 17}:\quad 0 0 0 1 1 0 1\\
\mathbf{ 18}:\quad 1 1 0 1 1 0 1\\
\end{array}\qquad
\begin{array}{r}
\mathbf{ 19}:\quad 1 0 1 1 1 0 1\\
\mathbf{ 20}:\quad 0 1 1 1 1 0 1\\
\mathbf{ 21}:\quad 1 1 1 1 1 0 1\\
\mathbf{ 22}:\quad 0 0 0 0 0 1 1\\
\mathbf{ 23}:\quad 1 1 0 0 0 1 1\\
\mathbf{ 24}:\quad 1 0 1 1 0 1 1\\
\mathbf{ 25}:\quad 0 1 1 1 0 1 1\\
\mathbf{ 26}:\quad 1 1 1 1 0 1 1\\
\mathbf{ 27}:\quad 1 0 1 0 1 1 1\\
\end{array}\qquad
\begin{array}{r}
\mathbf{ 28}:\quad 0 1 1 0 1 1 1\\
\mathbf{ 29}:\quad 1 1 1 0 1 1 1\\
\mathbf{ 30}:\quad 0 0 0 1 1 1 1\\
\mathbf{ 31}:\quad 1 1 0 1 1 1 1\\
\mathbf{ 32}:\quad 1 0 1 1 1 1 1\\
\mathbf{ 33}:\quad 0 1 1 1 1 1 1\\
\mathbf{ 34}:\quad 1 1 1 1 1 1 1
\end{array}
\end{equation*}
In particular, the smallest true model is model $13$.

Figures~\ref{fig:lattice-25}-\ref{fig:lattice-125} show that the
standard dimension-based BIC tends to select too small models that do
not contain the data-generating distribution $q$.  In particular, BIC never
selects the full tree model 34.  The RLCT-based sBIC, on the other hand,
invokes a milder penalty, occasionally selects the tree model 34, and
more frequently selects the smallest true model 13.  Indeed, already
for $n=75$, sBIC selects the true model more often than any other
model. On the other hand, the regular BIC procedure selects too simple
a model also when the sample size is increased to $n=125$.

\definecolor{1color25_1bics}{RGB}{255,227,227}
\definecolor{2color25_1bics}{RGB}{255,188,188}
\definecolor{3color25_1bics}{RGB}{255,250,250}
\definecolor{4color25_1bics}{RGB}{255,245,245}
\definecolor{5color25_1bics}{RGB}{255,219,219}
\definecolor{6color25_1bics}{RGB}{255,255,255}
\definecolor{7color25_1bics}{RGB}{255,252,252}
\definecolor{8color25_1bics}{RGB}{255,252,252}
\definecolor{9color25_1bics}{RGB}{255,242,242}
\definecolor{10color25_1bics}{RGB}{255,242,242}
\definecolor{11color25_1bics}{RGB}{255,250,250}
\definecolor{12color25_1bics}{RGB}{255,245,245}
\definecolor{13color25_1bics}{RGB}{255,232,232}
\definecolor{14color25_1bics}{RGB}{255,252,252}
\definecolor{15color25_1bics}{RGB}{255,255,255}
\definecolor{16color25_1bics}{RGB}{255,247,247}
\definecolor{17color25_1bics}{RGB}{255,255,255}
\definecolor{18color25_1bics}{RGB}{255,252,252}
\definecolor{19color25_1bics}{RGB}{255,255,255}
\definecolor{20color25_1bics}{RGB}{255,255,255}
\definecolor{21color25_1bics}{RGB}{255,252,252}
\definecolor{22color25_1bics}{RGB}{255,247,247}
\definecolor{23color25_1bics}{RGB}{255,250,250}
\definecolor{24color25_1bics}{RGB}{255,250,250}
\definecolor{25color25_1bics}{RGB}{255,255,255}
\definecolor{26color25_1bics}{RGB}{255,252,252}
\definecolor{27color25_1bics}{RGB}{255,252,252}
\definecolor{28color25_1bics}{RGB}{255,255,255}
\definecolor{29color25_1bics}{RGB}{255,255,255}
\definecolor{30color25_1bics}{RGB}{255,255,255}
\definecolor{31color25_1bics}{RGB}{255,255,255}
\definecolor{32color25_1bics}{RGB}{255,252,252}
\definecolor{33color25_1bics}{RGB}{255,255,255}
\definecolor{34color25_1bics}{RGB}{255,255,255}
\tikzset{
	n1_25_1bics/.style={circle, inner sep=1mm, minimum size=0.55cm, draw, thick, black, fill=1color25_1bics, text=black},
	n2_25_1bics/.style={circle, inner sep=1mm, minimum size=0.55cm, draw, thick, black, fill=2color25_1bics, text=black},
	n3_25_1bics/.style={circle, inner sep=1mm, minimum size=0.55cm, draw, thick, black, fill=3color25_1bics, text=black},
	n4_25_1bics/.style={circle, inner sep=1mm, minimum size=0.55cm, draw, thick, black, fill=4color25_1bics, text=black},
	n5_25_1bics/.style={circle, inner sep=1mm, minimum size=0.55cm, draw, thick, black, fill=5color25_1bics, text=black},
	n6_25_1bics/.style={circle, inner sep=1mm, minimum size=0.55cm, draw, thick, white, fill=6color25_1bics, text=black},
	n7_25_1bics/.style={circle, inner sep=1mm, minimum size=0.55cm, draw, thick, black, fill=7color25_1bics, text=black},
	n8_25_1bics/.style={circle, inner sep=1mm, minimum size=0.55cm, draw, thick, black, fill=8color25_1bics, text=black},
	n9_25_1bics/.style={circle, inner sep=1mm, minimum size=0.55cm, draw, thick, black, fill=9color25_1bics, text=black},
	n10_25_1bics/.style={circle, inner sep=1mm, minimum size=0.55cm, draw, thick, black, fill=10color25_1bics, text=black},
	n11_25_1bics/.style={circle, inner sep=1mm, minimum size=0.55cm, draw, thick, black, fill=11color25_1bics, text=black},
	n12_25_1bics/.style={circle, inner sep=1mm, minimum size=0.55cm, draw, thick, black, fill=12color25_1bics, text=black},
	n13_25_1bics/.style={rectangle, inner sep=1mm, minimum size=0.55cm, draw, thick, black, fill=13color25_1bics, text=black},
	n14_25_1bics/.style={circle, inner sep=1mm, minimum size=0.55cm, draw, thick, black, fill=14color25_1bics, text=black},
	n15_25_1bics/.style={circle, inner sep=1mm, minimum size=0.55cm, draw, thick, white, fill=15color25_1bics, text=black},
	n16_25_1bics/.style={circle, inner sep=1mm, minimum size=0.55cm, draw, thick, black, fill=16color25_1bics, text=black},
	n17_25_1bics/.style={circle, inner sep=1mm, minimum size=0.55cm, draw, thick, white, fill=17color25_1bics, text=black},
	n18_25_1bics/.style={circle, inner sep=1mm, minimum size=0.55cm, draw, thick, black, fill=18color25_1bics, text=black},
	n19_25_1bics/.style={circle, inner sep=1mm, minimum size=0.55cm, draw, thick, white, fill=19color25_1bics, text=black},
	n20_25_1bics/.style={circle, inner sep=1mm, minimum size=0.55cm, draw, thick, white, fill=20color25_1bics, text=black},
	n21_25_1bics/.style={circle, inner sep=1mm, minimum size=0.55cm, draw, thick, black, fill=21color25_1bics, text=black},
	n22_25_1bics/.style={circle, inner sep=1mm, minimum size=0.55cm, draw, thick, black, fill=22color25_1bics, text=black},
	n23_25_1bics/.style={circle, inner sep=1mm, minimum size=0.55cm, draw, thick, black, fill=23color25_1bics, text=black},
	n24_25_1bics/.style={circle, inner sep=1mm, minimum size=0.55cm, draw, thick, black, fill=24color25_1bics, text=black},
	n25_25_1bics/.style={circle, inner sep=1mm, minimum size=0.55cm, draw, thick, white, fill=25color25_1bics, text=black},
	n26_25_1bics/.style={circle, inner sep=1mm, minimum size=0.55cm, draw, thick, black, fill=26color25_1bics, text=black},
	n27_25_1bics/.style={circle, inner sep=1mm, minimum size=0.55cm, draw, thick, black, fill=27color25_1bics, text=black},
	n28_25_1bics/.style={circle, inner sep=1mm, minimum size=0.55cm, draw, thick, white, fill=28color25_1bics, text=black},
	n29_25_1bics/.style={circle, inner sep=1mm, minimum size=0.55cm, draw, thick, white, fill=29color25_1bics, text=black},
	n30_25_1bics/.style={circle, inner sep=1mm, minimum size=0.55cm, draw, thick, white, fill=30color25_1bics, text=black},
	n31_25_1bics/.style={circle, inner sep=1mm, minimum size=0.55cm, draw, thick, white, fill=31color25_1bics, text=black},
	n32_25_1bics/.style={circle, inner sep=1mm, minimum size=0.55cm, draw, thick, black, fill=32color25_1bics, text=black},
	n33_25_1bics/.style={circle, inner sep=1mm, minimum size=0.55cm, draw, thick, white, fill=33color25_1bics, text=black},
	n34_25_1bics/.style={circle, inner sep=1mm, minimum size=0.55cm, draw, thick, white, fill=34color25_1bics, text=black},
}

\definecolor{1color25_1bic}{RGB}{255,227,227}
\definecolor{2color25_1bic}{RGB}{255,165,165}
\definecolor{3color25_1bic}{RGB}{255,242,242}
\definecolor{4color25_1bic}{RGB}{255,239,239}
\definecolor{5color25_1bic}{RGB}{255,247,247}
\definecolor{6color25_1bic}{RGB}{255,252,252}
\definecolor{7color25_1bic}{RGB}{255,247,247}
\definecolor{8color25_1bic}{RGB}{255,255,255}
\definecolor{9color25_1bic}{RGB}{255,239,239}
\definecolor{10color25_1bic}{RGB}{255,227,227}
\definecolor{11color25_1bic}{RGB}{255,252,252}
\definecolor{12color25_1bic}{RGB}{255,252,252}
\definecolor{13color25_1bic}{RGB}{255,252,252}
\definecolor{14color25_1bic}{RGB}{255,250,250}
\definecolor{15color25_1bic}{RGB}{255,252,252}
\definecolor{16color25_1bic}{RGB}{255,250,250}
\definecolor{17color25_1bic}{RGB}{255,255,255}
\definecolor{18color25_1bic}{RGB}{255,250,250}
\definecolor{19color25_1bic}{RGB}{255,252,252}
\definecolor{20color25_1bic}{RGB}{255,255,255}
\definecolor{21color25_1bic}{RGB}{255,255,255}
\definecolor{22color25_1bic}{RGB}{255,247,247}
\definecolor{23color25_1bic}{RGB}{255,250,250}
\definecolor{24color25_1bic}{RGB}{255,250,250}
\definecolor{25color25_1bic}{RGB}{255,255,255}
\definecolor{26color25_1bic}{RGB}{255,252,252}
\definecolor{27color25_1bic}{RGB}{255,255,255}
\definecolor{28color25_1bic}{RGB}{255,255,255}
\definecolor{29color25_1bic}{RGB}{255,255,255}
\definecolor{30color25_1bic}{RGB}{255,255,255}
\definecolor{31color25_1bic}{RGB}{255,255,255}
\definecolor{32color25_1bic}{RGB}{255,255,255}
\definecolor{33color25_1bic}{RGB}{255,255,255}
\definecolor{34color25_1bic}{RGB}{255,255,255}
\tikzset{
	n1_25_1bic/.style={circle, inner sep=1mm, minimum size=0.55cm, draw, thick, black, fill=1color25_1bic, text=black},
	n2_25_1bic/.style={circle, inner sep=1mm, minimum size=0.55cm, draw, thick, black, fill=2color25_1bic, text=black},
	n3_25_1bic/.style={circle, inner sep=1mm, minimum size=0.55cm, draw, thick, black, fill=3color25_1bic, text=black},
	n4_25_1bic/.style={circle, inner sep=1mm, minimum size=0.55cm, draw, thick, black, fill=4color25_1bic, text=black},
	n5_25_1bic/.style={circle, inner sep=1mm, minimum size=0.55cm, draw, thick, black, fill=5color25_1bic, text=black},
	n6_25_1bic/.style={circle, inner sep=1mm, minimum size=0.55cm, draw, thick, black, fill=6color25_1bic, text=black},
	n7_25_1bic/.style={circle, inner sep=1mm, minimum size=0.55cm, draw, thick, black, fill=7color25_1bic, text=black},
	n8_25_1bic/.style={circle, inner sep=1mm, minimum size=0.55cm, draw, thick, white, fill=8color25_1bic, text=black},
	n9_25_1bic/.style={circle, inner sep=1mm, minimum size=0.55cm, draw, thick, black, fill=9color25_1bic, text=black},
	n10_25_1bic/.style={circle, inner sep=1mm, minimum size=0.55cm, draw, thick, black, fill=10color25_1bic, text=black},
	n11_25_1bic/.style={circle, inner sep=1mm, minimum size=0.55cm, draw, thick, black, fill=11color25_1bic, text=black},
	n12_25_1bic/.style={circle, inner sep=1mm, minimum size=0.55cm, draw, thick, black, fill=12color25_1bic, text=black},
	n13_25_1bic/.style={rectangle, inner sep=1mm, minimum size=0.55cm, draw, thick, black, fill=13color25_1bic, text=black},
	n14_25_1bic/.style={circle, inner sep=1mm, minimum size=0.55cm, draw, thick, black, fill=14color25_1bic, text=black},
	n15_25_1bic/.style={circle, inner sep=1mm, minimum size=0.55cm, draw, thick, black, fill=15color25_1bic, text=black},
	n16_25_1bic/.style={circle, inner sep=1mm, minimum size=0.55cm, draw, thick, black, fill=16color25_1bic, text=black},
	n17_25_1bic/.style={circle, inner sep=1mm, minimum size=0.55cm, draw, thick, white, fill=17color25_1bic, text=black},
	n18_25_1bic/.style={circle, inner sep=1mm, minimum size=0.55cm, draw, thick, black, fill=18color25_1bic, text=black},
	n19_25_1bic/.style={circle, inner sep=1mm, minimum size=0.55cm, draw, thick, black, fill=19color25_1bic, text=black},
	n20_25_1bic/.style={circle, inner sep=1mm, minimum size=0.55cm, draw, thick, white, fill=20color25_1bic, text=black},
	n21_25_1bic/.style={circle, inner sep=1mm, minimum size=0.55cm, draw, thick, white, fill=21color25_1bic, text=black},
	n22_25_1bic/.style={circle, inner sep=1mm, minimum size=0.55cm, draw, thick, black, fill=22color25_1bic, text=black},
	n23_25_1bic/.style={circle, inner sep=1mm, minimum size=0.55cm, draw, thick, black, fill=23color25_1bic, text=black},
	n24_25_1bic/.style={circle, inner sep=1mm, minimum size=0.55cm, draw, thick, black, fill=24color25_1bic, text=black},
	n25_25_1bic/.style={circle, inner sep=1mm, minimum size=0.55cm, draw, thick, white, fill=25color25_1bic, text=black},
	n26_25_1bic/.style={circle, inner sep=1mm, minimum size=0.55cm, draw, thick, black, fill=26color25_1bic, text=black},
	n27_25_1bic/.style={circle, inner sep=1mm, minimum size=0.55cm, draw, thick, white, fill=27color25_1bic, text=black},
	n28_25_1bic/.style={circle, inner sep=1mm, minimum size=0.55cm, draw, thick, white, fill=28color25_1bic, text=black},
	n29_25_1bic/.style={circle, inner sep=1mm, minimum size=0.55cm, draw, thick, white, fill=29color25_1bic, text=black},
	n30_25_1bic/.style={circle, inner sep=1mm, minimum size=0.55cm, draw, thick, white, fill=30color25_1bic, text=black},
	n31_25_1bic/.style={circle, inner sep=1mm, minimum size=0.55cm, draw, thick, white, fill=31color25_1bic, text=black},
	n32_25_1bic/.style={circle, inner sep=1mm, minimum size=0.55cm, draw, thick, white, fill=32color25_1bic, text=black},
	n33_25_1bic/.style={circle, inner sep=1mm, minimum size=0.55cm, draw, thick, white, fill=33color25_1bic, text=black},
	n34_25_1bic/.style={circle, inner sep=1mm, minimum size=0.55cm, draw, thick, white, fill=34color25_1bic, text=black},
}

\definecolor{1color75_1bics}{RGB}{255,255,255}
\definecolor{2color75_1bics}{RGB}{255,191,191}
\definecolor{3color75_1bics}{RGB}{255,255,255}
\definecolor{4color75_1bics}{RGB}{255,255,255}
\definecolor{5color75_1bics}{RGB}{255,224,224}
\definecolor{6color75_1bics}{RGB}{255,252,252}
\definecolor{7color75_1bics}{RGB}{255,255,255}
\definecolor{8color75_1bics}{RGB}{255,250,250}
\definecolor{9color75_1bics}{RGB}{255,252,252}
\definecolor{10color75_1bics}{RGB}{255,232,232}
\definecolor{11color75_1bics}{RGB}{255,252,252}
\definecolor{12color75_1bics}{RGB}{255,252,252}
\definecolor{13color75_1bics}{RGB}{255,158,158}
\definecolor{14color75_1bics}{RGB}{255,255,255}
\definecolor{15color75_1bics}{RGB}{255,255,255}
\definecolor{16color75_1bics}{RGB}{255,252,252}
\definecolor{17color75_1bics}{RGB}{255,255,255}
\definecolor{18color75_1bics}{RGB}{255,250,250}
\definecolor{19color75_1bics}{RGB}{255,255,255}
\definecolor{20color75_1bics}{RGB}{255,255,255}
\definecolor{21color75_1bics}{RGB}{255,250,250}
\definecolor{22color75_1bics}{RGB}{255,255,255}
\definecolor{23color75_1bics}{RGB}{255,255,255}
\definecolor{24color75_1bics}{RGB}{255,255,255}
\definecolor{25color75_1bics}{RGB}{255,255,255}
\definecolor{26color75_1bics}{RGB}{255,255,255}
\definecolor{27color75_1bics}{RGB}{255,255,255}
\definecolor{28color75_1bics}{RGB}{255,255,255}
\definecolor{29color75_1bics}{RGB}{255,255,255}
\definecolor{30color75_1bics}{RGB}{255,255,255}
\definecolor{31color75_1bics}{RGB}{255,252,252}
\definecolor{32color75_1bics}{RGB}{255,255,255}
\definecolor{33color75_1bics}{RGB}{255,255,255}
\definecolor{34color75_1bics}{RGB}{255,245,245}
\tikzset{
	n1_75_1bics/.style={circle, inner sep=1mm, minimum size=0.55cm, draw, thick, white, fill=1color75_1bics, text=black},
	n2_75_1bics/.style={circle, inner sep=1mm, minimum size=0.55cm, draw, thick, black, fill=2color75_1bics, text=black},
	n3_75_1bics/.style={circle, inner sep=1mm, minimum size=0.55cm, draw, thick, white, fill=3color75_1bics, text=black},
	n4_75_1bics/.style={circle, inner sep=1mm, minimum size=0.55cm, draw, thick, white, fill=4color75_1bics, text=black},
	n5_75_1bics/.style={circle, inner sep=1mm, minimum size=0.55cm, draw, thick, black, fill=5color75_1bics, text=black},
	n6_75_1bics/.style={circle, inner sep=1mm, minimum size=0.55cm, draw, thick, black, fill=6color75_1bics, text=black},
	n7_75_1bics/.style={circle, inner sep=1mm, minimum size=0.55cm, draw, thick, white, fill=7color75_1bics, text=black},
	n8_75_1bics/.style={circle, inner sep=1mm, minimum size=0.55cm, draw, thick, black, fill=8color75_1bics, text=black},
	n9_75_1bics/.style={circle, inner sep=1mm, minimum size=0.55cm, draw, thick, black, fill=9color75_1bics, text=black},
	n10_75_1bics/.style={circle, inner sep=1mm, minimum size=0.55cm, draw, thick, black, fill=10color75_1bics, text=black},
	n11_75_1bics/.style={circle, inner sep=1mm, minimum size=0.55cm, draw, thick, black, fill=11color75_1bics, text=black},
	n12_75_1bics/.style={circle, inner sep=1mm, minimum size=0.55cm, draw, thick, black, fill=12color75_1bics, text=black},
	n13_75_1bics/.style={rectangle, inner sep=1mm, minimum size=0.55cm, draw, thick, black, fill=13color75_1bics, text=black},
	n14_75_1bics/.style={circle, inner sep=1mm, minimum size=0.55cm, draw, thick, white, fill=14color75_1bics, text=black},
	n15_75_1bics/.style={circle, inner sep=1mm, minimum size=0.55cm, draw, thick, white, fill=15color75_1bics, text=black},
	n16_75_1bics/.style={circle, inner sep=1mm, minimum size=0.55cm, draw, thick, black, fill=16color75_1bics, text=black},
	n17_75_1bics/.style={circle, inner sep=1mm, minimum size=0.55cm, draw, thick, white, fill=17color75_1bics, text=black},
	n18_75_1bics/.style={circle, inner sep=1mm, minimum size=0.55cm, draw, thick, black, fill=18color75_1bics, text=black},
	n19_75_1bics/.style={circle, inner sep=1mm, minimum size=0.55cm, draw, thick, white, fill=19color75_1bics, text=black},
	n20_75_1bics/.style={circle, inner sep=1mm, minimum size=0.55cm, draw, thick, white, fill=20color75_1bics, text=black},
	n21_75_1bics/.style={circle, inner sep=1mm, minimum size=0.55cm, draw, thick, black, fill=21color75_1bics, text=black},
	n22_75_1bics/.style={circle, inner sep=1mm, minimum size=0.55cm, draw, thick, white, fill=22color75_1bics, text=black},
	n23_75_1bics/.style={circle, inner sep=1mm, minimum size=0.55cm, draw, thick, white, fill=23color75_1bics, text=black},
	n24_75_1bics/.style={circle, inner sep=1mm, minimum size=0.55cm, draw, thick, white, fill=24color75_1bics, text=black},
	n25_75_1bics/.style={circle, inner sep=1mm, minimum size=0.55cm, draw, thick, white, fill=25color75_1bics, text=black},
	n26_75_1bics/.style={circle, inner sep=1mm, minimum size=0.55cm, draw, thick, white, fill=26color75_1bics, text=black},
	n27_75_1bics/.style={circle, inner sep=1mm, minimum size=0.55cm, draw, thick, white, fill=27color75_1bics, text=black},
	n28_75_1bics/.style={circle, inner sep=1mm, minimum size=0.55cm, draw, thick, white, fill=28color75_1bics, text=black},
	n29_75_1bics/.style={circle, inner sep=1mm, minimum size=0.55cm, draw, thick, white, fill=29color75_1bics, text=black},
	n30_75_1bics/.style={circle, inner sep=1mm, minimum size=0.55cm, draw, thick, white, fill=30color75_1bics, text=black},
	n31_75_1bics/.style={circle, inner sep=1mm, minimum size=0.55cm, draw, thick, black, fill=31color75_1bics, text=black},
	n32_75_1bics/.style={circle, inner sep=1mm, minimum size=0.55cm, draw, thick, white, fill=32color75_1bics, text=black},
	n33_75_1bics/.style={circle, inner sep=1mm, minimum size=0.55cm, draw, thick, white, fill=33color75_1bics, text=black},
	n34_75_1bics/.style={circle, inner sep=1mm, minimum size=0.55cm, draw, thick, black, fill=34color75_1bics, text=black},
}

\definecolor{1color75_1bic}{RGB}{255,252,252}
\definecolor{2color75_1bic}{RGB}{255,160,160}
\definecolor{3color75_1bic}{RGB}{255,250,250}
\definecolor{4color75_1bic}{RGB}{255,250,250}
\definecolor{5color75_1bic}{RGB}{255,227,227}
\definecolor{6color75_1bic}{RGB}{255,250,250}
\definecolor{7color75_1bic}{RGB}{255,255,255}
\definecolor{8color75_1bic}{RGB}{255,252,252}
\definecolor{9color75_1bic}{RGB}{255,245,245}
\definecolor{10color75_1bic}{RGB}{255,186,186}
\definecolor{11color75_1bic}{RGB}{255,255,255}
\definecolor{12color75_1bic}{RGB}{255,252,252}
\definecolor{13color75_1bic}{RGB}{255,242,242}
\definecolor{14color75_1bic}{RGB}{255,255,255}
\definecolor{15color75_1bic}{RGB}{255,252,252}
\definecolor{16color75_1bic}{RGB}{255,255,255}
\definecolor{17color75_1bic}{RGB}{255,255,255}
\definecolor{18color75_1bic}{RGB}{255,250,250}
\definecolor{19color75_1bic}{RGB}{255,255,255}
\definecolor{20color75_1bic}{RGB}{255,255,255}
\definecolor{21color75_1bic}{RGB}{255,255,255}
\definecolor{22color75_1bic}{RGB}{255,255,255}
\definecolor{23color75_1bic}{RGB}{255,250,250}
\definecolor{24color75_1bic}{RGB}{255,255,255}
\definecolor{25color75_1bic}{RGB}{255,252,252}
\definecolor{26color75_1bic}{RGB}{255,255,255}
\definecolor{27color75_1bic}{RGB}{255,255,255}
\definecolor{28color75_1bic}{RGB}{255,252,252}
\definecolor{29color75_1bic}{RGB}{255,255,255}
\definecolor{30color75_1bic}{RGB}{255,255,255}
\definecolor{31color75_1bic}{RGB}{255,255,255}
\definecolor{32color75_1bic}{RGB}{255,255,255}
\definecolor{33color75_1bic}{RGB}{255,255,255}
\definecolor{34color75_1bic}{RGB}{255,255,255}
\tikzset{
	n1_75_1bic/.style={circle, inner sep=1mm, minimum size=0.55cm, draw, thick, black, fill=1color75_1bic, text=black},
	n2_75_1bic/.style={circle, inner sep=1mm, minimum size=0.55cm, draw, thick, black, fill=2color75_1bic, text=black},
	n3_75_1bic/.style={circle, inner sep=1mm, minimum size=0.55cm, draw, thick, black, fill=3color75_1bic, text=black},
	n4_75_1bic/.style={circle, inner sep=1mm, minimum size=0.55cm, draw, thick, black, fill=4color75_1bic, text=black},
	n5_75_1bic/.style={circle, inner sep=1mm, minimum size=0.55cm, draw, thick, black, fill=5color75_1bic, text=black},
	n6_75_1bic/.style={circle, inner sep=1mm, minimum size=0.55cm, draw, thick, black, fill=6color75_1bic, text=black},
	n7_75_1bic/.style={circle, inner sep=1mm, minimum size=0.55cm, draw, thick, white, fill=7color75_1bic, text=black},
	n8_75_1bic/.style={circle, inner sep=1mm, minimum size=0.55cm, draw, thick, black, fill=8color75_1bic, text=black},
	n9_75_1bic/.style={circle, inner sep=1mm, minimum size=0.55cm, draw, thick, black, fill=9color75_1bic, text=black},
	n10_75_1bic/.style={circle, inner sep=1mm, minimum size=0.55cm, draw, thick, black, fill=10color75_1bic, text=black},
	n11_75_1bic/.style={circle, inner sep=1mm, minimum size=0.55cm, draw, thick, white, fill=11color75_1bic, text=black},
	n12_75_1bic/.style={circle, inner sep=1mm, minimum size=0.55cm, draw, thick, black, fill=12color75_1bic, text=black},
	n13_75_1bic/.style={rectangle, inner sep=1mm, minimum size=0.55cm, draw, thick, black, fill=13color75_1bic, text=black},
	n14_75_1bic/.style={circle, inner sep=1mm, minimum size=0.55cm, draw, thick, white, fill=14color75_1bic, text=black},
	n15_75_1bic/.style={circle, inner sep=1mm, minimum size=0.55cm, draw, thick, black, fill=15color75_1bic, text=black},
	n16_75_1bic/.style={circle, inner sep=1mm, minimum size=0.55cm, draw, thick, white, fill=16color75_1bic, text=black},
	n17_75_1bic/.style={circle, inner sep=1mm, minimum size=0.55cm, draw, thick, white, fill=17color75_1bic, text=black},
	n18_75_1bic/.style={circle, inner sep=1mm, minimum size=0.55cm, draw, thick, black, fill=18color75_1bic, text=black},
	n19_75_1bic/.style={circle, inner sep=1mm, minimum size=0.55cm, draw, thick, white, fill=19color75_1bic, text=black},
	n20_75_1bic/.style={circle, inner sep=1mm, minimum size=0.55cm, draw, thick, white, fill=20color75_1bic, text=black},
	n21_75_1bic/.style={circle, inner sep=1mm, minimum size=0.55cm, draw, thick, white, fill=21color75_1bic, text=black},
	n22_75_1bic/.style={circle, inner sep=1mm, minimum size=0.55cm, draw, thick, white, fill=22color75_1bic, text=black},
	n23_75_1bic/.style={circle, inner sep=1mm, minimum size=0.55cm, draw, thick, black, fill=23color75_1bic, text=black},
	n24_75_1bic/.style={circle, inner sep=1mm, minimum size=0.55cm, draw, thick, white, fill=24color75_1bic, text=black},
	n25_75_1bic/.style={circle, inner sep=1mm, minimum size=0.55cm, draw, thick, black, fill=25color75_1bic, text=black},
	n26_75_1bic/.style={circle, inner sep=1mm, minimum size=0.55cm, draw, thick, white, fill=26color75_1bic, text=black},
	n27_75_1bic/.style={circle, inner sep=1mm, minimum size=0.55cm, draw, thick, white, fill=27color75_1bic, text=black},
	n28_75_1bic/.style={circle, inner sep=1mm, minimum size=0.55cm, draw, thick, black, fill=28color75_1bic, text=black},
	n29_75_1bic/.style={circle, inner sep=1mm, minimum size=0.55cm, draw, thick, white, fill=29color75_1bic, text=black},
	n30_75_1bic/.style={circle, inner sep=1mm, minimum size=0.55cm, draw, thick, white, fill=30color75_1bic, text=black},
	n31_75_1bic/.style={circle, inner sep=1mm, minimum size=0.55cm, draw, thick, white, fill=31color75_1bic, text=black},
	n32_75_1bic/.style={circle, inner sep=1mm, minimum size=0.55cm, draw, thick, white, fill=32color75_1bic, text=black},
	n33_75_1bic/.style={circle, inner sep=1mm, minimum size=0.55cm, draw, thick, white, fill=33color75_1bic, text=black},
	n34_75_1bic/.style={circle, inner sep=1mm, minimum size=0.55cm, draw, thick, white, fill=34color75_1bic, text=black},
}

\definecolor{1color125_1bics}{RGB}{255,255,255}
\definecolor{2color125_1bics}{RGB}{255,234,234}
\definecolor{3color125_1bics}{RGB}{255,250,250}
\definecolor{4color125_1bics}{RGB}{255,255,255}
\definecolor{5color125_1bics}{RGB}{255,214,214}
\definecolor{6color125_1bics}{RGB}{255,255,255}
\definecolor{7color125_1bics}{RGB}{255,252,252}
\definecolor{8color125_1bics}{RGB}{255,252,252}
\definecolor{9color125_1bics}{RGB}{255,255,255}
\definecolor{10color125_1bics}{RGB}{255,239,239}
\definecolor{11color125_1bics}{RGB}{255,255,255}
\definecolor{12color125_1bics}{RGB}{255,255,255}
\definecolor{13color125_1bics}{RGB}{255,107,107}
\definecolor{14color125_1bics}{RGB}{255,255,255}
\definecolor{15color125_1bics}{RGB}{255,255,255}
\definecolor{16color125_1bics}{RGB}{255,255,255}
\definecolor{17color125_1bics}{RGB}{255,255,255}
\definecolor{18color125_1bics}{RGB}{255,255,255}
\definecolor{19color125_1bics}{RGB}{255,255,255}
\definecolor{20color125_1bics}{RGB}{255,255,255}
\definecolor{21color125_1bics}{RGB}{255,255,255}
\definecolor{22color125_1bics}{RGB}{255,255,255}
\definecolor{23color125_1bics}{RGB}{255,255,255}
\definecolor{24color125_1bics}{RGB}{255,255,255}
\definecolor{25color125_1bics}{RGB}{255,255,255}
\definecolor{26color125_1bics}{RGB}{255,255,255}
\definecolor{27color125_1bics}{RGB}{255,255,255}
\definecolor{28color125_1bics}{RGB}{255,255,255}
\definecolor{29color125_1bics}{RGB}{255,255,255}
\definecolor{30color125_1bics}{RGB}{255,255,255}
\definecolor{31color125_1bics}{RGB}{255,255,255}
\definecolor{32color125_1bics}{RGB}{255,255,255}
\definecolor{33color125_1bics}{RGB}{255,255,255}
\definecolor{34color125_1bics}{RGB}{255,234,234}
\tikzset{
	n1_125_1bics/.style={circle, inner sep=1mm, minimum size=0.55cm, draw, thick, white, fill=1color125_1bics, text=black},
	n2_125_1bics/.style={circle, inner sep=1mm, minimum size=0.55cm, draw, thick, black, fill=2color125_1bics, text=black},
	n3_125_1bics/.style={circle, inner sep=1mm, minimum size=0.55cm, draw, thick, black, fill=3color125_1bics, text=black},
	n4_125_1bics/.style={circle, inner sep=1mm, minimum size=0.55cm, draw, thick, white, fill=4color125_1bics, text=black},
	n5_125_1bics/.style={circle, inner sep=1mm, minimum size=0.55cm, draw, thick, black, fill=5color125_1bics, text=black},
	n6_125_1bics/.style={circle, inner sep=1mm, minimum size=0.55cm, draw, thick, white, fill=6color125_1bics, text=black},
	n7_125_1bics/.style={circle, inner sep=1mm, minimum size=0.55cm, draw, thick, black, fill=7color125_1bics, text=black},
	n8_125_1bics/.style={circle, inner sep=1mm, minimum size=0.55cm, draw, thick, black, fill=8color125_1bics, text=black},
	n9_125_1bics/.style={circle, inner sep=1mm, minimum size=0.55cm, draw, thick, white, fill=9color125_1bics, text=black},
	n10_125_1bics/.style={circle, inner sep=1mm, minimum size=0.55cm, draw, thick, black, fill=10color125_1bics, text=black},
	n11_125_1bics/.style={circle, inner sep=1mm, minimum size=0.55cm, draw, thick, white, fill=11color125_1bics, text=black},
	n12_125_1bics/.style={circle, inner sep=1mm, minimum size=0.55cm, draw, thick, white, fill=12color125_1bics, text=black},
	n13_125_1bics/.style={rectangle, inner sep=1mm, minimum size=0.55cm, draw, thick, black, fill=13color125_1bics, text=black},
	n14_125_1bics/.style={circle, inner sep=1mm, minimum size=0.55cm, draw, thick, white, fill=14color125_1bics, text=black},
	n15_125_1bics/.style={circle, inner sep=1mm, minimum size=0.55cm, draw, thick, white, fill=15color125_1bics, text=black},
	n16_125_1bics/.style={circle, inner sep=1mm, minimum size=0.55cm, draw, thick, white, fill=16color125_1bics, text=black},
	n17_125_1bics/.style={circle, inner sep=1mm, minimum size=0.55cm, draw, thick, white, fill=17color125_1bics, text=black},
	n18_125_1bics/.style={circle, inner sep=1mm, minimum size=0.55cm, draw, thick, white, fill=18color125_1bics, text=black},
	n19_125_1bics/.style={circle, inner sep=1mm, minimum size=0.55cm, draw, thick, white, fill=19color125_1bics, text=black},
	n20_125_1bics/.style={circle, inner sep=1mm, minimum size=0.55cm, draw, thick, white, fill=20color125_1bics, text=black},
	n21_125_1bics/.style={circle, inner sep=1mm, minimum size=0.55cm, draw, thick, white, fill=21color125_1bics, text=black},
	n22_125_1bics/.style={circle, inner sep=1mm, minimum size=0.55cm, draw, thick, white, fill=22color125_1bics, text=black},
	n23_125_1bics/.style={circle, inner sep=1mm, minimum size=0.55cm, draw, thick, white, fill=23color125_1bics, text=black},
	n24_125_1bics/.style={circle, inner sep=1mm, minimum size=0.55cm, draw, thick, white, fill=24color125_1bics, text=black},
	n25_125_1bics/.style={circle, inner sep=1mm, minimum size=0.55cm, draw, thick, white, fill=25color125_1bics, text=black},
	n26_125_1bics/.style={circle, inner sep=1mm, minimum size=0.55cm, draw, thick, white, fill=26color125_1bics, text=black},
	n27_125_1bics/.style={circle, inner sep=1mm, minimum size=0.55cm, draw, thick, white, fill=27color125_1bics, text=black},
	n28_125_1bics/.style={circle, inner sep=1mm, minimum size=0.55cm, draw, thick, white, fill=28color125_1bics, text=black},
	n29_125_1bics/.style={circle, inner sep=1mm, minimum size=0.55cm, draw, thick, white, fill=29color125_1bics, text=black},
	n30_125_1bics/.style={circle, inner sep=1mm, minimum size=0.55cm, draw, thick, white, fill=30color125_1bics, text=black},
	n31_125_1bics/.style={circle, inner sep=1mm, minimum size=0.55cm, draw, thick, white, fill=31color125_1bics, text=black},
	n32_125_1bics/.style={circle, inner sep=1mm, minimum size=0.55cm, draw, thick, white, fill=32color125_1bics, text=black},
	n33_125_1bics/.style={circle, inner sep=1mm, minimum size=0.55cm, draw, thick, white, fill=33color125_1bics, text=black},
	n34_125_1bics/.style={circle, inner sep=1mm, minimum size=0.55cm, draw, thick, black, fill=34color125_1bics, text=black},
}

\definecolor{1color125_1bic}{RGB}{255,255,255}
\definecolor{2color125_1bic}{RGB}{255,204,204}
\definecolor{3color125_1bic}{RGB}{255,245,245}
\definecolor{4color125_1bic}{RGB}{255,255,255}
\definecolor{5color125_1bic}{RGB}{255,183,183}
\definecolor{6color125_1bic}{RGB}{255,255,255}
\definecolor{7color125_1bic}{RGB}{255,252,252}
\definecolor{8color125_1bic}{RGB}{255,252,252}
\definecolor{9color125_1bic}{RGB}{255,255,255}
\definecolor{10color125_1bic}{RGB}{255,171,171}
\definecolor{11color125_1bic}{RGB}{255,255,255}
\definecolor{12color125_1bic}{RGB}{255,255,255}
\definecolor{13color125_1bic}{RGB}{255,229,229}
\definecolor{14color125_1bic}{RGB}{255,255,255}
\definecolor{15color125_1bic}{RGB}{255,255,255}
\definecolor{16color125_1bic}{RGB}{255,255,255}
\definecolor{17color125_1bic}{RGB}{255,255,255}
\definecolor{18color125_1bic}{RGB}{255,250,250}
\definecolor{19color125_1bic}{RGB}{255,255,255}
\definecolor{20color125_1bic}{RGB}{255,255,255}
\definecolor{21color125_1bic}{RGB}{255,255,255}
\definecolor{22color125_1bic}{RGB}{255,255,255}
\definecolor{23color125_1bic}{RGB}{255,255,255}
\definecolor{24color125_1bic}{RGB}{255,255,255}
\definecolor{25color125_1bic}{RGB}{255,255,255}
\definecolor{26color125_1bic}{RGB}{255,252,252}
\definecolor{27color125_1bic}{RGB}{255,255,255}
\definecolor{28color125_1bic}{RGB}{255,255,255}
\definecolor{29color125_1bic}{RGB}{255,255,255}
\definecolor{30color125_1bic}{RGB}{255,255,255}
\definecolor{31color125_1bic}{RGB}{255,255,255}
\definecolor{32color125_1bic}{RGB}{255,255,255}
\definecolor{33color125_1bic}{RGB}{255,255,255}
\definecolor{34color125_1bic}{RGB}{255,255,255}
\tikzset{
	n1_125_1bic/.style={circle, inner sep=1mm, minimum size=0.55cm, draw, thick, white, fill=1color125_1bic, text=black},
	n2_125_1bic/.style={circle, inner sep=1mm, minimum size=0.55cm, draw, thick, black, fill=2color125_1bic, text=black},
	n3_125_1bic/.style={circle, inner sep=1mm, minimum size=0.55cm, draw, thick, black, fill=3color125_1bic, text=black},
	n4_125_1bic/.style={circle, inner sep=1mm, minimum size=0.55cm, draw, thick, white, fill=4color125_1bic, text=black},
	n5_125_1bic/.style={circle, inner sep=1mm, minimum size=0.55cm, draw, thick, black, fill=5color125_1bic, text=black},
	n6_125_1bic/.style={circle, inner sep=1mm, minimum size=0.55cm, draw, thick, white, fill=6color125_1bic, text=black},
	n7_125_1bic/.style={circle, inner sep=1mm, minimum size=0.55cm, draw, thick, black, fill=7color125_1bic, text=black},
	n8_125_1bic/.style={circle, inner sep=1mm, minimum size=0.55cm, draw, thick, black, fill=8color125_1bic, text=black},
	n9_125_1bic/.style={circle, inner sep=1mm, minimum size=0.55cm, draw, thick, white, fill=9color125_1bic, text=black},
	n10_125_1bic/.style={circle, inner sep=1mm, minimum size=0.55cm, draw, thick, black, fill=10color125_1bic, text=black},
	n11_125_1bic/.style={circle, inner sep=1mm, minimum size=0.55cm, draw, thick, white, fill=11color125_1bic, text=black},
	n12_125_1bic/.style={circle, inner sep=1mm, minimum size=0.55cm, draw, thick, white, fill=12color125_1bic, text=black},
	n13_125_1bic/.style={rectangle, inner sep=1mm, minimum size=0.55cm, draw, thick, black, fill=13color125_1bic, text=black},
	n14_125_1bic/.style={circle, inner sep=1mm, minimum size=0.55cm, draw, thick, white, fill=14color125_1bic, text=black},
	n15_125_1bic/.style={circle, inner sep=1mm, minimum size=0.55cm, draw, thick, white, fill=15color125_1bic, text=black},
	n16_125_1bic/.style={circle, inner sep=1mm, minimum size=0.55cm, draw, thick, white, fill=16color125_1bic, text=black},
	n17_125_1bic/.style={circle, inner sep=1mm, minimum size=0.55cm, draw, thick, white, fill=17color125_1bic, text=black},
	n18_125_1bic/.style={circle, inner sep=1mm, minimum size=0.55cm, draw, thick, black, fill=18color125_1bic, text=black},
	n19_125_1bic/.style={circle, inner sep=1mm, minimum size=0.55cm, draw, thick, white, fill=19color125_1bic, text=black},
	n20_125_1bic/.style={circle, inner sep=1mm, minimum size=0.55cm, draw, thick, white, fill=20color125_1bic, text=black},
	n21_125_1bic/.style={circle, inner sep=1mm, minimum size=0.55cm, draw, thick, white, fill=21color125_1bic, text=black},
	n22_125_1bic/.style={circle, inner sep=1mm, minimum size=0.55cm, draw, thick, white, fill=22color125_1bic, text=black},
	n23_125_1bic/.style={circle, inner sep=1mm, minimum size=0.55cm, draw, thick, white, fill=23color125_1bic, text=black},
	n24_125_1bic/.style={circle, inner sep=1mm, minimum size=0.55cm, draw, thick, white, fill=24color125_1bic, text=black},
	n25_125_1bic/.style={circle, inner sep=1mm, minimum size=0.55cm, draw, thick, white, fill=25color125_1bic, text=black},
	n26_125_1bic/.style={circle, inner sep=1mm, minimum size=0.55cm, draw, thick, black, fill=26color125_1bic, text=black},
	n27_125_1bic/.style={circle, inner sep=1mm, minimum size=0.55cm, draw, thick, white, fill=27color125_1bic, text=black},
	n28_125_1bic/.style={circle, inner sep=1mm, minimum size=0.55cm, draw, thick, white, fill=28color125_1bic, text=black},
	n29_125_1bic/.style={circle, inner sep=1mm, minimum size=0.55cm, draw, thick, white, fill=29color125_1bic, text=black},
	n30_125_1bic/.style={circle, inner sep=1mm, minimum size=0.55cm, draw, thick, white, fill=30color125_1bic, text=black},
	n31_125_1bic/.style={circle, inner sep=1mm, minimum size=0.55cm, draw, thick, white, fill=31color125_1bic, text=black},
	n32_125_1bic/.style={circle, inner sep=1mm, minimum size=0.55cm, draw, thick, white, fill=32color125_1bic, text=black},
	n33_125_1bic/.style={circle, inner sep=1mm, minimum size=0.55cm, draw, thick, white, fill=33color125_1bic, text=black},
	n34_125_1bic/.style={circle, inner sep=1mm, minimum size=0.55cm, draw, thick, white, fill=34color125_1bic, text=black},
}

\begin{figure}[t!]
  \makebox[\textwidth]{
    \centerline{
      \scalebox{0.7}{\begin{tikzpicture}[scale=9]
	\node [n1_25_1bics] (v1) at (0.533333, 0.832000)	{1};
	\node [n2_25_1bics] (v2) at (0.096970, 0.624000)	{2};
	\node [n3_25_1bics] (v3) at (0.193939, 0.624000)	{3};
	\node [n4_25_1bics] (v4) at (0.290909, 0.624000)	{4};
	\node [n5_25_1bics] (v5) at (0.066667, 0.416000)	{5};
	\node [n6_25_1bics] (v6) at (0.387879, 0.624000)	{6};
	\node [n7_25_1bics] (v7) at (0.484848, 0.624000)	{7};
	\node [n8_25_1bics] (v8) at (0.133333, 0.416000)	{8};
	\node [n9_25_1bics] (v9) at (0.581818, 0.624000)	{9};
	\node [n10_25_1bics] (v10) at (0.200000, 0.416000)	{10};
	\node [n11_25_1bics] (v11) at (0.266667, 0.416000)	{11};
	\node [n12_25_1bics] (v12) at (0.333333, 0.416000)	{12};
	\node [n13_25_1bics] (v13) at (0.133333, 0.208000)	{$\begin{array}{@{}c@{}} 13 \\ (9\%) \end{array}$};
	\node [n14_25_1bics] (v14) at (0.678788, 0.624000)	{14};
	\node [n15_25_1bics] (v15) at (0.775758, 0.624000)	{15};
	\node [n16_25_1bics] (v16) at (0.400000, 0.416000)	{16};
	\node [n17_25_1bics] (v17) at (0.872727, 0.624000)	{17};
	\node [n18_25_1bics] (v18) at (0.466667, 0.416000)	{18};
	\node [n19_25_1bics] (v19) at (0.533333, 0.416000)	{19};
	\node [n20_25_1bics] (v20) at (0.600000, 0.416000)	{20};
	\node [n21_25_1bics] (v21) at (0.266667, 0.208000)	{21};
	\node [n22_25_1bics] (v22) at (0.969697, 0.624000)	{22};
	\node [n23_25_1bics] (v23) at (0.666667, 0.416000)	{23};
	\node [n24_25_1bics] (v24) at (0.733333, 0.416000)	{24};
	\node [n25_25_1bics] (v25) at (0.800000, 0.416000)	{25};
	\node [n26_25_1bics] (v26) at (0.400000, 0.208000)	{26};
	\node [n27_25_1bics] (v27) at (0.866667, 0.416000)	{27};
	\node [n28_25_1bics] (v28) at (0.933333, 0.416000)	{28};
	\node [n29_25_1bics] (v29) at (0.533333, 0.208000)	{29};
	\node [n30_25_1bics] (v30) at (1.000000, 0.416000)	{30};
	\node [n31_25_1bics] (v31) at (0.666667, 0.208000)	{31};
	\node [n32_25_1bics] (v32) at (0.800000, 0.208000)	{32};
	\node [n33_25_1bics] (v33) at (0.933333, 0.208000)	{33};
	\node [n34_25_1bics] (v34) at (0.533333, 0.000000)	{34};

	\draw [->] (v1) -- (v2);
	\draw [->] (v1) -- (v3);
	\draw [->] (v1) -- (v4);
	\draw [->] (v1) -- (v6);
	\draw [->] (v1) -- (v7);
	\draw [->] (v1) -- (v9);
	\draw [->] (v1) -- (v14);
	\draw [->] (v1) -- (v15);
	\draw [->] (v1) -- (v17);
	\draw [->] (v1) -- (v22);
	\draw [->] (v2) -- (v5);
	\draw [->] (v2) -- (v8);
	\draw [->] (v2) -- (v10);
	\draw [->] (v2) -- (v16);
	\draw [->] (v2) -- (v18);
	\draw [->] (v2) -- (v23);
	\draw [->] (v3) -- (v5);
	\draw [->] (v3) -- (v11);
	\draw [->] (v3) -- (v19);
	\draw [->] (v3) -- (v24);
	\draw [->] (v4) -- (v5);
	\draw [->] (v4) -- (v12);
	\draw [->] (v4) -- (v20);
	\draw [->] (v4) -- (v25);
	\draw [->] (v5) -- (v13);
	\draw [->] (v5) -- (v21);
	\draw [->] (v5) -- (v26);
	\draw [->] (v6) -- (v8);
	\draw [->] (v6) -- (v11);
	\draw [->] (v6) -- (v27);
	\draw [->] (v7) -- (v8);
	\draw [->] (v7) -- (v12);
	\draw [->] (v7) -- (v28);
	\draw [->] (v8) -- (v13);
	\draw [->] (v8) -- (v29);
	\draw [->] (v9) -- (v10);
	\draw [->] (v9) -- (v11);
	\draw [->] (v9) -- (v12);
	\draw [->] (v9) -- (v30);
	\draw [->] (v10) -- (v13);
	\draw [->] (v10) -- (v31);
	\draw [->] (v11) -- (v13);
	\draw [->] (v11) -- (v32);
	\draw [->] (v12) -- (v13);
	\draw [->] (v12) -- (v33);
	\draw [->] (v13) -- (v34);
	\draw [->] (v14) -- (v16);
	\draw [->] (v14) -- (v19);
	\draw [->] (v14) -- (v27);
	\draw [->] (v15) -- (v16);
	\draw [->] (v15) -- (v20);
	\draw [->] (v15) -- (v28);
	\draw [->] (v16) -- (v21);
	\draw [->] (v16) -- (v29);
	\draw [->] (v17) -- (v18);
	\draw [->] (v17) -- (v19);
	\draw [->] (v17) -- (v20);
	\draw [->] (v17) -- (v30);
	\draw [->] (v18) -- (v21);
	\draw [->] (v18) -- (v31);
	\draw [->] (v19) -- (v21);
	\draw [->] (v19) -- (v32);
	\draw [->] (v20) -- (v21);
	\draw [->] (v20) -- (v33);
	\draw [->] (v21) -- (v34);
	\draw [->] (v22) -- (v23);
	\draw [->] (v22) -- (v24);
	\draw [->] (v22) -- (v25);
	\draw [->] (v22) -- (v27);
	\draw [->] (v22) -- (v28);
	\draw [->] (v22) -- (v30);
	\draw [->] (v23) -- (v26);
	\draw [->] (v23) -- (v29);
	\draw [->] (v23) -- (v31);
	\draw [->] (v24) -- (v26);
	\draw [->] (v24) -- (v32);
	\draw [->] (v25) -- (v26);
	\draw [->] (v25) -- (v33);
	\draw [->] (v26) -- (v34);
	\draw [->] (v27) -- (v29);
	\draw [->] (v27) -- (v32);
	\draw [->] (v28) -- (v29);
	\draw [->] (v28) -- (v33);
	\draw [->] (v29) -- (v34);
	\draw [->] (v30) -- (v31);
	\draw [->] (v30) -- (v32);
	\draw [->] (v30) -- (v33);
	\draw [->] (v31) -- (v34);
	\draw [->] (v32) -- (v34);
	\draw [->] (v33) -- (v34);
\end{tikzpicture}}
      \scalebox{0.7}{\begin{tikzpicture}[scale=9]
	\node [n1_25_1bic] (v1) at (0.533333, 0.832000)	{1};
	\node [n2_25_1bic] (v2) at (0.096970, 0.624000)	{2};
	\node [n3_25_1bic] (v3) at (0.193939, 0.624000)	{3};
	\node [n4_25_1bic] (v4) at (0.290909, 0.624000)	{4};
	\node [n5_25_1bic] (v5) at (0.066667, 0.416000)	{5};
	\node [n6_25_1bic] (v6) at (0.387879, 0.624000)	{6};
	\node [n7_25_1bic] (v7) at (0.484848, 0.624000)	{7};
	\node [n8_25_1bic] (v8) at (0.133333, 0.416000)	{8};
	\node [n9_25_1bic] (v9) at (0.581818, 0.624000)	{9};
	\node [n10_25_1bic] (v10) at (0.200000, 0.416000)	{10};
	\node [n11_25_1bic] (v11) at (0.266667, 0.416000)	{11};
	\node [n12_25_1bic] (v12) at (0.333333, 0.416000)	{12};
	\node [n13_25_1bic] (v13) at (0.133333, 0.208000)	{$\begin{array}{@{}c@{}} 13 \\ (1\%) \end{array}$};
	\node [n14_25_1bic] (v14) at (0.678788, 0.624000)	{14};
	\node [n15_25_1bic] (v15) at (0.775758, 0.624000)	{15};
	\node [n16_25_1bic] (v16) at (0.400000, 0.416000)	{16};
	\node [n17_25_1bic] (v17) at (0.872727, 0.624000)	{17};
	\node [n18_25_1bic] (v18) at (0.466667, 0.416000)	{18};
	\node [n19_25_1bic] (v19) at (0.533333, 0.416000)	{19};
	\node [n20_25_1bic] (v20) at (0.600000, 0.416000)	{20};
	\node [n21_25_1bic] (v21) at (0.266667, 0.208000)	{21};
	\node [n22_25_1bic] (v22) at (0.969697, 0.624000)	{22};
	\node [n23_25_1bic] (v23) at (0.666667, 0.416000)	{23};
	\node [n24_25_1bic] (v24) at (0.733333, 0.416000)	{24};
	\node [n25_25_1bic] (v25) at (0.800000, 0.416000)	{25};
	\node [n26_25_1bic] (v26) at (0.400000, 0.208000)	{26};
	\node [n27_25_1bic] (v27) at (0.866667, 0.416000)	{27};
	\node [n28_25_1bic] (v28) at (0.933333, 0.416000)	{28};
	\node [n29_25_1bic] (v29) at (0.533333, 0.208000)	{29};
	\node [n30_25_1bic] (v30) at (1.000000, 0.416000)	{30};
	\node [n31_25_1bic] (v31) at (0.666667, 0.208000)	{31};
	\node [n32_25_1bic] (v32) at (0.800000, 0.208000)	{32};
	\node [n33_25_1bic] (v33) at (0.933333, 0.208000)	{33};
	\node [n34_25_1bic] (v34) at (0.533333, 0.000000)	{34};

	\draw [->] (v1) -- (v2);
	\draw [->] (v1) -- (v3);
	\draw [->] (v1) -- (v4);
	\draw [->] (v1) -- (v6);
	\draw [->] (v1) -- (v7);
	\draw [->] (v1) -- (v9);
	\draw [->] (v1) -- (v14);
	\draw [->] (v1) -- (v15);
	\draw [->] (v1) -- (v17);
	\draw [->] (v1) -- (v22);
	\draw [->] (v2) -- (v5);
	\draw [->] (v2) -- (v8);
	\draw [->] (v2) -- (v10);
	\draw [->] (v2) -- (v16);
	\draw [->] (v2) -- (v18);
	\draw [->] (v2) -- (v23);
	\draw [->] (v3) -- (v5);
	\draw [->] (v3) -- (v11);
	\draw [->] (v3) -- (v19);
	\draw [->] (v3) -- (v24);
	\draw [->] (v4) -- (v5);
	\draw [->] (v4) -- (v12);
	\draw [->] (v4) -- (v20);
	\draw [->] (v4) -- (v25);
	\draw [->] (v5) -- (v13);
	\draw [->] (v5) -- (v21);
	\draw [->] (v5) -- (v26);
	\draw [->] (v6) -- (v8);
	\draw [->] (v6) -- (v11);
	\draw [->] (v6) -- (v27);
	\draw [->] (v7) -- (v8);
	\draw [->] (v7) -- (v12);
	\draw [->] (v7) -- (v28);
	\draw [->] (v8) -- (v13);
	\draw [->] (v8) -- (v29);
	\draw [->] (v9) -- (v10);
	\draw [->] (v9) -- (v11);
	\draw [->] (v9) -- (v12);
	\draw [->] (v9) -- (v30);
	\draw [->] (v10) -- (v13);
	\draw [->] (v10) -- (v31);
	\draw [->] (v11) -- (v13);
	\draw [->] (v11) -- (v32);
	\draw [->] (v12) -- (v13);
	\draw [->] (v12) -- (v33);
	\draw [->] (v13) -- (v34);
	\draw [->] (v14) -- (v16);
	\draw [->] (v14) -- (v19);
	\draw [->] (v14) -- (v27);
	\draw [->] (v15) -- (v16);
	\draw [->] (v15) -- (v20);
	\draw [->] (v15) -- (v28);
	\draw [->] (v16) -- (v21);
	\draw [->] (v16) -- (v29);
	\draw [->] (v17) -- (v18);
	\draw [->] (v17) -- (v19);
	\draw [->] (v17) -- (v20);
	\draw [->] (v17) -- (v30);
	\draw [->] (v18) -- (v21);
	\draw [->] (v18) -- (v31);
	\draw [->] (v19) -- (v21);
	\draw [->] (v19) -- (v32);
	\draw [->] (v20) -- (v21);
	\draw [->] (v20) -- (v33);
	\draw [->] (v21) -- (v34);
	\draw [->] (v22) -- (v23);
	\draw [->] (v22) -- (v24);
	\draw [->] (v22) -- (v25);
	\draw [->] (v22) -- (v27);
	\draw [->] (v22) -- (v28);
	\draw [->] (v22) -- (v30);
	\draw [->] (v23) -- (v26);
	\draw [->] (v23) -- (v29);
	\draw [->] (v23) -- (v31);
	\draw [->] (v24) -- (v26);
	\draw [->] (v24) -- (v32);
	\draw [->] (v25) -- (v26);
	\draw [->] (v25) -- (v33);
	\draw [->] (v26) -- (v34);
	\draw [->] (v27) -- (v29);
	\draw [->] (v27) -- (v32);
	\draw [->] (v28) -- (v29);
	\draw [->] (v28) -- (v33);
	\draw [->] (v29) -- (v34);
	\draw [->] (v30) -- (v31);
	\draw [->] (v30) -- (v32);
	\draw [->] (v30) -- (v33);
	\draw [->] (v31) -- (v34);
	\draw [->] (v32) -- (v34);
	\draw [->] (v33) -- (v34);
\end{tikzpicture}}
    }}
  \caption{Results from $100$ simulations with true covariance
    matrix given by (\ref{eq:true5}) for $n=25$ (sBIC left, BIC right).
    Darker color corresponds to higher selection frequency. The square
    node $13$ is the smallest true model and includes the selection frequency. 
    Models never chosen are without border.}
  \label{fig:lattice-25}
\end{figure}
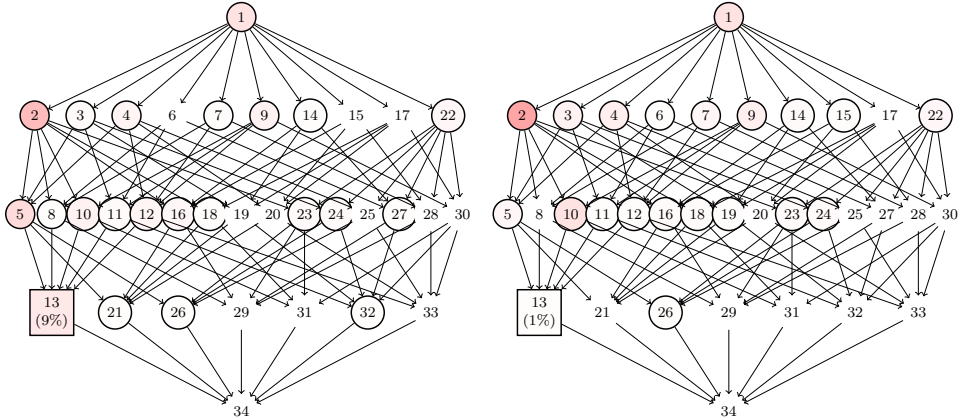
\begin{figure}[ht!p]
  \makebox[\textwidth]{
    \centerline{
      \scalebox{0.7}{\begin{tikzpicture}[scale=9]
	\node [n1_75_1bics] (v1) at (0.533333, 0.832000)	{1};
	\node [n2_75_1bics] (v2) at (0.096970, 0.624000)	{2};
	\node [n3_75_1bics] (v3) at (0.193939, 0.624000)	{3};
	\node [n4_75_1bics] (v4) at (0.290909, 0.624000)	{4};
	\node [n5_75_1bics] (v5) at (0.066667, 0.416000)	{5};
	\node [n6_75_1bics] (v6) at (0.387879, 0.624000)	{6};
	\node [n7_75_1bics] (v7) at (0.484848, 0.624000)	{7};
	\node [n8_75_1bics] (v8) at (0.133333, 0.416000)	{8};
	\node [n9_75_1bics] (v9) at (0.581818, 0.624000)	{9};
	\node [n10_75_1bics] (v10) at (0.200000, 0.416000)	{10};
	\node [n11_75_1bics] (v11) at (0.266667, 0.416000)	{11};
	\node [n12_75_1bics] (v12) at (0.333333, 0.416000)	{12};
	\node [n13_75_1bics] (v13) at (0.133333, 0.208000)	{$\begin{array}{@{}c@{}} 13 \\ (38\%) \end{array}$};
	\node [n14_75_1bics] (v14) at (0.678788, 0.624000)	{14};
	\node [n15_75_1bics] (v15) at (0.775758, 0.624000)	{15};
	\node [n16_75_1bics] (v16) at (0.400000, 0.416000)	{16};
	\node [n17_75_1bics] (v17) at (0.872727, 0.624000)	{17};
	\node [n18_75_1bics] (v18) at (0.466667, 0.416000)	{18};
	\node [n19_75_1bics] (v19) at (0.533333, 0.416000)	{19};
	\node [n20_75_1bics] (v20) at (0.600000, 0.416000)	{20};
	\node [n21_75_1bics] (v21) at (0.266667, 0.208000)	{21};
	\node [n22_75_1bics] (v22) at (0.969697, 0.624000)	{22};
	\node [n23_75_1bics] (v23) at (0.666667, 0.416000)	{23};
	\node [n24_75_1bics] (v24) at (0.733333, 0.416000)	{24};
	\node [n25_75_1bics] (v25) at (0.800000, 0.416000)	{25};
	\node [n26_75_1bics] (v26) at (0.400000, 0.208000)	{26};
	\node [n27_75_1bics] (v27) at (0.866667, 0.416000)	{27};
	\node [n28_75_1bics] (v28) at (0.933333, 0.416000)	{28};
	\node [n29_75_1bics] (v29) at (0.533333, 0.208000)	{29};
	\node [n30_75_1bics] (v30) at (1.000000, 0.416000)	{30};
	\node [n31_75_1bics] (v31) at (0.666667, 0.208000)	{31};
	\node [n32_75_1bics] (v32) at (0.800000, 0.208000)	{32};
	\node [n33_75_1bics] (v33) at (0.933333, 0.208000)	{33};
	\node [n34_75_1bics] (v34) at (0.533333, 0.000000)	{34};

	\draw [->] (v1) -- (v2);
	\draw [->] (v1) -- (v3);
	\draw [->] (v1) -- (v4);
	\draw [->] (v1) -- (v6);
	\draw [->] (v1) -- (v7);
	\draw [->] (v1) -- (v9);
	\draw [->] (v1) -- (v14);
	\draw [->] (v1) -- (v15);
	\draw [->] (v1) -- (v17);
	\draw [->] (v1) -- (v22);
	\draw [->] (v2) -- (v5);
	\draw [->] (v2) -- (v8);
	\draw [->] (v2) -- (v10);
	\draw [->] (v2) -- (v16);
	\draw [->] (v2) -- (v18);
	\draw [->] (v2) -- (v23);
	\draw [->] (v3) -- (v5);
	\draw [->] (v3) -- (v11);
	\draw [->] (v3) -- (v19);
	\draw [->] (v3) -- (v24);
	\draw [->] (v4) -- (v5);
	\draw [->] (v4) -- (v12);
	\draw [->] (v4) -- (v20);
	\draw [->] (v4) -- (v25);
	\draw [->] (v5) -- (v13);
	\draw [->] (v5) -- (v21);
	\draw [->] (v5) -- (v26);
	\draw [->] (v6) -- (v8);
	\draw [->] (v6) -- (v11);
	\draw [->] (v6) -- (v27);
	\draw [->] (v7) -- (v8);
	\draw [->] (v7) -- (v12);
	\draw [->] (v7) -- (v28);
	\draw [->] (v8) -- (v13);
	\draw [->] (v8) -- (v29);
	\draw [->] (v9) -- (v10);
	\draw [->] (v9) -- (v11);
	\draw [->] (v9) -- (v12);
	\draw [->] (v9) -- (v30);
	\draw [->] (v10) -- (v13);
	\draw [->] (v10) -- (v31);
	\draw [->] (v11) -- (v13);
	\draw [->] (v11) -- (v32);
	\draw [->] (v12) -- (v13);
	\draw [->] (v12) -- (v33);
	\draw [->] (v13) -- (v34);
	\draw [->] (v14) -- (v16);
	\draw [->] (v14) -- (v19);
	\draw [->] (v14) -- (v27);
	\draw [->] (v15) -- (v16);
	\draw [->] (v15) -- (v20);
	\draw [->] (v15) -- (v28);
	\draw [->] (v16) -- (v21);
	\draw [->] (v16) -- (v29);
	\draw [->] (v17) -- (v18);
	\draw [->] (v17) -- (v19);
	\draw [->] (v17) -- (v20);
	\draw [->] (v17) -- (v30);
	\draw [->] (v18) -- (v21);
	\draw [->] (v18) -- (v31);
	\draw [->] (v19) -- (v21);
	\draw [->] (v19) -- (v32);
	\draw [->] (v20) -- (v21);
	\draw [->] (v20) -- (v33);
	\draw [->] (v21) -- (v34);
	\draw [->] (v22) -- (v23);
	\draw [->] (v22) -- (v24);
	\draw [->] (v22) -- (v25);
	\draw [->] (v22) -- (v27);
	\draw [->] (v22) -- (v28);
	\draw [->] (v22) -- (v30);
	\draw [->] (v23) -- (v26);
	\draw [->] (v23) -- (v29);
	\draw [->] (v23) -- (v31);
	\draw [->] (v24) -- (v26);
	\draw [->] (v24) -- (v32);
	\draw [->] (v25) -- (v26);
	\draw [->] (v25) -- (v33);
	\draw [->] (v26) -- (v34);
	\draw [->] (v27) -- (v29);
	\draw [->] (v27) -- (v32);
	\draw [->] (v28) -- (v29);
	\draw [->] (v28) -- (v33);
	\draw [->] (v29) -- (v34);
	\draw [->] (v30) -- (v31);
	\draw [->] (v30) -- (v32);
	\draw [->] (v30) -- (v33);
	\draw [->] (v31) -- (v34);
	\draw [->] (v32) -- (v34);
	\draw [->] (v33) -- (v34);
\end{tikzpicture}}
      \scalebox{0.7}{\begin{tikzpicture}[scale=9]
	\node [n1_75_1bic] (v1) at (0.533333, 0.832000)	{1};
	\node [n2_75_1bic] (v2) at (0.096970, 0.624000)	{2};
	\node [n3_75_1bic] (v3) at (0.193939, 0.624000)	{3};
	\node [n4_75_1bic] (v4) at (0.290909, 0.624000)	{4};
	\node [n5_75_1bic] (v5) at (0.066667, 0.416000)	{5};
	\node [n6_75_1bic] (v6) at (0.387879, 0.624000)	{6};
	\node [n7_75_1bic] (v7) at (0.484848, 0.624000)	{7};
	\node [n8_75_1bic] (v8) at (0.133333, 0.416000)	{8};
	\node [n9_75_1bic] (v9) at (0.581818, 0.624000)	{9};
	\node [n10_75_1bic] (v10) at (0.200000, 0.416000)	{10};
	\node [n11_75_1bic] (v11) at (0.266667, 0.416000)	{11};
	\node [n12_75_1bic] (v12) at (0.333333, 0.416000)	{12};
	\node [n13_75_1bic] (v13) at (0.133333, 0.208000)	{$\begin{array}{@{}c@{}} 13 \\ (5\%) \end{array}$};
	\node [n14_75_1bic] (v14) at (0.678788, 0.624000)	{14};
	\node [n15_75_1bic] (v15) at (0.775758, 0.624000)	{15};
	\node [n16_75_1bic] (v16) at (0.400000, 0.416000)	{16};
	\node [n17_75_1bic] (v17) at (0.872727, 0.624000)	{17};
	\node [n18_75_1bic] (v18) at (0.466667, 0.416000)	{18};
	\node [n19_75_1bic] (v19) at (0.533333, 0.416000)	{19};
	\node [n20_75_1bic] (v20) at (0.600000, 0.416000)	{20};
	\node [n21_75_1bic] (v21) at (0.266667, 0.208000)	{21};
	\node [n22_75_1bic] (v22) at (0.969697, 0.624000)	{22};
	\node [n23_75_1bic] (v23) at (0.666667, 0.416000)	{23};
	\node [n24_75_1bic] (v24) at (0.733333, 0.416000)	{24};
	\node [n25_75_1bic] (v25) at (0.800000, 0.416000)	{25};
	\node [n26_75_1bic] (v26) at (0.400000, 0.208000)	{26};
	\node [n27_75_1bic] (v27) at (0.866667, 0.416000)	{27};
	\node [n28_75_1bic] (v28) at (0.933333, 0.416000)	{28};
	\node [n29_75_1bic] (v29) at (0.533333, 0.208000)	{29};
	\node [n30_75_1bic] (v30) at (1.000000, 0.416000)	{30};
	\node [n31_75_1bic] (v31) at (0.666667, 0.208000)	{31};
	\node [n32_75_1bic] (v32) at (0.800000, 0.208000)	{32};
	\node [n33_75_1bic] (v33) at (0.933333, 0.208000)	{33};
	\node [n34_75_1bic] (v34) at (0.533333, 0.000000)	{34};

	\draw [->] (v1) -- (v2);
	\draw [->] (v1) -- (v3);
	\draw [->] (v1) -- (v4);
	\draw [->] (v1) -- (v6);
	\draw [->] (v1) -- (v7);
	\draw [->] (v1) -- (v9);
	\draw [->] (v1) -- (v14);
	\draw [->] (v1) -- (v15);
	\draw [->] (v1) -- (v17);
	\draw [->] (v1) -- (v22);
	\draw [->] (v2) -- (v5);
	\draw [->] (v2) -- (v8);
	\draw [->] (v2) -- (v10);
	\draw [->] (v2) -- (v16);
	\draw [->] (v2) -- (v18);
	\draw [->] (v2) -- (v23);
	\draw [->] (v3) -- (v5);
	\draw [->] (v3) -- (v11);
	\draw [->] (v3) -- (v19);
	\draw [->] (v3) -- (v24);
	\draw [->] (v4) -- (v5);
	\draw [->] (v4) -- (v12);
	\draw [->] (v4) -- (v20);
	\draw [->] (v4) -- (v25);
	\draw [->] (v5) -- (v13);
	\draw [->] (v5) -- (v21);
	\draw [->] (v5) -- (v26);
	\draw [->] (v6) -- (v8);
	\draw [->] (v6) -- (v11);
	\draw [->] (v6) -- (v27);
	\draw [->] (v7) -- (v8);
	\draw [->] (v7) -- (v12);
	\draw [->] (v7) -- (v28);
	\draw [->] (v8) -- (v13);
	\draw [->] (v8) -- (v29);
	\draw [->] (v9) -- (v10);
	\draw [->] (v9) -- (v11);
	\draw [->] (v9) -- (v12);
	\draw [->] (v9) -- (v30);
	\draw [->] (v10) -- (v13);
	\draw [->] (v10) -- (v31);
	\draw [->] (v11) -- (v13);
	\draw [->] (v11) -- (v32);
	\draw [->] (v12) -- (v13);
	\draw [->] (v12) -- (v33);
	\draw [->] (v13) -- (v34);
	\draw [->] (v14) -- (v16);
	\draw [->] (v14) -- (v19);
	\draw [->] (v14) -- (v27);
	\draw [->] (v15) -- (v16);
	\draw [->] (v15) -- (v20);
	\draw [->] (v15) -- (v28);
	\draw [->] (v16) -- (v21);
	\draw [->] (v16) -- (v29);
	\draw [->] (v17) -- (v18);
	\draw [->] (v17) -- (v19);
	\draw [->] (v17) -- (v20);
	\draw [->] (v17) -- (v30);
	\draw [->] (v18) -- (v21);
	\draw [->] (v18) -- (v31);
	\draw [->] (v19) -- (v21);
	\draw [->] (v19) -- (v32);
	\draw [->] (v20) -- (v21);
	\draw [->] (v20) -- (v33);
	\draw [->] (v21) -- (v34);
	\draw [->] (v22) -- (v23);
	\draw [->] (v22) -- (v24);
	\draw [->] (v22) -- (v25);
	\draw [->] (v22) -- (v27);
	\draw [->] (v22) -- (v28);
	\draw [->] (v22) -- (v30);
	\draw [->] (v23) -- (v26);
	\draw [->] (v23) -- (v29);
	\draw [->] (v23) -- (v31);
	\draw [->] (v24) -- (v26);
	\draw [->] (v24) -- (v32);
	\draw [->] (v25) -- (v26);
	\draw [->] (v25) -- (v33);
	\draw [->] (v26) -- (v34);
	\draw [->] (v27) -- (v29);
	\draw [->] (v27) -- (v32);
	\draw [->] (v28) -- (v29);
	\draw [->] (v28) -- (v33);
	\draw [->] (v29) -- (v34);
	\draw [->] (v30) -- (v31);
	\draw [->] (v30) -- (v32);
	\draw [->] (v30) -- (v33);
	\draw [->] (v31) -- (v34);
	\draw [->] (v32) -- (v34);
	\draw [->] (v33) -- (v34);
\end{tikzpicture}}
    }}
  \caption{Results from $100$ simulations as for
    Fig.~\ref{fig:lattice-25} but with sample size 
    $n=75$.}
  \label{fig:lattice-75}
\end{figure}
\begin{figure}[ht!p]
  \makebox[\textwidth]{
    \centerline{
      \scalebox{0.7}{\begin{tikzpicture}[scale=9]
	\node [n1_125_1bics] (v1) at (0.533333, 0.832000)	{1};
	\node [n2_125_1bics] (v2) at (0.096970, 0.624000)	{2};
	\node [n3_125_1bics] (v3) at (0.193939, 0.624000)	{3};
	\node [n4_125_1bics] (v4) at (0.290909, 0.624000)	{4};
	\node [n5_125_1bics] (v5) at (0.066667, 0.416000)	{5};
	\node [n6_125_1bics] (v6) at (0.387879, 0.624000)	{6};
	\node [n7_125_1bics] (v7) at (0.484848, 0.624000)	{7};
	\node [n8_125_1bics] (v8) at (0.133333, 0.416000)	{8};
	\node [n9_125_1bics] (v9) at (0.581818, 0.624000)	{9};
	\node [n10_125_1bics] (v10) at (0.200000, 0.416000)	{10};
	\node [n11_125_1bics] (v11) at (0.266667, 0.416000)	{11};
	\node [n12_125_1bics] (v12) at (0.333333, 0.416000)	{12};
	\node [n13_125_1bics] (v13) at (0.133333, 0.208000)	{$\begin{array}{@{}c@{}} 13 \\ (58\%) \end{array}$};
	\node [n14_125_1bics] (v14) at (0.678788, 0.624000)	{14};
	\node [n15_125_1bics] (v15) at (0.775758, 0.624000)	{15};
	\node [n16_125_1bics] (v16) at (0.400000, 0.416000)	{16};
	\node [n17_125_1bics] (v17) at (0.872727, 0.624000)	{17};
	\node [n18_125_1bics] (v18) at (0.466667, 0.416000)	{18};
	\node [n19_125_1bics] (v19) at (0.533333, 0.416000)	{19};
	\node [n20_125_1bics] (v20) at (0.600000, 0.416000)	{20};
	\node [n21_125_1bics] (v21) at (0.266667, 0.208000)	{21};
	\node [n22_125_1bics] (v22) at (0.969697, 0.624000)	{22};
	\node [n23_125_1bics] (v23) at (0.666667, 0.416000)	{23};
	\node [n24_125_1bics] (v24) at (0.733333, 0.416000)	{24};
	\node [n25_125_1bics] (v25) at (0.800000, 0.416000)	{25};
	\node [n26_125_1bics] (v26) at (0.400000, 0.208000)	{26};
	\node [n27_125_1bics] (v27) at (0.866667, 0.416000)	{27};
	\node [n28_125_1bics] (v28) at (0.933333, 0.416000)	{28};
	\node [n29_125_1bics] (v29) at (0.533333, 0.208000)	{29};
	\node [n30_125_1bics] (v30) at (1.000000, 0.416000)	{30};
	\node [n31_125_1bics] (v31) at (0.666667, 0.208000)	{31};
	\node [n32_125_1bics] (v32) at (0.800000, 0.208000)	{32};
	\node [n33_125_1bics] (v33) at (0.933333, 0.208000)	{33};
	\node [n34_125_1bics] (v34) at (0.533333, 0.000000)	{34};

	\draw [->] (v1) -- (v2);
	\draw [->] (v1) -- (v3);
	\draw [->] (v1) -- (v4);
	\draw [->] (v1) -- (v6);
	\draw [->] (v1) -- (v7);
	\draw [->] (v1) -- (v9);
	\draw [->] (v1) -- (v14);
	\draw [->] (v1) -- (v15);
	\draw [->] (v1) -- (v17);
	\draw [->] (v1) -- (v22);
	\draw [->] (v2) -- (v5);
	\draw [->] (v2) -- (v8);
	\draw [->] (v2) -- (v10);
	\draw [->] (v2) -- (v16);
	\draw [->] (v2) -- (v18);
	\draw [->] (v2) -- (v23);
	\draw [->] (v3) -- (v5);
	\draw [->] (v3) -- (v11);
	\draw [->] (v3) -- (v19);
	\draw [->] (v3) -- (v24);
	\draw [->] (v4) -- (v5);
	\draw [->] (v4) -- (v12);
	\draw [->] (v4) -- (v20);
	\draw [->] (v4) -- (v25);
	\draw [->] (v5) -- (v13);
	\draw [->] (v5) -- (v21);
	\draw [->] (v5) -- (v26);
	\draw [->] (v6) -- (v8);
	\draw [->] (v6) -- (v11);
	\draw [->] (v6) -- (v27);
	\draw [->] (v7) -- (v8);
	\draw [->] (v7) -- (v12);
	\draw [->] (v7) -- (v28);
	\draw [->] (v8) -- (v13);
	\draw [->] (v8) -- (v29);
	\draw [->] (v9) -- (v10);
	\draw [->] (v9) -- (v11);
	\draw [->] (v9) -- (v12);
	\draw [->] (v9) -- (v30);
	\draw [->] (v10) -- (v13);
	\draw [->] (v10) -- (v31);
	\draw [->] (v11) -- (v13);
	\draw [->] (v11) -- (v32);
	\draw [->] (v12) -- (v13);
	\draw [->] (v12) -- (v33);
	\draw [->] (v13) -- (v34);
	\draw [->] (v14) -- (v16);
	\draw [->] (v14) -- (v19);
	\draw [->] (v14) -- (v27);
	\draw [->] (v15) -- (v16);
	\draw [->] (v15) -- (v20);
	\draw [->] (v15) -- (v28);
	\draw [->] (v16) -- (v21);
	\draw [->] (v16) -- (v29);
	\draw [->] (v17) -- (v18);
	\draw [->] (v17) -- (v19);
	\draw [->] (v17) -- (v20);
	\draw [->] (v17) -- (v30);
	\draw [->] (v18) -- (v21);
	\draw [->] (v18) -- (v31);
	\draw [->] (v19) -- (v21);
	\draw [->] (v19) -- (v32);
	\draw [->] (v20) -- (v21);
	\draw [->] (v20) -- (v33);
	\draw [->] (v21) -- (v34);
	\draw [->] (v22) -- (v23);
	\draw [->] (v22) -- (v24);
	\draw [->] (v22) -- (v25);
	\draw [->] (v22) -- (v27);
	\draw [->] (v22) -- (v28);
	\draw [->] (v22) -- (v30);
	\draw [->] (v23) -- (v26);
	\draw [->] (v23) -- (v29);
	\draw [->] (v23) -- (v31);
	\draw [->] (v24) -- (v26);
	\draw [->] (v24) -- (v32);
	\draw [->] (v25) -- (v26);
	\draw [->] (v25) -- (v33);
	\draw [->] (v26) -- (v34);
	\draw [->] (v27) -- (v29);
	\draw [->] (v27) -- (v32);
	\draw [->] (v28) -- (v29);
	\draw [->] (v28) -- (v33);
	\draw [->] (v29) -- (v34);
	\draw [->] (v30) -- (v31);
	\draw [->] (v30) -- (v32);
	\draw [->] (v30) -- (v33);
	\draw [->] (v31) -- (v34);
	\draw [->] (v32) -- (v34);
	\draw [->] (v33) -- (v34);
\end{tikzpicture}}
      \scalebox{0.7}{\begin{tikzpicture}[scale=9]
	\node [n1_125_1bic] (v1) at (0.533333, 0.832000)	{1};
	\node [n2_125_1bic] (v2) at (0.096970, 0.624000)	{2};
	\node [n3_125_1bic] (v3) at (0.193939, 0.624000)	{3};
	\node [n4_125_1bic] (v4) at (0.290909, 0.624000)	{4};
	\node [n5_125_1bic] (v5) at (0.066667, 0.416000)	{5};
	\node [n6_125_1bic] (v6) at (0.387879, 0.624000)	{6};
	\node [n7_125_1bic] (v7) at (0.484848, 0.624000)	{7};
	\node [n8_125_1bic] (v8) at (0.133333, 0.416000)	{8};
	\node [n9_125_1bic] (v9) at (0.581818, 0.624000)	{9};
	\node [n10_125_1bic] (v10) at (0.200000, 0.416000)	{10};
	\node [n11_125_1bic] (v11) at (0.266667, 0.416000)	{11};
	\node [n12_125_1bic] (v12) at (0.333333, 0.416000)	{12};
	\node [n13_125_1bic] (v13) at (0.133333, 0.208000)	{$\begin{array}{@{}c@{}} 13 \\ (10\%) \end{array}$};
	\node [n14_125_1bic] (v14) at (0.678788, 0.624000)	{14};
	\node [n15_125_1bic] (v15) at (0.775758, 0.624000)	{15};
	\node [n16_125_1bic] (v16) at (0.400000, 0.416000)	{16};
	\node [n17_125_1bic] (v17) at (0.872727, 0.624000)	{17};
	\node [n18_125_1bic] (v18) at (0.466667, 0.416000)	{18};
	\node [n19_125_1bic] (v19) at (0.533333, 0.416000)	{19};
	\node [n20_125_1bic] (v20) at (0.600000, 0.416000)	{20};
	\node [n21_125_1bic] (v21) at (0.266667, 0.208000)	{21};
	\node [n22_125_1bic] (v22) at (0.969697, 0.624000)	{22};
	\node [n23_125_1bic] (v23) at (0.666667, 0.416000)	{23};
	\node [n24_125_1bic] (v24) at (0.733333, 0.416000)	{24};
	\node [n25_125_1bic] (v25) at (0.800000, 0.416000)	{25};
	\node [n26_125_1bic] (v26) at (0.400000, 0.208000)	{26};
	\node [n27_125_1bic] (v27) at (0.866667, 0.416000)	{27};
	\node [n28_125_1bic] (v28) at (0.933333, 0.416000)	{28};
	\node [n29_125_1bic] (v29) at (0.533333, 0.208000)	{29};
	\node [n30_125_1bic] (v30) at (1.000000, 0.416000)	{30};
	\node [n31_125_1bic] (v31) at (0.666667, 0.208000)	{31};
	\node [n32_125_1bic] (v32) at (0.800000, 0.208000)	{32};
	\node [n33_125_1bic] (v33) at (0.933333, 0.208000)	{33};
	\node [n34_125_1bic] (v34) at (0.533333, 0.000000)	{34};

	\draw [->] (v1) -- (v2);
	\draw [->] (v1) -- (v3);
	\draw [->] (v1) -- (v4);
	\draw [->] (v1) -- (v6);
	\draw [->] (v1) -- (v7);
	\draw [->] (v1) -- (v9);
	\draw [->] (v1) -- (v14);
	\draw [->] (v1) -- (v15);
	\draw [->] (v1) -- (v17);
	\draw [->] (v1) -- (v22);
	\draw [->] (v2) -- (v5);
	\draw [->] (v2) -- (v8);
	\draw [->] (v2) -- (v10);
	\draw [->] (v2) -- (v16);
	\draw [->] (v2) -- (v18);
	\draw [->] (v2) -- (v23);
	\draw [->] (v3) -- (v5);
	\draw [->] (v3) -- (v11);
	\draw [->] (v3) -- (v19);
	\draw [->] (v3) -- (v24);
	\draw [->] (v4) -- (v5);
	\draw [->] (v4) -- (v12);
	\draw [->] (v4) -- (v20);
	\draw [->] (v4) -- (v25);
	\draw [->] (v5) -- (v13);
	\draw [->] (v5) -- (v21);
	\draw [->] (v5) -- (v26);
	\draw [->] (v6) -- (v8);
	\draw [->] (v6) -- (v11);
	\draw [->] (v6) -- (v27);
	\draw [->] (v7) -- (v8);
	\draw [->] (v7) -- (v12);
	\draw [->] (v7) -- (v28);
	\draw [->] (v8) -- (v13);
	\draw [->] (v8) -- (v29);
	\draw [->] (v9) -- (v10);
	\draw [->] (v9) -- (v11);
	\draw [->] (v9) -- (v12);
	\draw [->] (v9) -- (v30);
	\draw [->] (v10) -- (v13);
	\draw [->] (v10) -- (v31);
	\draw [->] (v11) -- (v13);
	\draw [->] (v11) -- (v32);
	\draw [->] (v12) -- (v13);
	\draw [->] (v12) -- (v33);
	\draw [->] (v13) -- (v34);
	\draw [->] (v14) -- (v16);
	\draw [->] (v14) -- (v19);
	\draw [->] (v14) -- (v27);
	\draw [->] (v15) -- (v16);
	\draw [->] (v15) -- (v20);
	\draw [->] (v15) -- (v28);
	\draw [->] (v16) -- (v21);
	\draw [->] (v16) -- (v29);
	\draw [->] (v17) -- (v18);
	\draw [->] (v17) -- (v19);
	\draw [->] (v17) -- (v20);
	\draw [->] (v17) -- (v30);
	\draw [->] (v18) -- (v21);
	\draw [->] (v18) -- (v31);
	\draw [->] (v19) -- (v21);
	\draw [->] (v19) -- (v32);
	\draw [->] (v20) -- (v21);
	\draw [->] (v20) -- (v33);
	\draw [->] (v21) -- (v34);
	\draw [->] (v22) -- (v23);
	\draw [->] (v22) -- (v24);
	\draw [->] (v22) -- (v25);
	\draw [->] (v22) -- (v27);
	\draw [->] (v22) -- (v28);
	\draw [->] (v22) -- (v30);
	\draw [->] (v23) -- (v26);
	\draw [->] (v23) -- (v29);
	\draw [->] (v23) -- (v31);
	\draw [->] (v24) -- (v26);
	\draw [->] (v24) -- (v32);
	\draw [->] (v25) -- (v26);
	\draw [->] (v25) -- (v33);
	\draw [->] (v26) -- (v34);
	\draw [->] (v27) -- (v29);
	\draw [->] (v27) -- (v32);
	\draw [->] (v28) -- (v29);
	\draw [->] (v28) -- (v33);
	\draw [->] (v29) -- (v34);
	\draw [->] (v30) -- (v31);
	\draw [->] (v30) -- (v32);
	\draw [->] (v30) -- (v33);
	\draw [->] (v31) -- (v34);
	\draw [->] (v32) -- (v34);
	\draw [->] (v33) -- (v34);
\end{tikzpicture}}
    }}
  \caption{Results from $100$ simulations as for
    Fig.~\ref{fig:lattice-25} but with sample size $n=125$.}
  \label{fig:lattice-125}
\end{figure}

Next, we consider examples with $10$ and $11$ leaves, in which case
the number of considered models is still tractable.  Writing $m:=|V|$
for the number of leaves, the lattice $\mathcal{P}_T$ has depth $m-1$
with the complete independence model having depth 0 and the
maximal element $\mathbf{M}(T)$ having depth $m-1$. Since the penalty in BIC is
always at least the penalty in sBIC, it holds trivially that BIC will
select the smallest true model more often than sBIC when the smallest
true model is at depth 0; the converse is true if the smallest true
model is at depth $m-1$.  We thus focus on the middle depth and randomly choose 50 trees $T_1,...,T_{50}$ with corresponding randomly chosen subforests $F_1,...,F_{50}$ each at depth
$\lfloor\frac{m-1}{2}\rfloor$. From each subforest which we pick $q_i\in\mathbf{M}(F_i)$ by setting
all edge correlations to 0.6 and all leaf variances to 1; note that
$F_i$ equals the $q$-forest $F^*(q_{i})$.  From each $q_i$, we generate a dataset of
a fixed size $n$ and compare the proportion of times that sBIC and BIC
correctly identify the smallest true model $\mathbf{M}(F_i)$ for $1\leq i\leq 50$.  The
results of these simulations are summarized in Figure
\ref{fig:big_selection_comparison}.  We see that sBIC outperforms BIC
for smaller sample sizes with BIC marginally overtaking sBIC in very
large samples.

\begin{figure}[t!]
              \begin{subfigure}[b]{0.49\textwidth}
                \includegraphics[width=\textwidth]{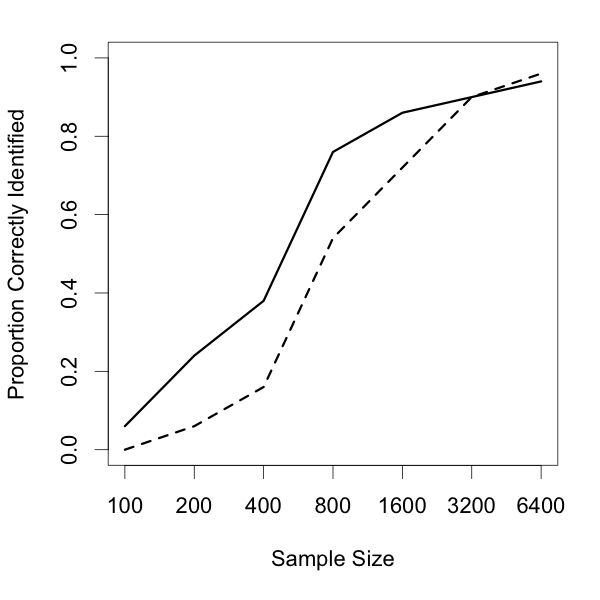}
                \caption{$m=10$ leaves}
                \label{fig:10-leaves-comparison}
        \end{subfigure}
        \hfill
        \begin{subfigure}[b]{0.49\textwidth}
                \includegraphics[width=\textwidth]{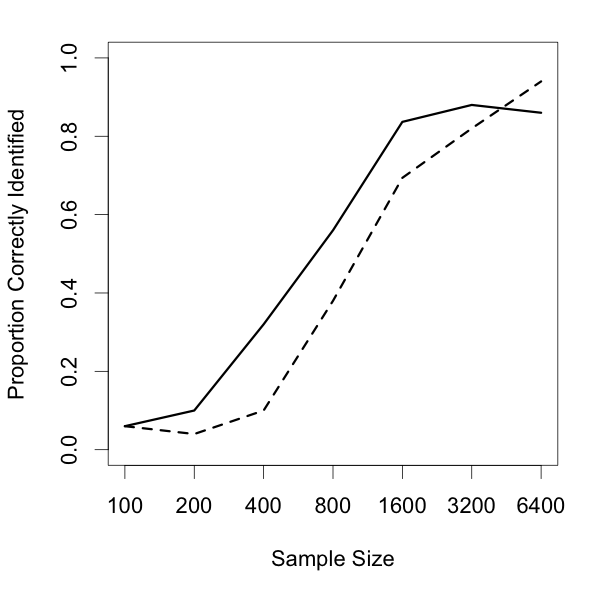}
                \caption{$m=11$ leaves}
                \label{fig:11-leaves-comparison}
        \end{subfigure}
       \centering
       \caption{Proportion of times, out of 50 simulations, sBIC
         (solid line) and BIC (dashed line) select the smallest true model for
         different samples sizes (displayed on a logarithmic
         scale). }
        \label{fig:big_selection_comparison} 
\end{figure}

\begin{rem}
  In the simulations, we evaluated the quality of the forests found by
  BIC and sBIC through the proportion of times the chosen forest
  matched the truth exactly.   An exact match is a very strong
  requirement and one may instead wish to compute the average
  distance, based on some metric, between selected forest and the
  truth. Unfortunately, the most natural metrics in our setting are
  NP-hard to compute and can only be approximated in general
  \cite{hein1996,hickey2008spr}. 
\end{rem}

\subsection{Temperature Data}

We consider a dataset consisting of average daily temperature values
on 310 days from 37 cities across North America, South America,
Africa, and Europe. The data was sourced from the National Climatic
Data Center and compiled in a readily available format by the average
daily temperature archive of the University of Dayton
\cite{dayton-temp}. In order to decorrelate and localize the data we
first perform a seasonality adjustment where we regress each observed
time series of temperature values on a sinusoid corresponding to the
seasons and retain only the residuals. We then consider only the
differences of average temperatures on consecutive days reducing the
number of data points to $n=309$.

In the simulations of Section \ref{sec:sim_study} we performed an
exhaustive search over the lattice of all considered forests, a
strategy which quickly becomes infeasible when increasing the number
of observed variables beyond the low teens. Thus, in order to do model
selection with the 37 observed nodes described above, we need to
formulate an approximate sBIC. There are a plurality of possible
heuristic strategies for producing this approximation involving
combinations of greedy search, truncation of the considered model
space, and simulated annealing. An in-depth exploration of these
strategies and their relative performance is beyond the scope of this
paper, instead we will show the results of using one such method as a
proof of concept.

Our selection strategy, which we call a pruned chain search, has the following form:

\begin{itemize}
	\item[(1)] Generate an approximate maximum likelihood
          trivalent tree structure $T$. 
	\item[(2)] Prune the model space of considered forests to only
          consider a single decreasing path in the poset $\cP_T$ starting at $T$ and ending with the empty forest.
	\item[(3)] Compute the sBIC (or BIC) for models in the pruned
          space and select the highest scoring model. 
\end{itemize}

Note that after (2) the number of considered models will equal to the number of observed variables making computation tractable for many observed nodes. We accomplish (1) using a version of the structural EM algorithm proposed by \cite{friedman2002structural}. To produce the decreasing path of models in (2) we start with $T$ and iteratively select subforests in a greedy fashion:

\begin{itemize} 
	\item[(a)] Suppose that after the $m$th iteration we have constructed the decreasing chain $\mathcal{C}_m$ of forests $T=F_0\supset F_1 \supset F_2 \supset ... \supset F_m$.
	\item[(b)] If $F_m$ is the empty forest then we are done.
	\item[(c)] Otherwise, we extend $\mathcal{C}_m$ to
          $\mathcal{C}_{m+1}$ by adding to it the forest with largest
          BIC-penalized log-likelihood (with log-likelihood maximized
          using the EM algorithm described in Section
          \ref{sec:sim_study}) among all maximal subforests of $F_m$.  
\end{itemize}

We present the results of applying above selection procedure to the
temperature data in Figure \ref{fig:temperature}. Note that the models
selected by the sBIC and BIC are quite similar with the majority of
the connections following our physical intuition that geographically
adjacent cities should have similar temperature fluctuations while
further separated cities should be essentially uncorrelated. For
instance, all three cities in Washington, USA are connected to each
other but to no other cities. The one difference between the the model
selected by sBIC and that selected by the BIC is the connection of
Barbados to the component containing the Bahamas in the sBIC
graph. The distance between these nodes is just far enough to place
this connection on the border between spurious and reasonable. As in
the simulation experiments, we observe sBIC's ability to select
larger models.

\begin{figure}[t!]
              \begin{subfigure}[b]{.85\textwidth}
              \vspace{-5mm}
                \includegraphics[width=\textwidth]{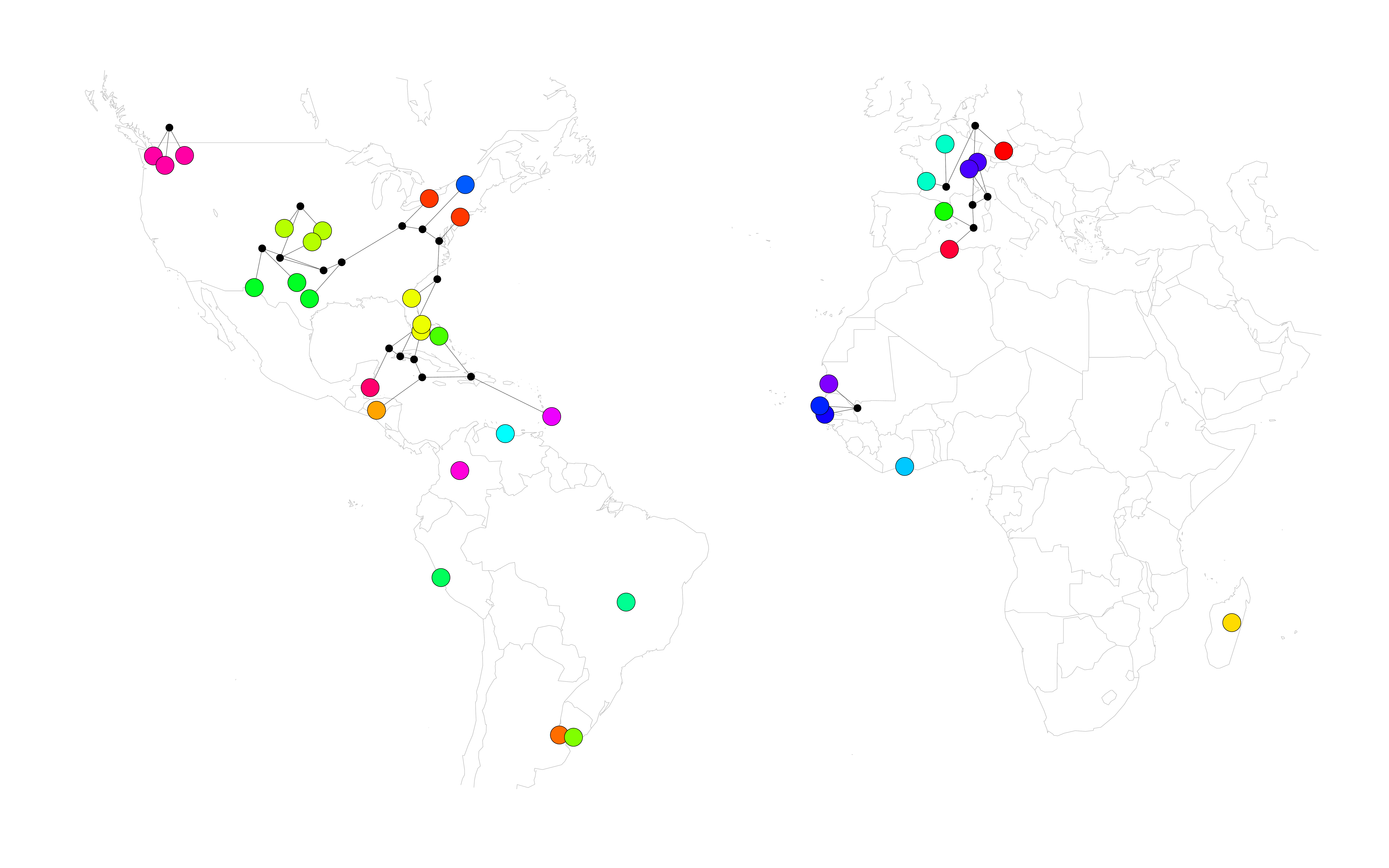}
                \vspace{-8mm}
                \caption{Model chosen by sBIC}
                \label{fig:sbic-map}
        \end{subfigure}
        \begin{subfigure}[b]{.85\textwidth}
        \vspace{-3mm}
                \includegraphics[width=\textwidth]{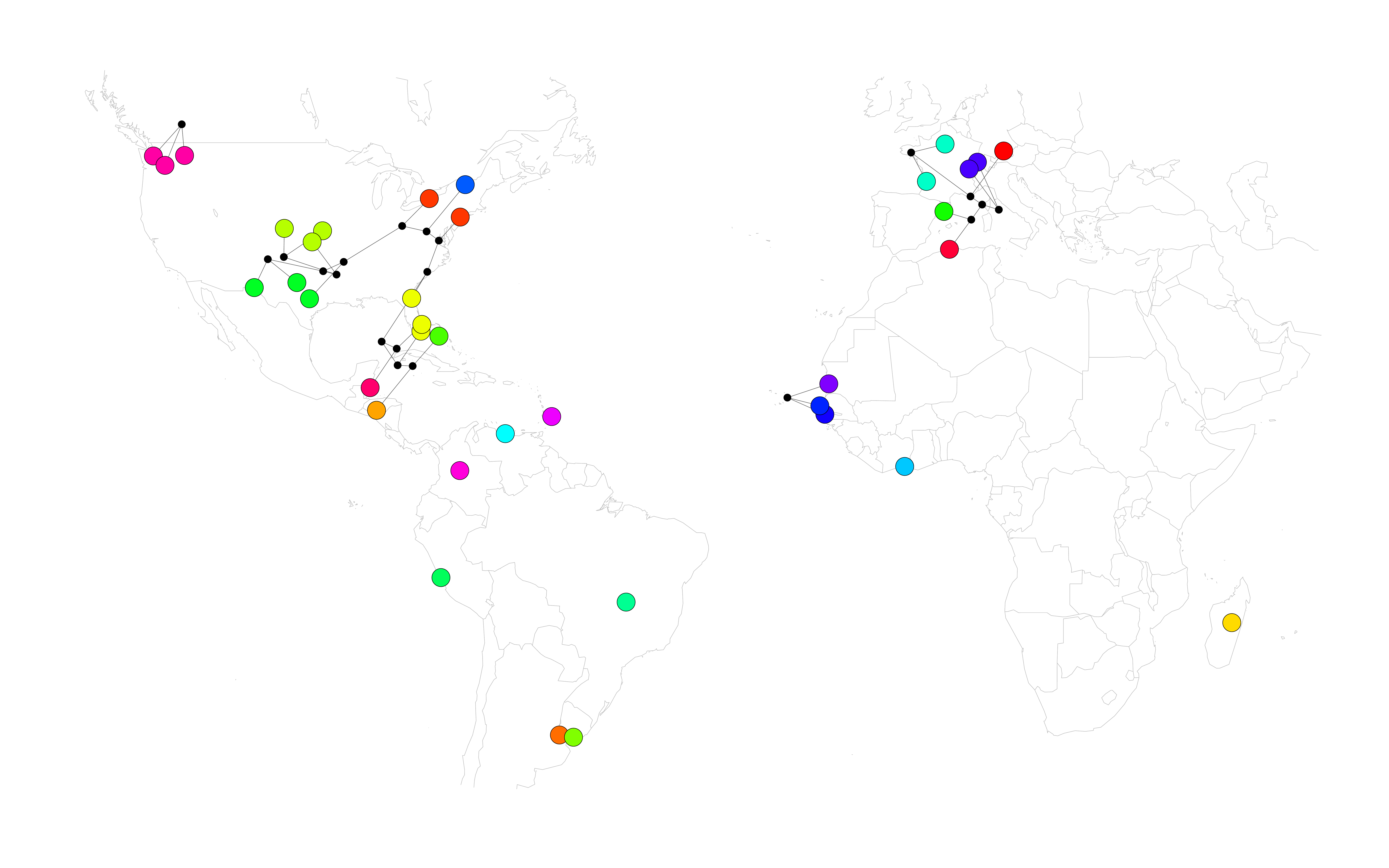}
                 \vspace{-8mm}
                \caption{Model chosen by BIC}
                \label{fig:bic-map}
        \end{subfigure}
       \centering
       
       \caption{The models selected by sBIC and BIC pruned chain search. Each colored node represents an observed node (nodes with the same color are from the same country or US state) and the black nodes correspond to latent variables. The position of colored nodes corresponds to the city from where the data was collected.}
        \label{fig:temperature} 
\end{figure}

\section{Conclusion}
\label{sec:conclusion}

Real log-canonical thresholds and associated multiplicities quantify
the large-sample properties of the marginal likelihood in Bayesian
approaches to model selection.  In this paper, we computed these RLCTs
for Gaussian latent tree and forest models; the main results being
Theorems~\ref{th:main2} and~\ref{th:main3}.  Our computations relied
on the fact that the considered tree and forest models have a monomial
parametrization, which allows one to apply methods from polyhedral
geometry that we presented in Theorem~\ref{th:main1}.

Knowing RLCTs makes it possible to apply a `singular Bayesian
information criterion' (sBIC) that was recently proposed by
\cite{drton:2013:sbic}.  RLCTs provide refined information about the
marginal likelihood and our simulations show that, at least in smaller
problems, the sBIC outperforms the usual BIC of
\cite{schwarz1978edm} that is defined using model dimension alone.
As an exhaustive search over all models becomes quickly infeasible as
the number of observed variables increases, we demonstrated, by example
of a temperature dataset, how the sBIC might be approximated 
and applied to larger problems. In particular, we combined the structural EM 
of \cite{friedman2002structural} with a greedy search methodology to reduce
the number of considered models to a small collection for which the sBIC
can be readily computed.

\appendix

\section{Proof of Theorem \ref{th:main1}}
\label{sec:proof-th:main1}

Let $H$ be the function from~(\ref{eq:sos}).  By assumption, the
`prior' $\varphi:\Omega\to(0,\infty)$ is bounded above and
$\mathcal{V}_\Omega(H)=\{\omega\in\Omega:H(\omega)=0\}$ is compact.
Since $\varphi$ is smooth and positive, $\varphi$ is bounded away from
zero on $\mathcal{V}_\Omega(H)$ and any compact neighborhood of this
zero set.  The poles of the zeta function in (\ref{eq:zeta}) can
be shown to be the same for all such choices of $\varphi$, and we have
${\rm RLCT}_\Omega(H;\varphi)={\rm RLCT}_\Omega(H)$.

Our proof of Theorem~\ref{th:main1} now proceeds in three steps:
\begin{itemize}
\item[Step 1.] Show that ${\rm RLCT}_\Omega(H)={\rm
    RLCT}_{\Omega}(H^0+H^1)$, where $H^0$, $H^1$ are the zero and
  nonzero parts of $H$ that are defined in (\ref{eq:f1}) and
  (\ref{eq:f0}).
\item[Step 2.] Show that  ${\rm RLCT}_{\Omega_1}(H^1)=(\lambda_1,1)$, where $\lambda_1=\codim \mathcal{V}_{\Omega_1}(H^1)$.
\item[Step 3.] Show that
  ${\rm RLCT}_{\Omega_0}(H^0)=(\lambda_0,\mathfrak{m})$, where
  $\lambda_0$ is the $\mathbf{1}$-distance of the Newton polyhedron
  $\Gamma_+(H^0)$ and $\mathfrak{m}$ is the multiplicity (recall
  Definition \ref{def:NP}).
\end{itemize}
Since $H^0$ and $H^1$ are functions of disjoint sets of coordinates
and $\Omega=\Omega_0\times \Omega_1$ is a Cartesian product, it
follows from Remark 7.2(3) in \cite{watanabe_book} and the above Steps
1-3 that
\[
{\rm
  RLCT}_{\Omega}(H^0+H^1)\;=\;(\lambda_0+\lambda_1,(\mathfrak{m}+1)-1)
\;=\;(\lambda_0+\lambda_1,\mathfrak{m}),
\]
which is the claim of Theorem~\ref{th:main1}.

Before moving on to Step 1 we make a definition.  Let
$f,g:\,\Omega\to [0,\infty)$ be two nonnegative functions with common
zero set $\mathcal{V}_\Omega(g)=\mathcal{V}_\Omega(f)$.  Then $f$ and
$g$ are \emph{asymptotically equivalent}, we write $f\sim g$, if there
exist two constants $c,C>0$ and a neighborhood $W$ of
$\mathcal{V}_\Omega(g)=\mathcal{V}_\Omega(f)$ such that
\begin{equation}\label{eq:equivalenceAux}
c f(\omega)\;\leq \; g(\omega)\;\leq\; C f(\omega)
\end{equation}
for all $\omega\in W\cap\Omega$.  Note that $\sim$ is indeed an
equivalence relation.  According to Remark 7.2(1) in
\cite{watanabe_book}, $f\sim g$ implies
${\rm RLCT}_\Omega(f)={\rm RLCT}_\Omega(g)$.

\subsection{Step 1}
First, note that ${\rm RLCT}_\Omega(H)={\rm RLCT}_{W\cap \Omega}(H)$
for any neighborhood $W$ of the compact zero set
$\mathcal{V}_\Omega(H)$.  Choose $W$ sufficiently small such that
$\omega_1,\ldots, \omega_s$ are bounded away from zero on
$W\cap\Omega$.  Next, by definition of the index $r$ in
Section~\ref{sec:monred}, we have that $H=H^1+H^{01}$, where
\[
H^{01}=\sum_{i=r+1}^k \omega^{2u_i}.
\]
When viewed as functions restricted to $W\cap \Omega$, we have
$H^0\sim H^{01}$ because
$$
\bigg(\min_i \prod_{j=1}^s \omega_j^{2u_{ij}}\bigg)\, H^0
\;\leq\; H^{01}\;\leq\; \bigg(\max_i \prod_{j=1}^s
\omega_j^{2u_{ij}}\bigg)\, H^0 
$$
and $\omega_1,\ldots, \omega_s$ are bounded above and bounded away
from zero on the compactum $W\cap\Omega$.  It follows that
${\rm RLCT}_\Omega(H)={\rm RLCT}_{\Omega}(H^0+H^1)$ because
$H^{01}\sim H^0$ implies that $H^1+H^0\sim H^1+H^{01}=H$.

\subsection{Step 2}
To complete Step 2 we will prove the following result.

\begin{prop}
  \label{prop:cidim}
  Suppose that $H$ satisfies (\ref{eq:sos}) with all $c_i^*\neq 0$,
  i.e., $H$ is equal to its nonzero part. Let
  $\mathcal{V}_{\Omega}(H)$ be the zero set of $H$ on $\Omega$. Then
  $$
  \RLCT_{\Omega}(H) = (\codim \mathcal{V}_\Omega(H), 1).
  $$
\end{prop}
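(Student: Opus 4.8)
The statement concerns a function $H$ which coincides with its nonzero part, i.e. $H(\omega) = \sum_{i=1}^k(\omega^{u_i}-c_i^*)^2$ with every $c_i^*\neq 0$, on a product-of-intervals domain $\Omega$, with compact zero set $\mathcal V_\Omega(H)$. I want to show $\RLCT_\Omega(H) = (\codim\mathcal V_\Omega(H),1)$. The guiding idea is that near a point of the zero fiber none of the parameters $\omega_1,\dots,\omega_s$ appearing in the monomials can vanish, so after passing to a small neighborhood $W$ and working one orthant at a time, the change of variables $\omega_j = \pm e^{x_j}$ turns each factor $\omega^{u_i}$ into $\pm\exp(\langle u_i, x\rangle)$, and hence each summand $(\omega^{u_i}-c_i^*)^2$ becomes, up to a smooth nonvanishing factor, $(\langle u_i, x\rangle - \log|c_i^*|)^2$ — a squared affine function (here the sign bookkeeping forces us to restrict to the orthant where the sign of $\omega^{u_i}$ matches that of $c_i^*$, which is exactly where the fiber lives). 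Thus in the new coordinates $H$ is asymptotically equivalent to a sum of squares of affine forms $\ell_i(x)^2$.

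\textbf{Key steps.} First I would reduce to a neighborhood $W$ of $\mathcal V_\Omega(H)$ on which all $\omega_j$, $j\le s$, are bounded away from $0$ and from $\infty$, using that $\RLCT_\Omega(H)=\RLCT_{W\cap\Omega}(H)$ and that $\varphi$ may be taken $\equiv 1$ (as already argued in the excerpt). Second, decompose $W\cap\Omega$ into the finitely many pieces determined by the signs of $\omega_1,\dots,\omega_s$; on each piece apply the diffeomorphism $\omega_j\mapsto x_j = \log\omega_j$ (with the appropriate sign convention), which is smooth with smooth inverse and bounded Jacobian on the relevant compactum, so it preserves RLCTs. Third, on each piece observe $(\omega^{u_i}-c_i^*)^2 = (\pm e^{\langle u_i,x\rangle} - c_i^*)^2$; if the sign $\pm$ equals $\mathrm{sign}(c_i^*)$ this factors as $e^{2\langle u_i,x\rangle}\big(1 - |c_i^*|e^{-\langle u_i,x\rangle}\big)^2$, and since $e^{2\langle u_i,x\rangle}$ is bounded above and below, and $1-|c_i^*|e^{-t}$ has a simple zero at $t=\log|c_i^*|$, this summand is $\sim \big(\langle u_i,x\rangle - \log|c_i^*|\big)^2$ near the fiber; if the sign is wrong, that summand is bounded away from $0$ there, so the whole piece contains no part of the fiber and may be discarded. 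Fourth, having reduced to $\sum_i \ell_i(x)^2$ with $\ell_i$ affine, this is a quadratic form in the affine coordinates: after an affine change of variables it becomes $y_1^2+\cdots+y_c^2$ where $c$ is the rank of the system, i.e. the codimension of the common zero locus of the $\ell_i$, which is exactly $\codim\mathcal V$. By the regular case (Example~\ref{ex:reg-case}, or directly the Gaussian-integral computation), $\RLCT(y_1^2+\cdots+y_c^2) = (c,1)$. Finally, take the minimum over the finitely many orthant pieces: since every surviving piece contributes the same pair $(\codim\mathcal V,1)$ — the codimension is a local invariant of the (smooth, since it is cut out by independent affine equations in log-coordinates) fiber, the same on each chart — the RLCT of $H$ is $(\codim\mathcal V_\Omega(H),1)$, using that the RLCT of a sum over a finite cover is the "smallest" of the pieces in the ordering of Definition~\ref{def:rlct}.

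\textbf{Main obstacle.} The routine analytic facts (invariance of RLCT under diffeomorphism, under $f\sim g$, and behaviour under finite covers / Cartesian products) are all quoted from Remark 7.2 of Watanabe, so the real content is the bookkeeping in the orthant decomposition: one must check that on each orthant where the fiber is nonempty, exactly the factors with the matching sign are "active" and the rest are harmless, and — crucially — that the codimension of the zero set is the same constant on every such orthant and equals the global $\codim\mathcal V_\Omega(H)$. Concretely, the fiber in log-coordinates is an affine subspace $\{x : \langle u_i,x\rangle = \log|c_i^*|\text{ for active }i\}$ intersected with an open box; its dimension is $s$ minus the rank of the active $u_i$'s, and one has to see this rank does not depend on which sign-orthant one is in (it does not, since the defining equations are translates of the same linear system). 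Handling the degenerate situation where some orthant's fiber piece is empty while another's is not, and making sure the $\codim$ appearing in the statement is unambiguous, is the one place demanding care; everything else is an application of the cited machinery.
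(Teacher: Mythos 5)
Your reduction to sums of squared affine forms via $\omega_j = \pm e^{x_j}$ follows the paper exactly (this is Lemma~\ref{lem:tologs}). The gap is in the fourth step, where you claim that after an affine change of coordinates the function becomes $y_1^2+\cdots+y_c^2$ with $c$ equal to the rank of the linear forms, and that this rank is $\codim \mathcal V_\Omega(H)$ with $\RLCT = (c,1)$. That chain of equalities is false in general: the rank of the $u_i$ gives the codimension of the zero affine space in the \emph{ambient} $\R^s$, but $\codim\mathcal V_\Omega(H)$ is the codimension of the intersection of that affine space with the box $\Omega$, and these differ precisely when the affine space meets $\Omega$ only along its boundary. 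The paper's own example after Proposition~\ref{prop:cidim} makes this vivid: $g_2(\omega)=(\omega_1+\omega_2)^2$ on $[0,1]^2$ has rank $1$, yet $\mathcal V_\Omega(g_2)=\{0\}$ has codimension $2$ and $\RLCT_\Omega(g_2)=(2,1)$, not $(1,1)$. Your phrase ``intersected with an \emph{open} box'' quietly rules this out, but $\Omega$ in the statement is a general product of closed intervals, and after the log change the fiber can well sit on the boundary of the transformed box. The regular-case computation you invoke (Example~\ref{ex:reg-case}) only applies when the zero locus meets the interior.

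The paper closes this gap with Proposition~\ref{thm:LinearInterval}, which localizes at each point of $\mathcal V_\Omega(g)$, splits into Case~1 (interior point, where your argument is valid) and Case~2 (boundary point, where $\mathcal V_{\Omega_+}(g)$ is a face of the box and $g$ is shown, via sign normalization and a Jensen-type bound, to be asymptotically equivalent to the sum of squares of precisely those coordinates cutting out that face). The codimension that actually controls the RLCT is then the number of facets of the box containing the zero set, which can exceed the rank of the $u_i$. So your proof needs this additional case analysis to be complete; the claim that ``every surviving piece contributes the same pair'' and that the codimension is just the rank would have to be replaced by the boundary analysis in the paper's Case~2. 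Everything up through the log change and the identification of the active terms on each sign-orthant is correct and matches the paper.
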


Before turning to the proof, we exemplify the application of
Proposition~\ref{prop:cidim}.

\begin{exmp}
  Let $\Omega = [0,1] \times [0,1]$ be the unit square in $\R^2$, and
  consider two functions $g_1(\omega)=(\omega_1-\omega_2)^2$ and
  $g_2(\omega)=(\omega_1+\omega_2)^2$.  The zero set of either
  function is a line in $\R^2$.  When restricting to $\Omega$, the
  zero set $\mathcal{V}_\Omega(g_1)$ is a line segment and of
  codimension one.  The zero set $\mathcal{V}_\Omega(g_2)$, on the
  other hand, consists only of the origin and is of codimension two.
  We have $\RLCT_\Omega(g_1) = (1,1)$ but $\RLCT_\Omega(g_2) = (2,1)$.
\end{exmp}

To prove Proposition~\ref{prop:cidim}, note first that when $H$ is
equal to its nonzero part and $\mathcal{V}_\Omega(H)$ is compact,
$\RLCT_\Omega(H)$ is equal to the RLCT of $H$ over a compact set on
which all coordinates of the argument $\omega$ are bounded away from
zero.  Partition this compactum into the intersections with each one
of the $2^d$ orthants in $\R^d$.  Then $\RLCT_\Omega(H)$ is
the minimum RLCT in any orthant.  Similarly, the
codimension of $\mathcal{V}_\Omega(H)$ is the minimum of any
codimension obtained from intersection with an orthant.  We may thus
consider one orthant at a time.  Changing signs as needed to make all
coordinates positive, the following lemma becomes applicable.

\begin{lem}
  \label{lem:tologs}
  Let $W=[a_1,b_1]\times\cdots\times [a_s,b_s]$ with
  $0<a_i<b_i<\infty$. Let
  $\log W=[\log a_1,\log b_1]\times\cdots\times [\log a_s,\log b_s]$.
  If $H$ satisfies (\ref{eq:sos}) with all
  $c_{i}^*>0$ and $\mathcal{V}_W(H)$ is nonempty, then
  $$
  {\rm RLCT}_W(H)= {\rm
    RLCT}_{\log W}\left(\sum_{i=1}^r\left(u_{i}^T
  \omega-\log c_i^*\right)^2\right).
  $$
\end{lem}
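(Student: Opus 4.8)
The plan is to pass to logarithmic coordinates, in which the exponent vectors $u_i$ act linearly, and then replace the resulting exponentials by their linearizations using the notion of asymptotic equivalence from~(\ref{eq:equivalenceAux}). Write $H(\omega)=\sum_{i=1}^r(\omega^{u_i}-c_i^*)^2$ with all $c_i^*>0$ as in the hypothesis, and let $x=(x_1,\dots,x_s)$ denote the coordinate on $\log W$. The map $\Psi\colon\log W\to W$, $\Psi(x)=(e^{x_1},\dots,e^{x_s})$, is an analytic diffeomorphism (with componentwise-$\log$ inverse), whose Jacobian $J(x)=e^{x_1+\dots+x_s}$ is smooth, everywhere positive, and---since $\log W$ is compact---bounded above and bounded away from $0$. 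As $\omega^{u_i}\circ\Psi=e^{u_i^Tx}$, the substitution $\omega=\Psi(x)$ in the zeta integral~(\ref{eq:zeta}) yields $\RLCT_W(H)=\RLCT_{\log W}(\tilde H;J)$, where $\tilde H(x):=\sum_{i=1}^r(e^{u_i^Tx}-c_i^*)^2$ and $\mathcal{V}_{\log W}(\tilde H)=\Psi^{-1}(\mathcal{V}_W(H))$ is compact and nonempty. By the prior-independence established at the start of this appendix, the smooth positive bounded weight $J$ may be dropped, so $\RLCT_W(H)=\RLCT_{\log W}(\tilde H)$.

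It then remains to show that $\tilde H$ is asymptotically equivalent, on $\log W$, to $G(x):=\sum_{i=1}^r L_i(x)^2$ with $L_i(x):=u_i^Tx-\log c_i^*$; the lemma follows from Remark 7.2(1) in \cite{watanabe_book}. The identity $e^{u_i^Tx}-c_i^*=c_i^*\bigl(e^{L_i(x)}-1\bigr)$ shows that the $i$th summand of $\tilde H$ and the $i$th summand of $G$ vanish on the same hyperplane $\{L_i=0\}$, so $\tilde H$ and $G$ share the zero set $\mathcal{V}_{\log W}(\tilde H)=\log W\cap\{x:L_i(x)=0,\ i=1,\dots,r\}$, which is nonempty by hypothesis. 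Moreover $t\mapsto(e^t-1)/t$ extends to a continuous, everywhere positive function on $\R$, hence is bounded below by some $c>0$ and above by some $C>0$ on any compact interval containing $\bigcup_i L_i(\log W)$; therefore $c^2(c_i^*)^2L_i(x)^2\le(e^{u_i^Tx}-c_i^*)^2\le C^2(c_i^*)^2L_i(x)^2$ for every $i$ and every $x\in\log W$. Summing over $i$ and inserting $\min_i(c_i^*)^2$ and $\max_i(c_i^*)^2$ gives constants $0<c'\le C'$ with $c'G(x)\le\tilde H(x)\le C'G(x)$ on all of $\log W$, a fortiori on a neighborhood of the common zero set. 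Thus $\tilde H\sim G$, whence $\RLCT_{\log W}(\tilde H)=\RLCT_{\log W}(G)$, which is the claimed identity.

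Everything here is elementary; the only point needing care is that asymptotic equivalence need not survive the formation of sums, so it is essential that the two-sided bounds relating $(e^{u_i^Tx}-c_i^*)^2$ to $L_i(x)^2$ be \emph{uniform} in both the index $i$ and the point $x\in\log W$---this uniformity, furnished by compactness of $\log W$, is precisely what lets the per-summand estimates be added. Compactness of $W$ is also what allows us to discard the Jacobian weight when passing between $\RLCT_{\log W}(\tilde H;J)$ and $\RLCT_{\log W}(\tilde H)$.
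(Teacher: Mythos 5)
Your proof is correct and follows essentially the same route as the paper's: pass to logarithmic coordinates (absorbing the Jacobian into the prior), then show the resulting sum of squares of exponentials is asymptotically equivalent to the corresponding sum of squares of linear forms. Your treatment is somewhat more careful than the paper's terse appeal to ``a quadratic Taylor approximation'': by factoring $e^{u_i^Tx}-c_i^*=c_i^*\bigl(e^{L_i(x)}-1\bigr)$ and using compactness to get two-sided bounds on $(e^t-1)/t$ that are uniform in both $i$ and $x$, you explicitly justify why the per-summand equivalences can be added, a point the paper leaves implicit.
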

The result follows from a change of coordinates and an argument about
asymptotic equivalence that has been used in other contexts.
We include the proof of
the lemma for sake of completeness.

\begin{proof}
  Change coordinates via the substitution $\tilde \omega=\log(\omega)$,
  where the logarithm is applied entry-wise.  Since the Jacobian of
  this transformation is bounded above and bounded away from zero on
  $W$, it may be ignored in the computation of the RLCT and thus
  \[
  {\rm RLCT}_W(H)= {\rm
    RLCT}_{\log W}\left(\sum_{i=1}^r\left(e^{u_{i}^T \omega}-
  e^{\log c_i^*}\right)^2\right).
  \]
  Since $W$, and thus also $\log W$, is compact, each of the $r$
  linear combinations $u_i^T \omega$ takes its values in a compact
  set.  Restricted to this compact set, the function
  \[
  h_1(x) \;=\;\sum_{i=1}^r \left(e^{x_i}-
  e^{\log c_i^*}\right)^2
  \]
  is asymptotically equivalent to the sum of squares
  \[
  h_2(x) \;=\;\sum_{i=1}^r \left(x_i-
  \log c_i^*\right)^2,
  \]
  as can been seen by a quadratic Taylor approximation to $h_1$ around
  the point $(\log c_1^*,\dots,\log c_r^*)$.  Since asymptotically
  equivalent functions have the same RLCT, the claim is proven.
\end{proof}

By an application of Lemma~\ref{lem:tologs}, the proof of Proposition
\ref{prop:cidim} reduces to an analysis of sums of squares of linear
forms, that is, functions of the form
\begin{equation}
  \label{eq:sos-linear-forms}
  g(\omega)\;=\;\sum_{i=1}^{r}
  (u_{i}^{T}\omega-C^*_i)^2
\end{equation}
with $C_i^*\in \R$ and $u_{i}\in \R^{d}$.   Proposition
\ref{prop:cidim} thus follows from
Proposition~\ref{thm:LinearInterval} below.  Note that 
$\mathcal{V}_\Omega(g)$ is a polyhedron, which we assume to
be nonempty.

\begin{prop} \label{thm:LinearInterval} If $g:\Omega\to[0,\infty)$ is
  a sum of squares of linear forms as in~(\ref{eq:sos-linear-forms})
  and $\Omega$ is a product of intervals, then
  $\RLCT_\Omega(g) = (\codim \mathcal{V}_\Omega(g), 1)$.
\end{prop}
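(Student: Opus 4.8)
\emph{Proof idea.} The plan is to peel away the affine data, localise around the zero set, and reduce to a single computation over a polyhedral cone that is carried out in polar coordinates transverse to the zero set. Write $g(\omega)=\|U\omega-C^*\|^2$ with $U$ the matrix whose rows are $u_i^\top$ and $C^*=(C_i^*)$, put $\rho=\rank U$, and let $A=\{\omega:U\omega=C^*\}$, an affine subspace of codimension $\rho$ which is nonempty because $\mathcal V_\Omega(g)\neq\emptyset$. Since $g$ and $\mathrm{dist}(\cdot,A)^2$ have the same zero set $A$ and are pinched against each other by the squares of the extreme nonzero singular values of $U$, they are asymptotically equivalent in the sense used in Step~1 above, so $\RLCT_\Omega(g)=\RLCT_\Omega(\mathrm{dist}(\cdot,A)^2)$, while $\mathcal V_\Omega(g)=A\cap\Omega=:P$ is the polytope in question. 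As the RLCT only sees a neighbourhood of the compact set $P$, I may and do assume $\Omega$ bounded (intersecting with a large box if necessary).

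Next I would localise. Cover the compact polytope $P$ by finitely many balls $B_k$ about points $p_k\in P$, chosen small enough that $\Omega\cap B_k=(p_k+C_k)\cap B_k$, where $C_k$ is the tangent cone of $\Omega$ at $p_k$ — a full-dimensional product of half-lines and lines, since $\Omega$ is a product of non-degenerate intervals. Using a subordinate partition of unity together with the locality and additivity properties of the zeta function (Remark 7.2 in \cite{watanabe_book}), $\RLCT_\Omega(\mathrm{dist}(\cdot,A)^2)$ equals the minimum, in the order of Definition~\ref{def:rlct}, of the local RLCTs $\RLCT_{(p_k+C_k)\cap B_k}(\mathrm{dist}(\cdot,A)^2)$; no cancellation occurs at the smallest pole because each local contribution has a positive leading coefficient.

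The core of the argument is the cone computation. Fix $p\in P$, translate it to the origin, and choose orthonormal coordinates $(v,w)\in\R^\rho\times\R^{d-\rho}$ with the direction space of $A$ equal to $\{v=0\}$, so that $\mathrm{dist}(\cdot,A)^2=|v|^2$ on $C:=C_p$. Polar coordinates $v=s\theta$ give
\[
\int_{C\cap B}|v|^{-z}\,dv\,dw\;=\;\int_{S^{\rho-1}}\!\int_0^{\infty} s^{\rho-1-z}\,\mathrm{vol}_{d-\rho}\!\big(\{w:(s\theta,w)\in C\cap B\}\big)\,ds\,d\theta .
\]
Because $C$ is a cone, for small $s$ the slice $\{w:(s\theta,w)\in C\cap B\}$ is (up to the harmless truncation by $B$) the $s$-dilate, in the $(d-\rho)-m$ directions transverse to the face $C\cap\{v=0\}$, of a fixed set, so its $(d-\rho)$-volume equals $s^{(d-\rho)-m}\Psi(\theta)$, where $m=\dim(C\cap\{v=0\})$ and $\Psi\ge 0$ is bounded and strictly positive for $\theta$ pointing into the full-dimensional cone obtained by projecting $C$ onto the $v$-coordinates. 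Substituting, the exponent of $s$ becomes $d-m-1-z$, so the smallest pole is at $z=d-m$ with multiplicity one, and it survives the angular integration exactly because $\int_{S^{\rho-1}}\Psi\,d\theta>0$. Hence $\RLCT_{(p+C_p)\cap B}(g)=(d-m_p,1)$ with $m_p=\dim\!\big(C_p\cap\{v=0\}\big)$.

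Finally I would assemble the pieces: $\RLCT_\Omega(g)=\min_k(d-m_{p_k},1)=(d-\max_k m_{p_k},\,1)$. A short tangent-cone computation shows that $C_p\cap\{v=0\}$ equals the tangent cone $T_pP$ of $P$ at $p$ for every $p\in P$, whence $\max_{p\in P}m_p=\dim P$, attained at any relative-interior point of $P$; therefore $\RLCT_\Omega(g)=(d-\dim\mathcal V_\Omega(g),\,1)=(\codim\mathcal V_\Omega(g),\,1)$, as claimed. I expect the main obstacle to be making the cone slice-volume asymptotics fully rigorous — pinning down the exponent $(d-\rho)-m$ of $s$ and verifying that the leading angular density $\Psi$ is nonnegative and not identically zero — together with the bookkeeping needed to pass from the global RLCT to the minimum of the local ones without losing the multiplicity.
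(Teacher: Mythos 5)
Your argument is correct in its conclusion and structurally sound, but it takes a genuinely different route from the paper. The paper localizes at a point of $\mathcal{V}_\Omega(g)$, splits the local box into orthants, and then proves by direct inequalities (expanding squares, discarding nonnegative cross terms, and an application of Jensen) that on each orthant $g$ is pinched between positive multiples of $\omega_1^2+\cdots+\omega_s^2$, where $s$ is the number of facets of the orthant containing the face $\mathcal{V}_\Omega(g)$; this reduces everything to the already-understood regular case, so no polar integral is ever computed. You instead replace $g$ by $\mathrm{dist}(\cdot,A)^2$ outright, localize to tangent cones of the box, and evaluate the local zeta function directly with polar coordinates transverse to $A$. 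Both schemes rest on the same localization principle (\cite[Prop.~2.5, 3.2]{shaowei_rlct} or \cite[Remark~2.14]{watanabe_book}), and your identification $C_p\cap\{v=0\}=T_p(\mathcal{V}_\Omega(g))$ is right; also note that $\dim T_pK=\dim K$ at \emph{every} $p$ in a convex set $K$, so your $m_p$ is actually constant and the minimum over $p$ is trivial, matching the paper's claim that the local RLCT is the same at every point of the zero set.

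The gap you flag is real and is precisely what the paper's route is designed to avoid. The slice-volume statement $\mathrm{vol}_{d-\rho}\bigl(\{w:(s\theta,w)\in C\cap B\}\bigr)=s^{(d-\rho)-m}\Psi(\theta)+\cdots$ requires knowing that the recession cone of the slice polyhedron $S_1(\theta)$ is exactly $C\cap\{v=0\}$ (true, but must be proved), that the leading coefficient $\Psi(\theta)$ is measurable, integrable, and has positive integral over $S^{\rho-1}$, and that the expansion is uniform enough in $\theta$ that no pole $\le d-m$ is introduced by $\theta$ near the boundary of $\pi_v(C)$. None of this is hard to believe for polyhedral cones, but it is real work, and in particular the multiplicity-one claim hinges on ruling out logarithmic contributions from the angular integral. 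The paper's two-sided bound $c\,(\omega_1^2+\cdots+\omega_s^2)\le g\le C\,(\omega_1^2+\cdots+\omega_s^2)$ on each orthant sidesteps all of this at the cost of the orthant bookkeeping and the sign normalization needed so that every $u_i^\top\omega\ge 0$ on $\Omega_+$. If you want to pursue your route, the cleanest fix is to observe that $\mathrm{vol}(sS_1(\theta)\cap B)$ is a piecewise-polynomial function of $s$ with no constant term through degree $(d-\rho)-m-1$, and to justify dominated convergence in $\theta$; otherwise, the paper's reduction to the regular case is the more economical path.
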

\begin{proof}
  By \cite[Prop.~2.5, Prop.~3.2]{shaowei_rlct}, or also
  \cite[Remark 2.14]{watanabe_book}, $\RLCT_\Omega(g)$ is the minimum
  of local thresholds $\RLCT_{\Omega(x)}(g)$ over
  $x \in \mathcal{V}_\Omega(g)$.  Here, each set
  $\Omega(x)=W(x)\cap\Omega$, where $W(x)$ is a sufficiently small
  neighbhorhood of $x$.  We will show that
  $\RLCT_{\Omega(x)}(g)=(\codim \mathcal{V}_\Omega(g), 1)$ for $x \in
  \mathcal{V}_\Omega(g)$, which implies our claim.

  Consider any point $x \in \mathcal{V}_\Omega(g)$.  By translation,
  we may assume without loss of generality that $x=0$ and
  $g(\omega)=\sum_{i} (u_{i}^{T}\omega)^{2}$.  We may then take the
  neighborhood $\Omega(0)$ to be equal to
  $\{\omega \in \Omega: \max_i |\omega_i| \leq \varepsilon\}$ for
  sufficiently small $\varepsilon > 0$.

  When partitioning $\Omega(0)$ into orthants, the codimension of
  $\mathcal{V}_\Omega(g)$ is the minimum of the codimensions of the
  intersection between $\mathcal{V}_\Omega(g)$ and each one of the
  orthants.  Furthermore, $\RLCT_{\Omega(0)}(g)$ is equal to the
  smallest RLCT of $g$ over any of these orthants.  Therefore,
  changing the signs of the coordinates $\omega_i$ as needed, we are
  left with checking that $\RLCT_{\Omega_+}(g)$ is given by the
  codimension of $\mathcal{V}_{\Omega_+}(g)$ for
  $\Omega_+ = \{\omega \in \Omega : 0 \leq \omega_i \leq \varepsilon
  \text{ for all }i=1,\dots, d\}$
  and $g(\omega)=\sum_{i} (u_{i}^{T}\omega)^{2}$.

  {\em Case 1.} If $\cV_{{\Omega_{+}}}(g)$ intersects the interior of
  $\Omega_{+}$, then we may pick any point $x_{+}$ in this
  intersection and consider $\Omega_{+}$ as a neighborhood of $x_{+}$.
  After a change of coordinates, we have
  $g(\omega)=\omega_1^2+\dots+\omega_s^2$, where $s$ is the
  codimension of $\mathcal{V}_{\Omega_+}(g)$.  By
  Example~\ref{ex:reg-case}, $\RLCT_{\Omega_{+}}(g)=(s,1)$, which was
  to be shown.

  {\em Case 2.} Suppose now that $\cV_{{\Omega_{+}}}(g)$ is contained
  in the boundary of ${\Omega_{+}}$. Since the zero set of $g$ on all
  of $\R^d$ is a linear space, $\cV_{{\Omega_{+}}}(g)$ is in fact a
  face of ${{\Omega_{+}}}$, and each $u_{i}^{T}\omega$ is a supporting
  hyperplane of $\Omega_{+}$.  In particular, after appropriate sign
  changes, we may assume that $u_{i}^{T}\omega\geq 0$ on $\Omega_{+}$.
  The codimension of $\cV_{{\Omega_{+}}}(g)$ is equal to the number,
  say $s$, of facets of $\Omega_+$ containing it.  Without loss of
  generality, we may assume that these facets are given by
  $\omega_{1}=0$, $\omega_{2}=0$, \ldots, $\omega_{s}=0$.  This
  implies that all $u_{i}$ have nonzero entries only in the first $s$
  coordinates.  We now show that when restricted to $\Omega_+$,
  the functions $g(\omega)$ and
  $f(\omega)=\omega_1^2+\dots+\omega_s^2$ are asymptotically
  equivalent; recall~(\ref{eq:equivalenceAux}).

  To show that on $\Omega_+$, the function $g$ can be bounded from
  below by a positive multiple of $f$, note that the fact that
  $u_{i}^{T}\omega\geq 0$ on $\Omega_+$ implies that all $u_{i}$ have
  nonnegative entries.  Hence,
  $$
  \sum_{i=1}^{r} (u_{i}^{T}\omega)^{2}\;=\;
  \sum_{i=1}^{r}\Bigg(\sum_{j=1}^{s}u_{ij}\omega_{j}\Bigg)^{2}
  \;\geq\; \sum_{j=1}^{s} \left(\sum_{i=1}^{r}
    u_{ij}^{2}\right)\omega_{j}^{2}, 
  $$
  where the inequality is obtained by expanding squares and
  dropping the mixed terms, which are nonnegative.  If
  $\sum_{i=1}^{r} u_{ij}^{2}=0$ for some index $j$ then $u_{ij}=0$ for
  all $i$, which contradicts the fact that $\omega_{j}=0$ for all
  $\omega\in\cV_{\Omega_+}(g)$.  Thus,
  \[
  c \;=\; \min \left\{ \sum_{i=1}^{r} u_{ij}^{2}: 1\le j\le s\right\} \;>\; 0,
  \]
  and $g(\omega)\ge c f(\omega)$ for
  all $\omega\in\Omega_+$.

  To prove that $g$ can be bounded above by a multiple of $f$, note
  that all $u_{i}^{T}\omega$ are nonnegative on $\Omega_+$ and thus
  $$
  \sum_{i=1}^{r} (u_{i}^{T}\omega)^{2} \;\leq\; \left(\sum_{i=1}^{r}
    u_{i}^{T}\omega\right)^{2}\;=\;\Bigg(\sum_{i=1}^{r}
  \sum_{j=1}^{s} u_{ij}\omega_{j}\Bigg)^{2}.  
  $$
  Let $u_{+j}=\sum_{i}u_{ij}$ and $u_{++}=\sum_{j}u_{+j}$.  Then,
  since all $u_{i}$ have nonnegative entries, Jensen's inequality
  implies that
  $$
  \Bigg(\sum_{i=1}^{r}
  \sum_{j=1}^{s} u_{ij}\omega_{j}\Bigg)^{2} \;=\;
  u_{++}^{2}\Bigg(\sum_{j=1}^{s}\frac{u_{+j}}{u_{++}}\omega_{j}\Bigg)^{2}\;\leq\;
  u_{++}\,\max\{u_{+j}: 1\le j\le s\}\sum_{i=1}^{s}\omega_{i}^{2}.
  $$
  
  Since $g$ is asymptotically equivalent to
  $f(\omega)=\omega_1^2+\dots+\omega_s^2$,
  we have $\RLCT_{\Omega_+}(g)=\RLCT_{\Omega_+}(f)$.  Let
  $\Omega_+'=[-\epsilon,\epsilon]^s\times [0,\epsilon]^{d-s}$.  Then
  $$
  \int_{\Omega_+} (\omega_{1}^{2}+\ldots+\omega_{s}^{2})^{-z/2}\,
    d\omega \;=\; 2^{-s}\int_{\Omega_+'}
  (\omega_{1}^{2}+\ldots+\omega_{s}^{2})^{-z/2}\,d\omega.
  $$
  Hence, $\RLCT_{\Omega_+}(f)=\RLCT_{\Omega_+'}(f)$.  From Case 1, we
  know that $\RLCT_{\Omega_+'}(f)\;=\;(s,1)$.  Putting it all
  together, we have shown that $\RLCT_{\Omega_+}(g)=(s,1)$.
\end{proof}

\subsection{Step 3}

The remaining step amounts to proving the following result, which
concerns the case where the considered function $H$ is equal to its
zero part.

\begin{prop}
  \label{prop:sing}
  Let $\Omega$ be a compact product of intervals containing the
  origin, and let $\Gamma_+(H)$ be the Newton polyhedron of the
  function $H(\omega)=\sum_i \omega^{2u_i}$.  Then
  $$
  \RLCT_\Omega(H)=(\lambda,\mathfrak{m}),
  $$
  where $1/\lambda$ is the $\mathbf 1$-distance of $\Gamma_+(H)$ and
  $\mathfrak{m}$ is its multiplicity.
\end{prop}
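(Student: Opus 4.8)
The plan is to compute $\RLCT_\Omega(H)$ by resolving the singularities of $H$ with a toric modification adapted to the Newton polyhedron $\Gamma_+(H)$, thereby reducing to a normal-crossings situation. As a preliminary I would localize and symmetrize: by the standard principle that $\RLCT_\Omega(H)$ is the minimum (in the order of Definition~\ref{def:rlct}) of the local thresholds over the compact zero set $\mathcal{V}_\Omega(H)$, and since $H=\sum_i\omega^{2u_i}$ is invariant under every sign change $\omega_j\mapsto-\omega_j$, it suffices to work on a small cube $[0,\varepsilon]^{d}$ about the origin. The origin, where all coordinate branches of the zero set meet, carries the minimal local threshold; the localization at any other point of the zero set involves only a coordinate subface of $\Gamma_+(H)$, and the same argument below, applied to that subface, gives a weakly larger threshold.

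For the main step, choose a nonsingular simplicial fan $\Sigma$ supported on $\R^d_{\ge 0}$ that refines the inner normal fan of $\Gamma_+(H)$; such a $\Sigma$ exists by toric resolution of singularities. Fix a maximal cone $\sigma\in\Sigma$ with primitive generators $a_1,\dots,a_d$, which form a $\Z$-basis of $\Z^d$ since $\sigma$ is nonsingular. On the chart of $\sigma$ the monomial map $\omega_j=\prod_l y_l^{(a_l)_j}$ is a change of coordinates with Jacobian the monomial $\prod_l y_l^{\langle a_l,\mathbf 1\rangle-1}$, and it pulls $H$ back to $\sum_i\prod_l y_l^{2\langle a_l,u_i\rangle}$, which factors as the monomial $\prod_l y_l^{2\kappa_l}$ — where $\kappa_l:=\min_i\langle a_l,u_i\rangle$ is the value at $a_l$ of the support function $\ell_{\Gamma_+(H)}$ — times the sum $\sum_i\prod_l y_l^{2(\langle a_l,u_i\rangle-\kappa_l)}$. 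Because $\sigma$ is contained in a single cone of the normal fan, a common index $i_0$ attains the minimum for every $l$, so that second factor equals $1+(\text{nonnegative monomials})$ on the positive orthant and is bounded between $1$ and a constant near the origin; it is therefore a unit for the purposes of the RLCT. A direct evaluation of the zeta integral for the resulting normal-crossings function (cf.\ Examples~\ref{ex:xy} and~\ref{ex:reg-case}) shows that the contribution of this chart has its smallest pole at $z=\min_l\langle a_l,\mathbf 1\rangle/\kappa_l$, with order equal to the number of indices $l$ attaining that minimum.

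It remains to assemble the chart contributions and match them to the polyhedral data. Since $\zeta(z)$ is the sum of the chart integrals, $\lambda$ is the minimum over rays $a$ of $\Sigma$ of $\langle a,\mathbf 1\rangle/\ell_{\Gamma_+(H)}(a)$ and $\mathfrak m$ is the largest number of such minimizing rays fitting into one maximal cone of $\Sigma$. From the support-function description $\Gamma_+(H)=\{x:\langle a,x\rangle\ge\ell_{\Gamma_+(H)}(a)\text{ for all }a\ge0\}$, the $\mathbf 1$-distance is $t=\max_{a\ge0,\,a\ne0}\ell_{\Gamma_+(H)}(a)/\langle a,\mathbf 1\rangle$; since $\ell_{\Gamma_+(H)}$ is linear on each cone of the normal fan, this ratio of linear forms attains its maximum at an extreme ray of such a cone, which is a ray of $\Sigma$ as $\Sigma$ refines the normal fan, so $\lambda=1/t$. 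Moreover the rays $a\ge0$ with $\langle a,\mathbf 1\rangle/\ell_{\Gamma_+(H)}(a)=\lambda$ are exactly those whose exposed face of $\Gamma_+(H)$ contains $t\mathbf 1$; these sweep out the normal cone of the inclusion-minimal face $F^{\ast}$ of $\Gamma_+(H)$ through $t\mathbf 1$, a cone of dimension $\codim F^{\ast}$. Linear independence of the generators of a maximal cone of $\Sigma$ bounds the number of minimizing rays per chart by $\codim F^{\ast}$, and subdividing that normal cone inside $\Sigma$ produces a chart realizing exactly $\codim F^{\ast}$ of them; hence $\mathfrak m=\codim F^{\ast}$, which is the multiplicity of Definition~\ref{def:NP}.

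The steps I expect to require the most care are the verification that the second factor of the pulled-back function is a harmless unit together with the orthant/localization bookkeeping, and, on the combinatorial side, the identification of the ratio-minimizing rays with the normal cone of $F^{\ast}$ and the attendant dimension count. Once these are in place, the pair $(\lambda,\mathfrak m)$ is read off directly from the normal-crossings formula; the existence of the resolving fan, the Jacobian computation, and the per-chart zeta evaluation are routine.
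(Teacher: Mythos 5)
Your proof is correct, but it takes a substantially different route from the paper. The paper's proof of this proposition is essentially a citation: after reducing by the sign symmetry $\omega_j \mapsto -\omega_j$ to the case where the origin is an interior point of $\Omega$, and invoking a short localization argument (Remark~\ref{rem:deepest}) showing the deepest local threshold is at the origin, the paper simply appeals to Theorem~8.6 of \cite{arnold1985singularities} --- the classical Varchenko theorem expressing the oscillatory-integral index of a Newton-nondegenerate phase in terms of the $\mathbf{1}$-distance of its Newton polyhedron and the codimension of the face it pierces. What you do instead is reprove that theorem from scratch for the special phase $\sum_i\omega^{2u_i}$, via a toric resolution adapted to $\Gamma_+(H)$: a smooth simplicial refinement of the inner normal fan, the monomial pullback with Jacobian $\prod_l y_l^{\langle a_l,\mathbf 1\rangle-1}$, the factorization $H\circ\pi=\prod_l y_l^{2\kappa_l}\cdot(\text{unit})$, a per-chart evaluation of the zeta integral, and the identification of the minimizing rays with the normal cone of the face through $t\mathbf 1$. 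Your observation that the residual factor is automatically a unit on the positive orthant because all coefficients of $H$ are $+1$ is exactly the point where Newton nondegeneracy --- a hypothesis in Varchenko's general theorem --- comes for free here, and it is good that you isolated it. The trade-off is that your argument is self-contained but considerably heavier; the paper's is short but relies on an external reference.

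One small point worth tightening in your write-up is the localization claim at the start. You assert that at a point $x\ne 0$ of the zero set the local threshold is ``weakly larger'' because only a coordinate projection of $\Gamma_+(H)$ is involved. This correctly handles the $\lambda$-component (the projection $\Gamma_+^S(H)$ has $\mathbf 1$-distance $\le$ that of $\Gamma_+(H)$), but on its own it does not rule out the possibility that $\lambda_x=\lambda_0$ with a \emph{larger} multiplicity $\mathfrak m_x>\mathfrak m_0$, which in the ordering of Definition~\ref{def:rlct} would make the localization at $x$ \emph{smaller}. The gap is fillable in your own language: the normal cone of $t\mathbf 1$ in $\Gamma_+^S(H)$ is the slice of the normal cone of $t\mathbf 1$ in $\Gamma_+(H)$ by the coordinate subspace $\{a_j=0: j\in S\}$, hence has weakly smaller dimension, so $\mathfrak m_x\le\mathfrak m_0$ whenever $\lambda_x=\lambda_0$. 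The paper sidesteps this entirely with a one-line translation-and-domination argument ($H_x\le H$ near $x$, so $\RLCT_{\Omega(x)}(H_x)\le\RLCT_{\Omega(x)}(H)$ in the full ordering), which you may find cleaner to adopt.
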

\begin{proof}
  Note that $H$ is invariant under sign changes.  Hence,
  $\RLCT_{\Omega}(H)=\RLCT_{\Omega'}(H)=\RLCT_{\Omega\cup\Omega'}(H)$
  when $\Omega'$ is obtained from $\Omega$ by changing the signs of
  any subset of the coordinates $\omega_1,\dots,\omega_d$.  Forming
  the unions of $\Omega$ and its reflected versions shows that in
  order to prove Proposition~\ref{prop:sing}, we may assume that the
  origin is an interior point of $\Omega$.  The claim now follows from
  Theorem 8.6 in \cite{arnold1985singularities}, see also
  \cite[Section 4]{shaowei_rlct}, and by Remark~\ref{rem:deepest} below.
\end{proof}

\begin{rem}\label{rem:deepest}
  When the origin is in the interior of $\Omega$, the function
  $H(\omega)=\sum_i \omega^{2u_i}$ has
  $\RLCT_\Omega(H)=\RLCT_{\Omega(0)}(H)$ for any small neighborhood
  $\Omega(0)$ of the origin.  Indeed, as mentioned in the proof of
  Proposition~\ref{thm:LinearInterval}, $\RLCT_\Omega(H)$ is the
  minimum of local RLCTs of $H$ in small neighborhoods $\Omega(x)$ of
  points $x \in \Omega$.  If $x\not=0$, then some of the variables,
  say $\omega_1, \ldots, \omega_s$, are bounded away from zero on a
  sufficiently small neighborhood $\Omega(x)$.  Substituting these
  variables by $\omega_1 - x_1, \ldots, \omega_s - x_s$, respectively,
  in $H$, we get a new function $H_x$ for which
  $\RLCT_{\Omega(0)}(H)=\RLCT_{\Omega(x)}(H_x)$.  Now, $0 \le H_x \le H$
  near $x$.  Consequently,
  $\RLCT_{\Omega(x)}(H_x) \leq \RLCT_{\Omega(x)}(H)$.  We conclude
  that $\RLCT_{\Omega(0)}(H)\le \RLCT_{\Omega(x)}(H)$.
\end{rem}

\section{Proof of Theorem \ref{th:main2}}
\label{sec:app:tree-proofs}

Let $T=(U,E)$ be a tree with set of leaves $V$, and   let $q$ be a
distribution in the latent tree model $\mathbf{M}(T)$, which
has parameter space $\Omega=(0,\infty)^V\times [-1,1]^{E}$.  We are to
compute $\RLCT_\Omega(H_q)$ for the function $H_q$
from~(\ref{eq:tree:Hq}), where $\omega_v^*$ and $\rho_{vw}^*$ are the
variances and correlations of the distribution $q$. The basic idea of this proof follows \cite{pwz-2010-bic}.

First, observe that
Theorem~\ref{th:main1} is applicable to this problem.  Indeed, $H_q$
has the form from~(\ref{eq:sos}) and the $q$-fiber
$\mathcal{V}_\Omega(H_q)$ is compact.  Compactness holds because
$H_q(\omega)=0$ implies that $\omega_v=\omega_v^*$ for all $v\in V$,
and all edge correlations $\omega_e$, $e\in E$, are in the compact
interval $[-1,1]$.

Now, let $F^*:= F^{*}(q)=(U^*,E^*)$ be the $q$-forest, and let
$H_{q}^1(\omega_1,\ldots,\omega_s)$ be the nonzero part of $H_q$ given
in (\ref{eq:f1tree}).  The set $\mathcal{V}_{\Omega_1}(H_{q}^1)$ is
equal to the $q$-fiber under the model $\mathbf{M}(F^*)$; recall that
$\Omega_1$ is the projection of $\Omega$ onto the first $s$
coordinates.  We deduce that
$\codim \mathcal{V}_{\Omega_1}(H_{q}^1)=\dim \mathbf{M}(F^*)$, which
gives the value of $\lambda_1$ in Theorem \ref{th:main1}.  It
remains to show that the zero part
$H_{q}^0(\omega_{s+1},\ldots,\omega_d)$ defined in (\ref{eq:f0tree}) satisfies
\begin{equation}
  \label{eq:tree-zero-part}
(\lambda_0,\mathfrak{m})\;=\;{\rm
  RLCT}_{\Omega_0}(H_q^0)\;=\;\left(\frac{1}{2}\sum_{e\in E_0}
  w(e),1+l_{2}'\right), 
\end{equation}
where $\Omega_0$ is the projection of
$\Omega$ onto the last $d-s$ coordinates, $E_0=E\setminus E^*$ is
the set of edges that appear in $T$ but not in $F^*$, and $l_{2}'$ is the number of degree two nodes of $T$ that are not in  $U^{*}$.  

The zero part of $H_q$ is the sum of squares of the monomials
\begin{equation}\label{eq:gensI0}
  \prod_{e\in \overline{vw}\cap E_0} \omega_{e}, \qquad v,w\in V, \; v\not\sim w;
\end{equation}
recall that $v\not\sim w$ if there is no path between $v$ and $w$ in
the $q$-forest $F^*=(U^*,E^*)$.  The edge set $E_0$ can be partitioned
into sets $E_{01},\dots,E_{0t}$ such that each $E_{0i}$ defines a tree
$S_i=(U_i,E_{0i})$ that has the set of nodes $L_i:=U_i\cap U^*$ as
leaves.  In other words, the set of leaves $L_i$ of tree $S_i$
comprises precisely those nodes that belong to both $S_i$ and the
$q$-forest $F^*$.    For
example, in Figure \ref{fig:quartet}, we have $t=1$ and $S_1$ is the
tree with one inner node $b$ and three leaves $a,3,4$.  As a further
example, consider the tree and $q$-forest in
Figure~\ref{fig:star}(a) and (b), for which we form two subtrees $S_1$ and
$S_2$ with edge sets $E_{01}=\{\{a,3\}\}$ and $E_{02}=\{\{a,4\}\}$, as
shown in Figure~\ref{fig:star}(c) and (d).  In this second example, the
sets of leaves are $L_1=\{a,3\}$ and $L_2=\{a,4\}$, illustrating that
the sets $L_1,\dots,L_t$ need not be disjoint.

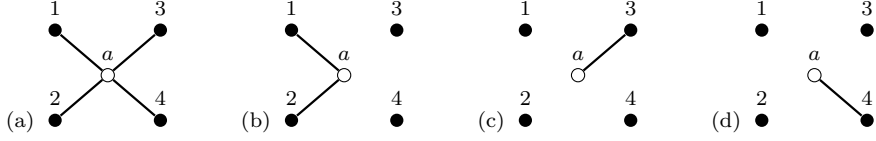
\begin{figure}[t!]
\centering
\tikzstyle{vertex}=[circle,fill=black,minimum size=5pt,inner sep=0pt]
\tikzstyle{hidden}=[circle,draw,minimum size=5pt,inner sep=0pt]
  (a)
  \begin{tikzpicture}
  \node[vertex] (1) at (-.7,.6)  [label=above:$1$] {};
    \node[vertex] (2) at (-.7,-.6)  [label=above:$2$] {};
    \node[vertex] (3) at (.7,.6) [label=above:$3$]{};
    \node[vertex] (4) at (.7,-.6) [label=above:$4$]{};
    \node[hidden] (a) at (0,0) [label=above:$a$] {};
    \draw[line width=.3mm] (a) to (1);
    \draw[line width=.3mm] (a) to (2);
    \draw[line width=.3mm] (a) to (3);
    \draw[line width=.3mm] (a) to (4);
  \end{tikzpicture}\qquad\quad
  (b)
  \begin{tikzpicture}
  \node[vertex] (1) at (-.7,.6)  [label=above:$1$] {};
    \node[vertex] (2) at (-.7,-.6)  [label=above:$2$] {};
    \node[vertex] (3) at (.7,.6) [label=above:$3$]{};
    \node[vertex] (4) at (.7,-.6) [label=above:$4$]{};
    \node[hidden] (a) at (0,0) [label=above:$a$] {};
    \draw[line width=.3mm] (a) to (1);
    \draw[line width=.3mm] (a) to (2);
  \end{tikzpicture}\qquad\quad
  (c)
  \begin{tikzpicture}
  \node[vertex] (1) at (-.7,.6)  [label=above:$1$] {};
    \node[vertex] (2) at (-.7,-.6)  [label=above:$2$] {};
    \node[vertex] (3) at (.7,.6) [label=above:$3$]{};
    \node[vertex] (4) at (.7,-.6) [label=above:$4$]{};
    \node[hidden] (a) at (0,0) [label=above:$a$] {};
    \draw[line width=.3mm] (a) to (3);
  \end{tikzpicture}\qquad\quad
  (d)
  \begin{tikzpicture}
  \node[vertex] (1) at (-.7,.6)  [label=above:$1$] {};
    \node[vertex] (2) at (-.7,-.6)  [label=above:$2$] {};
    \node[vertex] (3) at (.7,.6) [label=above:$3$]{};
    \node[vertex] (4) at (.7,-.6) [label=above:$4$]{};
    \node[hidden] (a) at (0,0) [label=above:$a$] {};
    \draw[line width=.3mm] (a) to (4);
  \end{tikzpicture}

  \caption{(a) Star tree; (b)  $q$-forest when
    $\rho_{12}^*$ is the only nonzero correlation; (c),(d)
    subtrees formed from the removed edges.}\label{fig:star}
\end{figure}

Consider now the function $\tilde H_{q}^0$ given by the sum of squares of
the monomials
\begin{equation}\label{eq:mingensI0}
  \prod_{e\in \overline{uu'}} \omega_{e}, \qquad i\in [t],\; u,u'\in L_i,\, u\not=
  u',
\end{equation}
where $[t]=\{1,\dots,t\}$ and $\overline{uu'}$ refers to the unique
path between $u$ and $u'$ in tree $S_i$.  Each monomial
listed in (\ref{eq:mingensI0}) is also listed in (\ref{eq:gensI0}).
To see this, observe that two distinct nodes $u,u'\in L_i$ belong to
distinct connected components in $F^*$.  If we take $v\in V$ from one
of the two connected components and $w\in V$ from the other, then the
monomial they define in (\ref{eq:gensI0}) is equal to the monomial
that $u$ and $u'$ define in (\ref{eq:mingensI0}).  Moreover, by the
definition of the trees $S_i$, every monomial listed in
(\ref{eq:gensI0}) is the product of monomials from
(\ref{eq:mingensI0}).  It follows that the Newton polyhedra
$\Gamma_+(H_{q}^0)$ and $\Gamma_+(\tilde H_{q}^0)$ are equal and hence
${\rm RLCT}_{\Omega_0}(H_{q}^0)={\rm RLCT}_{\Omega_0}(\tilde
H_{q}^0)$ (c.f.~Proposition~\ref{prop:sing}).

Let $f_i$ be the sum of squares of the monomials in
(\ref{eq:mingensI0}) that are associated with pairs of distinct nodes
$u$ and $u'$ in the set of leaves $L_i$ of the tree $S_i$.  No two
trees $S_i$ and $S_j$ for $i\not=j$ share an edge.  Hence, the two
sums of squares $f_i$ and $f_j$ depend on different subvectors of
$\omega$.  Since $\tilde H_{q}^0= f_{1}+\cdots+ f_{t}$, it follows
from~(\ref{eq:Zn-asy}) that
\begin{equation}
  \label{eq:rlct-sum-of-fi}
\RLCT_{\Omega_0}(H_{q}^0)=\sum_{i=1}^t \RLCT_{\Omega_0}( f_{i})-(0,t-1);
\end{equation}
see also Remark 7.2(3) in \cite{watanabe_book}.  If $T$ has no nodes
of degree two, i.e., $l_2=l_{2}'=0$, then the same is true for the each tree
$S_i$.  Lemma~\ref{lem:single-tree-S} below then implies that
\begin{equation}
  \label{eq:rlct-single-tree}
  \RLCT_{\Omega_0}( f_{i}) =\left( \frac{|L_i|}{2}, 1\right).
\end{equation}
Since the nodes in $L_i$ lie in $F^*$, we have
\begin{equation}
  \label{eq:adding-leaves}
\sum_{i=1}^t |L_i|=\sum_{e\in E_0} w(e),
\end{equation}
where $w(e)\in \{0,1,2\}$ is the number of nodes of $e$ that lie in
the $q$-forest $F^*$.
Combining~(\ref{eq:rlct-sum-of-fi})-(\ref{eq:adding-leaves}), we
obtain~(\ref{eq:tree-zero-part}) and have thus proven Theorem
\ref{th:main2} in the case of $l_2=0$ nodes of degree two.  The case
with nodes of degree two follows the same way applying
Lemma~\ref{lem:single-tree-S-deg2} instead of
Lemma~\ref{lem:single-tree-S}.

\begin{lem}
  \label{lem:single-tree-S}
  Let $S=(V,E)$ be a tree with set of leaves $L$ and all inner nodes
  of degree at least three. Let $f$ be the sum of squares of the
  monomials
  \begin{equation}\label{eq:simpleSS}
    \prod_{e\in \overline{vw}} \omega_{e}, \qquad v,w\in L,\; v\not= w.
  \end{equation}
  If $\Omega$ is a neighborhood of the origin, then 
  \[
  \RLCT_\Omega(f)=\left(\frac{|L|}{2},1\right). 
  \]
\end{lem}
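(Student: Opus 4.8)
The plan is to apply Proposition~\ref{prop:sing} to the zero-part function $f(\omega)=\sum_{\{v,w\}}\omega^{2u_{vw}}$, where $u_{vw}=\mathbf 1_{\overline{vw}}\in\{0,1\}^{E}$ is the indicator of the edge set of the path $\overline{vw}$. By locality of the real log-canonical threshold we may first shrink $\Omega$ to a small closed box $[-\varepsilon,\varepsilon]^{E}$ around the origin, on which $\mathcal V_\Omega(f)$ is compact, so that Proposition~\ref{prop:sing} applies. It reduces the lemma to two combinatorial facts about the Newton polyhedron $\Gamma_+(f)=\operatorname{conv}\{u_{vw}\}+\R^{E}_{\ge 0}$: that its $\mathbf 1$-distance is $2/|L|$ (giving $\lambda=|L|/2$), and that the inclusion-minimal face of $\Gamma_+(f)$ meeting the ray $\R_{\ge0}\mathbf 1$ has codimension $1$ (giving $\mathfrak m=1$). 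The degenerate case $|L|=2$, in which $S$ is a single edge and $f=\omega_e^{2}$, is immediate, so I assume $n:=|L|\ge 3$.

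For the $\mathbf 1$-distance, I would get the lower bound $\ge 2/n$ from the observation that summing the coordinates of any generator $u_{vw}$ over the $n$ pendant edges gives exactly $2$ (its two endpoints); hence $\sum_{\ell\in L}x_{e_\ell}\ge 2$ on all of $\Gamma_+(f)$, so $t\mathbf 1\in\Gamma_+(f)$ forces $tn\ge 2$. For the matching upper bound I would fix a planar embedding of $S$, which arranges the leaves in a cyclic order $\ell_1,\dots,\ell_n$; the key elementary fact is that, for every edge $e$, the leaves on one side of $e$ form a contiguous cyclic arc, so $e$ lies on exactly two of the $n$ cyclically consecutive paths $\overline{\ell_i\ell_{i+1}}$. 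Consequently $\tfrac1n\sum_{i=1}^{n}u_{\ell_i\ell_{i+1}}=\tfrac2n\mathbf 1$ and lies in $\operatorname{conv}\{u_{vw}\}\subseteq\Gamma_+(f)$, so the $\mathbf 1$-distance equals $2/n$ and $\lambda=|L|/2$.

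For the multiplicity I would compute the normal cone $N$ of $\Gamma_+(f)$ at $p:=\tfrac2n\mathbf 1$; since $\Gamma_+(f)$ is full dimensional, $\dim N$ equals the codimension of the minimal face containing $p$. A direct calculation gives $N=\{c\in\R^{E}_{\ge0}:\ c(\overline{vw})\ge\tfrac2n c(E)\ \text{for all }v\neq w\}$, which contains the ray spanned by the pendant-edge indicator $\mathbf 1_{E_{\mathrm{pend}}}$ (this is the supporting hyperplane $\sum_\ell x_{e_\ell}=2$). I claim $N$ is exactly this ray. Given $c\in N$, the averaging identity from the previous paragraph applied in any planar embedding forces every cyclically consecutive path to have $c$-length $\tfrac2n c(E)$; and since any pair of leaves can be made cyclically consecutive in a suitable embedding — embed every subtree hanging off the path $\overline{vw}$ on one side of it, so that the cyclic order starts at $v$, sweeps through everything, and ends at $w$ — one obtains $c(\overline{vw})=\tfrac2n c(E)=:D$ for \emph{every} leaf pair. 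A three-leaf/median argument then pins down the pendant weights: for a leaf $v_0$ with inner neighbour $a_0$ (inner since $n\ge3$) and leaves $w_1,w_2$ in two different subtrees at $a_0$, the median of $\{v_0,w_1,w_2\}$ is $a_0$, and equality of the three pairwise $c$-distances gives $c(\overline{v_0a_0})=D/2$, i.e. every pendant edge has $c$-weight $D/2$. Finally $c(E)=\tfrac n2D+\sum_{e\text{ internal}}c_e$ together with $c(E)=\tfrac n2D$ and $c\ge0$ forces all internal edges to have $c$-weight $0$, so $c=\tfrac D2\mathbf 1_{E_{\mathrm{pend}}}$. Hence $\dim N=1$, $\mathfrak m=1$, and $\RLCT_\Omega(f)=(|L|/2,1)$.

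The main obstacle I anticipate is the multiplicity statement: it requires assembling three separate (but elementary) facts about planar tree embeddings — that the leaves on one side of an edge form a cyclic arc, that any leaf pair is consecutive in some embedding, and the three-leaf median computation — into the single conclusion that the normal cone at $p$ degenerates to a ray. The $\mathbf 1$-distance itself is comparatively routine once the consecutive-pair construction is available. I would also take some care over the reduction of $\Omega$ to a compact box and over treating $|L|=2$ separately, since the inner-node arguments are vacuous there.
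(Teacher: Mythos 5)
Your proof is correct, and it departs from the paper's in two interesting ways. For the upper bound on the $\mathbf 1$-distance, the paper constructs a system of $|L|$ leaf-to-leaf paths covering each edge exactly twice by an induction that passes through a trivalent refinement of $S$ and then contracts back; you get the same covering system directly from a planar embedding, using the fact that each edge split of the tree corresponds to a contiguous arc of the cyclic leaf order, so each edge sits on exactly two of the $|L|$ cyclically consecutive paths. This is shorter and avoids the refinement/contraction step. For the multiplicity, the paper shows $\frac{2}{|L|}\mathbf 1$ lies in the relative interior of the Newton polytope $\Gamma(f)$ by arguing that the averaging system of paths can be arranged to give any prescribed path positive weight, and then appeals to \cite[Lemma~1]{mihaescu2008combinatorics} for the fact that the path incidence vectors span all of $\R^{E}$, so that the affine hull of $\Gamma(f)$ is exactly the facet-defining hyperplane $\sum_{e\in E_L}x_e=2$. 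You instead compute the (inner) normal cone of $\Gamma_+(f)$ at $\frac{2}{|L|}\mathbf 1$ and show it is the ray spanned by the pendant-edge indicator: equality of all pairwise path lengths (forced by the tightness of the cyclic averaging identity in every embedding, plus the observation that any leaf pair is cyclically consecutive in some embedding), then the three-leaf median computation pins every pendant weight to $D/2$, and nonnegativity kills the internal weights. This is a self-contained replacement for the spanning lemma, though it pushes the combinatorial load onto the median argument and the "any pair consecutive" embedding fact. Both routes are sound; yours has the advantage of not relying on the external reference for the dimension count, at the cost of a slightly longer case analysis in the normal-cone step. Your handling of the compactness issue by shrinking $\Omega$ to a small box, and your separate treatment of $|L|=2$, both match what the paper does implicitly and explicitly.
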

\begin{proof}
  If $|L|=2$, then $S$ has a single edge and no inner nodes.  In this
  case, $f$ is the square of a single variable and it is clear
  $\RLCT_\Omega(f)=(1,1)=(|L|/2,1)$.  In the remainder of this proof,
  we assume that $|L|\ge 3$.

  By Proposition \ref{prop:sing}, it suffices to compute the
  $\mathbf 1$-distance and its multiplicity for the Newton polyhedron
  $\Gamma_+(f)\subset \R^{E}$.  By Definition~\ref{def:NP}, the
  polyhedron $\Gamma_+(f)$ is determined by the exponent vectors of
  the monomials in (\ref{eq:simpleSS}).  Each exponent vector is the
  incidence vector for a path between a pair of leaves.  In other
  words, each pair of two distinct leaves $v$ and $w$ defines a vector
  $u\in\R^E$ with $u_e=1$ if $e\in \overline{vw}$ and $u_e=0$
  otherwise.  Write $\mathcal{U}$ for the set of all these
  $\binom{|L|}{2}$ vectors.

  Let $E_L$ be the set of terminal edges of $S$, i.e., the $|L|$ edges
  that are incident to a leaf.  We claim that every point $x$ in the
  Newton polyhedron $\Gamma_+(f)$ satisfies
  \begin{equation}
    \label{eq:facet}
    \sum_{e\in E_L}x_e\geq 2
  \end{equation}
  and that the inequality defines a facet of
  $\Gamma_+(f)$.  Indeed, if $x\in\mathcal{U}$ then
  $\sum_{e\in E_L}x_e= 2$ because every path between two leaves in $L$
  includes precisely two edges in $E_L$.  It is then clear that
  (\ref{eq:facet}) holds for all points $x\in \Gamma_+(f)$.  Moreover,
  by \cite[Lemma 1]{mihaescu2008combinatorics}, the span of
  $\mathcal{U}$ is all of $\R^E$.  Hence, the affine hull of
  $\mathcal{U}$ is the hyperplane given by $\sum_{e\in E_L}x_e= 2$,
  and we conclude that (\ref{eq:facet}) defines a facet of
  $\Gamma_+(f)$.

  Since $|E_L|=|L|$, inequality~(\ref{eq:facet}) implies that the
  $\mathbf{1}$-distance of  $\Gamma_+(f)$ is at
  least $2/|L|$.  We claim that it is equal to $2/|L|$.  In fact, we
  will show that the vector $\frac{2}{|L|}\mathbf{1}$ not only lies in
  the Newton polyhedron but also in the Newton polytope $\Gamma(f)$,
  that is, the vector is a convex combination of the incidence vectors
  in $\mathcal{U}$.  To prove this, we construct a set of paths
  $\mathcal{P}$ in the tree $S$ such that (i) each element of
  $\mathcal{P}$ is a path between leaves of $S$, (ii) $\mathcal{P}$
  contains precisely $|L|$ paths, and (iii) every edge of $S$ is
  covered by exactly two paths of $\mathcal{P}$.  The construction
  implies our claim because the average of the incidence vectors of
  the paths in $\mathcal{P}$ is equal to $\frac{2}{|L|}\mathbf{1}$.

  Let $S^*$ be any trivalent tree that has the same set of leaves $L$ as
  $S$ and that can be obtained from $S^*$ by edge
  contraction. Here, a tree is trivalent if each inner node has degree
  three.  We will use induction on the number of leaves to
  show that a set of paths $\mathcal{P}$ with the desired properties
  (i)-(iii) exists.  Figure \ref{fig:tree network} shows an example.

  If $S^*$ has three $|L|=3$ leaves, then there is a single inner node
  and each path between two leaves has two edges.  We may simply take
  $\mathcal{P}$ to be the set of all the three paths that exist
  between pairs of leaves.  This provides the induction base.

  In the induction step, pick two leaves $v$ and $w$ of the tree $S^*$
  that are joined by a path with two edges $\{v,a\}$ and $\{a,w\}$.
  The node $a$ is an inner node of $S^*$.  Remove the two edges and
  the two leaves to form a subtree $S^{**}$, in which $a$ becomes a
  leaf.  Then $S^{**}$ has $|L|-1$ leaves and, by the induction
  hypothesis, there is a set of paths $\mathcal{P}^{**}$ that satisfies
  properties (i)-(iii) with respect to $S^{**}$.  In particular,
  $|\mathcal{P}^{**}|=|L|-1$.  Now, precisely two paths in
  $\mathcal{P}^{**}$ have the node $a$ as an endpoint.  Extend one of
  them by adding the edge $\{a,v\}$ and extend the other by adding
  $\{a,w\}$.  This gives two paths between leaves of $S^*$.  All other
  paths in $\mathcal{P}^{**}$ are already paths between leaves of
  $S^*$.  Add one further path, namely, $(v,a,w)$, and denote the
  resulting collection of $|L|$ paths by $\mathcal{P}^*$.  Clearly,
  the set $\mathcal{P}^*$ satisfies properties (i)-(iii) with respect
  to $S^*$.  Contracting each path in $\mathcal{P}^*$ by applying the
  edge contractions that transform $S^*$ into $S$, we obtain a system
  of paths $\mathcal{P}$ that satisfies properties (i)-(iii) with respect
  to $S$.

  \begin{figure}
    \includegraphics[scale=1]{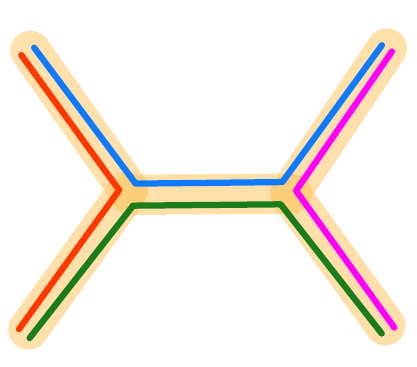}
    \caption{An example of a system of paths such that each edge of a
      trivalent tree is covered by exactly two paths.}\label{fig:tree
      network} 
  \end{figure}

  Finally, note that in the construction we just gave we can ensure
  that $\mathcal{P}$ includes a given path between two leaves in $L$.
  Hence, the vector $\frac{2}{|L|}\mathbf 1$ can be written as a
  convex combination of vertices of $\Gamma(f)$ such that a given
  vertex $x$ get positive weight.  It follows that
  $\frac{2}{|L|}\mathbf 1$ lies in the interior of the Newton polytope
  and thus the multiplicity $\mathfrak{m}$ is  $1$.
\end{proof}

The next result generalizes the previous lemma to the case of trees
with nodes of degree 2.  We remark Example~\ref{ex:xy} is a special
case of this generalization. It matches the case where the
tree $S$ has two leaves and one inner node, which is then necessarily
of degree two.

\begin{lem}
  \label{lem:single-tree-S-deg2}
  Let $S=(V,E)$ be a tree with set of leaves $L$, and let $f$ be the
  sum of squares of the monomials
  \begin{equation}\label{eq:simpleSS-deg2}
    \prod_{e\in \overline{vw}} \omega_{e}, \qquad v,w\in L,\; v\not= w.
  \end{equation}
  If $\Omega$ is a neighborhood of the origin, then 
  \[
  \RLCT_\Omega(f)=\left(\frac{|L|}{2},1+l_2\right)
  \]
  where $l_2$ is the number of (inner) nodes of $S$ that have degree
  two. 
\end{lem}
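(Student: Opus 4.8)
The plan is to compute $\RLCT_\Omega(f)$ from the Newton polyhedron, exactly as in the proof of Lemma~\ref{lem:single-tree-S}. By the sign-invariance of $f$ and Remark~\ref{rem:deepest} we may replace $\Omega$ by a compact product of intervals with the origin in its interior, so that Proposition~\ref{prop:sing} applies and $\RLCT_\Omega(f)=(\lambda,\mathfrak m)$, where $1/\lambda$ is the $\mathbf 1$-distance of $\Gamma_+(f)\subset\R^E$ and $\mathfrak m$ its multiplicity. To find $\lambda$ I would first reduce to the trivalent case: the edge set $E$ splits into maximal ``chains'' of edges joined through the degree-two nodes, and contracting all but one edge of each chain yields a tree $\tilde S$ with the same leaf set $L$ and all inner nodes of degree at least three (or a single edge when $|L|=2$). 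Any leaf-to-leaf path in $S$ uses a chain either entirely or not at all, so the exponent vectors of the monomials of $f$ are the images, under the linear ``diagonal'' embedding $\iota:\R^{E(\tilde S)}\to\R^E$ that repeats a chain's coordinate on each of its edges, of the exponent vectors of the corresponding function $\tilde f$ on $\tilde S$. Hence $\Gamma(f)=\iota(\Gamma(\tilde f))$, and since $\iota\mathbf 1=\mathbf 1$ and every edge of a chain must reach the same value, one checks that $t\mathbf 1\in\Gamma_+(f)$ iff $t\mathbf 1\in\Gamma_+(\tilde f)$. Thus the $\mathbf 1$-distance of $\Gamma_+(f)$ equals that of $\Gamma_+(\tilde f)$, which is $2/|L|$ by Lemma~\ref{lem:single-tree-S} (the single-edge case being trivial), and $\lambda=|L|/2$.

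The bulk of the argument is the multiplicity, i.e.\ showing that $\tfrac{2}{|L|}\mathbf 1$ lies on a face of $\Gamma_+(f)$ of codimension $1+l_2$. I would prove this by induction on $l_2$, peeling off one degree-two node $a$ at a time. If $e_1,e_2$ are the edges at $a$, then $\omega_{e_1}$ and $\omega_{e_2}$ appear in exactly the same monomials of $f$, each to the first power, so $f=f'\circ\psi$ where $f'$ is the analogous function on the tree obtained by contracting $e_2$ and $\psi$ replaces $(\omega_{e_1},\omega_{e_2})$ by the single new variable $\omega_{e_1}\omega_{e_2}$. The identity $\int_0^\epsilon\!\int_0^\epsilon g(xy)\,dx\,dy=\int_0^{\epsilon^2}g(u)\log(\epsilon^2/u)\,du$ then shows that the zeta function of $f$ is that of $f'$ with one extra factor $-\log u$ inserted at the new variable. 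As in Example~\ref{ex:xy}, this should raise the multiplicity of the smallest pole by exactly one while leaving $\lambda$ fixed; iterating $l_2$ times reduces to $\tilde S$, where Lemma~\ref{lem:single-tree-S} gives multiplicity one, yielding $\mathfrak m=1+l_2$. The base case $|L|=2$ is immediate: there $S$ is a path, $f$ is a single monomial $(\omega_{e_1}\cdots\omega_{e_{l_2+1}})^2$, and $\Gamma_+(f)$ meets the ray through $\mathbf 1$ in its unique vertex, so $\mathfrak m=l_2+1=1+l_2$ directly (cf.\ Example~\ref{ex:xy-newton}).

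The step I expect to be the main obstacle is the clause that each inserted $-\log u$ genuinely increments the multiplicity. Equivalently, at the level of the Newton polyhedron, collapsing a chain must not push $\tfrac{2}{|L|}\mathbf 1$ off the ``deepest'' face in the chain's coordinate directions; i.e.\ those coordinates should behave like the two variables in Example~\ref{ex:xy-newton} rather than being spectators. Pinning this down requires knowing precisely which faces of $\Gamma_+(\tilde f)$ contain $\tfrac{2}{|L|}\mathbf 1$ in their relative interior — information that the proof of Lemma~\ref{lem:single-tree-S} supplies through the path-covering construction showing $\tfrac{2}{|L|}\mathbf 1$ lies in the interior of the Newton polytope $\Gamma(\tilde f)$ — together with an analysis of how $\iota$ interacts with the facet $\sum_{e\in E_L}x_e=2$ and with the intra-chain equalities $x_e=x_{e'}$. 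Carrying out this bookkeeping carefully, and verifying that the supporting functionals of $\Gamma_+(f)$ at $\tfrac{2}{|L|}\mathbf 1$ span a space of the claimed dimension $1+l_2$, is where the real care is needed.
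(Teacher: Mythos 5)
Your reduction of the $\mathbf 1$-distance to the trivalent case via the chain-collapsing embedding $\iota$ is correct and is essentially the observation the paper makes (``proceeding exactly as in the proof of Lemma~\ref{lem:single-tree-S}''). For the multiplicity, however, your route is genuinely different from the paper's, and it stalls at exactly the step you flag. The paper does not induct on $l_2$: it argues directly that the equalities $x_{e_1}=x_{e_2}$ forced at each degree-two node drop the codimension of the affine hull of $\Gamma(f)$ to $1+l_2$, that the path-covering construction still places $\frac{2}{|L|}\mathbf 1$ in the relative interior of $\Gamma(f)$, and then reads off the multiplicity as that codimension. Your inductive route needs the further claim that each inserted $-\log u$ raises the order of the smallest pole by one, and the substitution identity alone does not give this: a $-\log u$ factor raises the pole order only if the contracted variable $u$ actually participates in the resolution chart realizing that pole; if $u$ is a spectator there, $-\log u$ integrates to a finite constant and the pole order is unchanged. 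As written, your argument establishes $\mathfrak m\ge 1$ and $\mathfrak m\le 1+l_2$, but not equality.

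The polyhedral content of the missing step is that one must exhibit $1+l_2$ linearly independent \emph{nonnegative} linear functionals that are constant on $\Gamma(f)$ (these cut out a codimension-$(1+l_2)$ face of $\Gamma_+(f)$ through $\frac{2}{|L|}\mathbf 1$). The natural spanning set of the space of constant functionals is the indicator $\mathbf 1_{E_L}$ of the terminal edges together with $e_{e_1(a)}-e_{e_2(a)}$ for each degree-two node $a$, and the latter have a negative entry. When $a$ sits on a terminal chain the sign can be repaired by adding a multiple of $\mathbf 1_{E_L}$ --- this is what happens in your $|L|=2$ base case and in Examples~\ref{ex:xy} and~\ref{ex:xy-newton} --- but when $a$ sits on a chain between two branching vertices of $S$, no such repair is available. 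I would urge you to work the quartet tree with its middle edge subdivided ($|L|=4$, $l_2=1$): there the nonnegative functionals constant on $\Gamma(f)$ span only a line, so the minimal face of $\Gamma_+(f)$ through $\frac12\mathbf 1$ has codimension $1$, not $2$. That computation is the test of whether your induction step (and, for that matter, the paper's one-sentence conclusion ``since the codimension of the Newton polytope is now $1+l_2$'') can be closed in full generality, or whether the statement implicitly requires the degree-two nodes to lie on terminal chains.
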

\begin{proof}
  Suppose $a$ is an inner node of degree two, and that $a$ is incident
  to the two edges $e=\{a,b\}$ and $f=\{a,c\}$.  Then any path connecting
  to leaves in $L$ either uses both $e$ and $f$ or neither $e$ nor
  $f$.  Hence, if $x$ is the incidence vector of a path between two
  leaves in $L$, then $x_e=x_f$.  It follows that the affine hull of
  Newton polytope generated by the path incidence vectors is no longer
  a hyperplane but an affine space of dimension $|E|-1-l_2$.

  Proceeding exactly as in the proof of Lemma~\ref{lem:single-tree-S},
  we see that it still holds that the $\mathbf{1}$-distance of the
  Newton polyhedron $\Gamma_+(f)$ is $2/|L|$.  Similarly, the ray
  spanned by $\mathbf{1}$ still meets $\Gamma_+(f)$ in the relative
  interior of the Newton polytope $\Gamma(f)$.  However, since the
  codimension of the Newton polytope is now $1+l_2$, we have
  $\RLCT_\Omega(f)=(|L|/2,1+l_2)$.
\end{proof}

\section*{Acknowledgments}

This work was partially supported by the European Union 7th Framework
Programme (PIOF-GA-2011-300975), the U.S.~National Science Foundation
(DMS-1305154), the U.S.~National Security Agency (H98230-14-1-0119),
and the University of Washington's Royalty Research Fund.  The United
States Government is authorized to reproduce and distribute reprints. We are thankful to the referee for constructive remarks. 

\bibliographystyle{amsalpha} 
\bibliography{algebraic_statistics}

\end{document}